\def\captionfigexampledes{
    \textbf{Abstract example of simulating a quantum protocol with discrete events}. When setting up the simulation, protocol actions are defined to occur when a specific event occurs, as in the setup
of a real system. Instead of performing a continuous time evolution, the simulation advances to the next event, and then automatically executes the actions that should occur when the event takes place.
Any action may again define future events to be triggered after a certain (stochastic) amount of time has elapsed. 
For concreteness a simplified quantum teleportation example is shown where a qubit, shown as an orange circle with arrow, is teleported between the quantum memories of Alice and Bob. Here, entanglement is produced using an abstract source sending two qubits, shown as blue circles with arrows, to Alice and Bob. Once the qubit has traversed the fibre and reaches Alice's lab, an event is triggered that invokes the simulation of Alice's Bell state measurement (BSM) apparatus. 
The simulation engine steps from event to events defined by the next action, which generally occur at irregular intervals.
This approach allows time-dependent physical non-idealities, such as quantum decoherence, to be accurately tracked.
}
\def\captionfigexamplerepeatdistil{
    \textbf{Illustrative example of a NetSquid use case.} Each sub-figure explains part of the modelling and simulation process. For greater clarity the figures are not based on real simulation data.
The scenario shown is a quantum repeater utilising entanglement distillation (see main text).
\textbf{a)} The setup of a quantum network using node and connection components.
\textbf{b)} A zoom in showing the subcomponents of the entangling connection component. The quantum channels are characterised using fibre delay and loss models. The quantum source samples from an entangled bipartite state sampler when externally triggered by the classical channel.
\textbf{c)} A zoom in of the quantum memory positions within a quantum processor illustrating their physical gate topology. The physical single-qubit instructions possible on each memory in this example are the Pauli ($X$, $Y$, $Z$), Hadamard ($H$), and $X$-rotation ($R_X$) gates, and measurement. The blue-dashed arrows show the positions and control direction (where applicable) for which the two-qubit instructions controlled-$X$ (CNOT) and swap are possible. Noise and error models for the memories and gates are also assigned.
\textbf{d)} Illustration of a single simulation run. Time progresses by discretely stepping from event to event, with new events generated as the simulation proceeds. Qubits are represented by circles, which are numbered according to the order they were generated. A star shows the moment of generation. The curved lines between qubits denote their entanglement with the colour indicating fidelity. The state of each qubit is updated as it is accessed during the simulation, for instance to apply time-dependent noise from waiting in memory.
\textbf{e)} A zoom in of the distillation protocol. The shared quantum states of the qubits are combined in an entangling step, which then shrinks as two of the qubits are measured. The output is randomly sampled, causing the simulation to choose one of two paths by announcing success or failure.
\textbf{f)} A plot illustrating the stochastic paths followed by multiple independent simulation runs over time, labeled by their final end-to-end fidelity $F_i$. The blue dashed line corresponds to the run shown in panel~(d). The runs are typically executed in parallel. Their results are statistically analysed to produce performance metrics such as the average outcome fidelity and run duration.
}
\def\captionfigswitchtopodiagram{
	    \textbf{A quantum switch in a star-shaped network topology as studied by Vardoyan et al.\cite{vardoyan2019stochastic}}.
	    The switch (central node) is connected to a set of users (leaf nodes) via an optical fibre link that distributes perfect Bell pairs at random times, following an exponential distribution with mean rate $\mu \propto e^{-\beta L}$, where $L$ denotes the distance of the link and $\beta$ the attenuation coefficient.
Associated with each link there is a buffer that can store $B$ qubits at each side of the link. 
	    As soon as $n$ Bell pairs with different leaves are available, the switch performs a measurement in the $n$-partite Greenberger-Horne-Zeilinger (GHZ) basis, which results in an $n$-partite GHZ state shared by the leaves.
The GHZ-basis measurement consists of: first, controlled-X gates with the same qubit as control; next, a Hadamard ($H$) gate on the control qubit; finally, measurement of all qubits individually.
	    The figure shows 4 leaf nodes, GHZ size $n=3$ and a buffer size $B=2$.
}
\def\captionfigswitch{
\textbf{Performance analysis of the quantum switch with 9 users using NetSquid.}
	\textbf{(a)} Capacity as a function of the buffer size (number of quantum memories that the switch has available per user) for either $2-$ or $4-$qubit Greenberger-Horne-Zeilinger (GHZ)-states.
    For each scenario, the generation rate $\mu$ of pairs varies per user. 
	For the blue scenario (2-partite entanglement, $\mu=[1.9, 1.9, 1.9, 1, 1, 1, 1, 1, 1]$ MHz), the capacity was determined analytically by Vardoyan et al. using Markov Chain methods~\cite[Figure 8]{vardoyan2019stochastic}.
    Here we extend this to 4-partite entanglement (orange scenario, same $\mu$s), for which Vardoyan et al. have found an upper bound (by assuming unbounded buffer and each $\mu=$ maximum of original rates $=1.9$ MHz) but no exact analytical expression.
The green scenario ($\mu=[15, 1.9, 1.9, 1, 1, 1, 1, 1, 1]$ MHz) does not satisfy the stability condition for the Markov chain for unbounded buffer size (each leaf's rate $<$ half of sum of all rates) so in that case steady-state capacity is not well-defined.
We note that regardless of buffer size, the switch has a single link to each user, which is the reason why the capacity does not scale linearly with buffer size.
\textbf{(b)} Average fidelity of the produced entanglement on the user nodes (no analytical results known) with unbounded buffer size.
    The fact that the green curve has lower fidelity than the blue one, while the former has higher rates, can be explained from the fact that the protocol prioritises entanglement which has the longest storage time (see Supplementary Note~\ref{app:switch}).
Each data point represents the average of 40 runs (each 0.1 ms in simulation).
Standard deviation is smaller than dot size.
}
\def\captionfigrepchainA{
\textbf{Performance of repeaters based on nitrogen-vacancy (NV) centres in diamond.}
    \textbf{(a)} Fidelity and entanglement distribution rate achieved with near-term and $10\times$ improved hardware (Supplementary Note \ref{app:nv-physical-modelling}) with the \swapasap~protocol.
    Dashed line represents classical fidelity threshold of $0.5$.
	We observe that for near-term hardware, the use of 3 repeaters yields worse performance in terms of fidelity than the no-repeater setup.
	For improved hardware we observe (i) that for approx. 0 - 750 kms, repeaters improve upon rate by orders of magnitude while still producing entanglement (fidelity$>0.5$), while (ii) for approx. 750 - 1500 kms, repeaters outperform in both rate and fidelity.
	\textbf{(b-c)} 
	Fidelity and rate achieved without and with repeaters (1, 3 or 7 repeaters) as function of a hardware improvement factor (Methods, section \nameref{sec:improvement-factor}) for two typical distances from both distance regime (i) and (ii), for two protocols \swapasap~and \withdistill.
    For the repeater case, only the best-performing number-of-repeaters \& protocol in terms of achieved fidelity is shown in (b), accompanied by its rate in (c).
	Each data point represents the average over (a) 200 and (b) 100 runs.
	Standard deviation is smaller than dot size.
}
\def\captionfigsensitivityanalysis{
\textbf{Sensitivity of fidelity in various hardware parameters for nitrogen-vacancy (NV) repeater chains.}
    The NV hardware model consists of \textapprox $15$ parameters and from those we focus on four parameters in this figure: (A) two-qubit gate fidelity, (B) detection probability, (C) induced storage qubit noise and (D) visibility.
    We start by improving all \textapprox $15$ parameters, including the four designated ones, using an improvement factor of 3 (Methods, section~\nameref{sec:improvement-factor}).
    Then, for each of the four parameters only, we individually decrease their improvement factor to 2, or increase it to 10 or 50.
    The figure shows the resulting fidelity (horizontal and vertical grid lines; dashed line indicates maximal fidelity which can be attained classically).
    Note that at an improvement factor of 3 (orange line), all $\textapprox 15$ parameters are improved by 3 times, resulting in a fidelity of 0.39.
    In addition, we vary the improvement factor for combinations of two of the four parameters (diagonal lines).
	The $3\times$ improved parameter values can be found in Supplementary Table~\ref{table:parameters}.
	The other values (at $2/10/50$$\times$) are approximately: two-qubit gate fidelity $F_{EC}$ ($0.985/0.997/0.9994$), detection probability $\pdetnofibre$ ($6.8\%/58\%/90\%$), induced storage qubit noise $N_{1/e}$ ($2800/14000/70000$), visibility $V$ ($95\%/99\%/99.8\%$).
    The fidelities shown are obtained by simulation of the \swapasap~protocol (3 repeaters) with a total spanned distance of 500 km.
	Each data point represents the average of 1000 runs (standard deviation on fidelity $< 0.002$).
}
\def\captionfigafcmemorycomparison{
    \textbf{Performance comparison of a single quantum repeater with atomic frequency comb (AFC) or electronically induced transparency (EIT) quantum memories}.
    Shown are: \textbf{(a)} the secret key rate in secret bits per entanglement generation attempt, \textbf{(b)} the quantum bit error rate (QBER) in the X and Z bases \textbf{(c)} the average number of attempts necessary for one successful end-to-end entanglement generation.
    Each data point is obtained using 10.000 (EIT) or 30.000 (AFC) successful attempts at generating entanglement between the end nodes.
	Solid lines are fits.
	Note that for the secret key plot we use logarithmic scale with added 0 at the origin of the axes.
	Error bars denote standard deviation and are symmetrical.
}
\def\captionqubitbenchmark{
        \textbf{Runtime comparison of NetSquid's quantum state formalisms.}\
        Runtime comparisons of the available quantum state formalisms in NetSquid as well as ProjectQ ket vector for two benchmark use cases. The KET, DM, STAB and GSLC formalisms refer to the use of ket vectors, density matrices, stabiliser tableaus and graph states with local Cliffords, respectively.
        \textbf{(a)} Generating a Greenberger-Horne-Zeilinger (GHZ) state. Qubits are \textit{split} off from the shared quantum state after a measurement. For the KET formalism the effect of turning off memoization (dotted line) is also shown.
        \textbf{(b)} Quantum computation involved in a repeater chain. Each formalism is shown with qubits split (dotted lines) versus being kept \textit{in-place} (solid lines) after measurement.
}
\def\captionrepchainbenchmark{
    \textbf{Runtime profile of a repeater chain simulation using Netsquid.} Runtime profile for a repeater chain simulation with a varying number of nodes in the chain. The maximum quantum state size is four qubits. The total time spent in the functions of each NetSquid subpackage and its main package dependencies (in italics) is shown. The dark hatched bands show the largest contribution from a single function in each NetSquid sub-package, as well as in NumPy and uncategorised (\textit{other}) functions.
The sub-packages are stacked in the same order as they are listed in the legend.
}
\def\captionfignetsquidarch{
        \textbf{Overview of NetSquid's software architecture.} The sub-packages that make up the NetSquid package are shown stacked in relation to each other
    and the PyDynAA package dependency.
    The main classes in each (sub-)package are highlighted, and their relationships in terms of inheritance, composition and aggregation are shown.
    Also shown are the key modules users interact with, which are described in the main text.
    In this paper NetSquid version 0.10 is described.
}
\xpatchcmd{\@ssect@ltx}{\@xsect}{\protected@edef\@currentlabelname{#8}\@xsect}{}{}% Patch \<section>*
\xpatchcmd{\@sect@ltx}{\@xsect}{\protected@edef\@currentlabelname{#8}\@xsect}{}{}% Patch \<section>
\definecolor{darkblue}{RGB}{0, 0, 130}
\definecolor{darkred}{RGB}{130, 0, 0}
\definecolor{darkgreen}{RGB}{10, 70, 0}
\newcommand{\swapasap}{\textsc{swap-asap}}
\newcommand{\withdistill}{\textsc{nested-with-distill}}
\newcommand{\textapprox}{\raisebox{0.5ex}{\texttildelow}}
\newcommand{\netsquidpapertitle}{NetSquid, a NETwork Simulator for QUantum Information\\ using Discrete events}
\newcommand{\move}{\textsc{store}}
\newcommand{\unmove}{\textsc{retrieve}}
\newcommand{\pdc}{p_{\text{dc}}}
\newcommand{\pde}{p_{\text{dexc}}}
\newcommand{\pdet}{p_{\text{det}}}
\newcommand{\pdetnofibre}{p_{\text{det}}^{\text{nofibre}}}
\newcommand{\poslose}{\pi_{\textnormal{lose}}}
\newcommand{\poskeep}{\pi_{\textnormal{keep}}}
\newcommand{\Plose}{P_{\textnormal{lose}}}
\newcommand{\Pkeep}{P_{\textnormal{keep}}}
\newcommand{\unit}{1\!\!1}
\newcommand{\X}{\mathsf{X}}
\newcommand{\Y}{\mathsf{Y}}
\newcommand{\Z}{\mathsf{Z}}
\newtheorem{prop}{Proposition}
\newcommand{\Ypitwo}{\gate{R_Y(\frac{\pi}{2})}}
\newcommand{\Yminuspitwo}{\gate{R_Y(-\frac{\pi}{2})}}
\newcommand{\Zpitwo}{\gate{R_Z(\frac{\pi}{2})}}
\newcommand{\Zminuspitwo}{\gate{R_Z(-\frac{\pi}{2})}}
\newcommand{\CXDIR}{\gate{R_X(\pm \frac{\pi}{2})}}
\newcommand{\CXDIRminus}{\gate{R_X(\mp \frac{\pi}{2})}}
\newcommand{\entgen}{\textsc{entgen}}
\newcommand{\distill}{\textsc{distill}}
\newcommand{\swap}{\textsc{swap}}
\def\netsquidpaperauthors{
\author{Tim Coopmans}
\thanks{These authors contributed equally.}
\affiliation{QuTech, Delft University of Technology and TNO, Lorentzweg 1, 2628 CJ Delft, The Netherlands}
\affiliation{Kavli Institute of Nanoscience, Delft, The Netherlands}
\author{Robert Knegjens}
\thanks{These authors contributed equally.}
\affiliation{QuTech, Delft University of Technology and TNO, Lorentzweg 1, 2628 CJ Delft, The Netherlands}
\author{Axel Dahlberg}
\affiliation{QuTech, Delft University of Technology and TNO, Lorentzweg 1, 2628 CJ Delft, The Netherlands}
\affiliation{Kavli Institute of Nanoscience, Delft, The Netherlands}
\author{David Maier}
\affiliation{QuTech, Delft University of Technology and TNO, Lorentzweg 1, 2628 CJ Delft, The Netherlands}
\affiliation{Kavli Institute of Nanoscience, Delft, The Netherlands}
\author{Loek Nijsten}
\affiliation{QuTech, Delft University of Technology and TNO, Lorentzweg 1, 2628 CJ Delft, The Netherlands}
\author{Julio de Oliveira Filho}
\affiliation{QuTech, Delft University of Technology and TNO, Lorentzweg 1, 2628 CJ Delft, The Netherlands}
\author{Martijn Papendrecht}
\affiliation{QuTech, Delft University of Technology and TNO, Lorentzweg 1, 2628 CJ Delft, The Netherlands}
\author{Julian Rabbie}
\affiliation{QuTech, Delft University of Technology and TNO, Lorentzweg 1, 2628 CJ Delft, The Netherlands}
\affiliation{Kavli Institute of Nanoscience, Delft, The Netherlands}
\author{Filip Rozp\k{e}dek}
\affiliation{QuTech, Delft University of Technology and TNO, Lorentzweg 1, 2628 CJ Delft, The Netherlands}
\affiliation{Kavli Institute of Nanoscience, Delft, The Netherlands}
\affiliation{Pritzker School of Molecular Engineering, University of Chicago, Chicago, IL 60637, USA}
\author{Matthew Skrzypczyk}
\affiliation{QuTech, Delft University of Technology and TNO, Lorentzweg 1, 2628 CJ Delft, The Netherlands}
\affiliation{Kavli Institute of Nanoscience, Delft, The Netherlands}
\author{Leon Wubben}
\affiliation{QuTech, Delft University of Technology and TNO, Lorentzweg 1, 2628 CJ Delft, The Netherlands}
\author{Walter de Jong}
\affiliation{SURF, P.O. Box 94613, 1090 GP Amsterdam, The Netherlands}
\author{Damian Podareanu}
\affiliation{SURF, P.O. Box 94613, 1090 GP Amsterdam, The Netherlands}
\author{Ariana Torres-Knoop}
\affiliation{SURF, P.O. Box 94613, 1090 GP Amsterdam, The Netherlands}
\author{David Elkouss}
\thanks{These authors jointly supervised this work.}
\affiliation{QuTech, Delft University of Technology and TNO, Lorentzweg 1, 2628 CJ Delft, The Netherlands}
\author{Stephanie Wehner}
\thanks{These authors jointly supervised this work.}
\affiliation{QuTech, Delft University of Technology and TNO, Lorentzweg 1, 2628 CJ Delft, The Netherlands}
\affiliation{Kavli Institute of Nanoscience, Delft, The Netherlands}
\affiliation{Corresponding author: \href{mailto:s.d.c.wehner@tudelft.nl}{s.d.c.wehner@tudelft.nl}}
}
\begin{document}

\title{\netsquidpapertitle}

\netsquidpaperauthors

\date{\today}

\begin{abstract}
    \section*{Abstract}
In order to bring quantum networks into the real world,
we would like to determine the requirements of quantum network protocols including the underlying quantum hardware.
Because detailed architecture proposals are generally too complex for mathematical analysis, it is natural to employ numerical simulation.
Here we introduce NetSquid, the NETwork Simulator for QUantum Information using Discrete events, a discrete-event based platform for simulating all aspects of quantum networks and modular quantum computing systems, ranging from the physical layer and its control plane up to the application level.
We study several use cases to showcase NetSquid's power,
including detailed physical layer simulations of repeater chains based on nitrogen vacancy centres in diamond as well as atomic ensembles.
We also study the control plane of a quantum switch beyond its analytically known regime,
and showcase NetSquid's ability to investigate large networks by simulating entanglement distribution over a chain of up to one thousand nodes.
\end{abstract}

\maketitle

\section{Introduction}
Quantum communication can be used to connect distant quantum devices into a quantum network. At short distances, networking quantum devices provides a path towards a scalable distributed quantum computer \cite{vanmeter2016scalable}. At larger distances, interconnected quantum networks allow for communication tasks between distant users on a quantum internet. Both types of quantum networks have the potential for large societal impact. First, analogous to classical computers, it is likely that any approach for scaling up a quantum computer so that it can solve real world problems impractical to treat on a classical computer, will require the interconnection of different modules~\cite{lekitsch2017blueprint, monroe2014largescale, stephens2008deterministic}.
Furthermore, quantum communication networks enable a host of tasks that are impossible using classical communication~\cite{wehner2018quantum}.

For both types of networks, many challenges must be overcome before they can fulfil their promise. The exact extent of these challenges remains unknown,
and precise requirements to guide the construction of large-scale quantum networks are missing. At the physical layer, many proposals exist for quantum repeaters that can carry qubits over long distances (see e.g.~\cite{munro2015inside,muralidharan2016optimal,gisin2007quantum} for an overview). Using analytical methods~\cite{ briegel1998quantum, duer1999quantum, duan2001long, amirloo2014quantum, asadi2018quantum, bernardes2011rate, borregaard2015heralded, bruschi2014repeat, chen2007fault, collins2007multiplexed, guha2015rate, hartmann2007role, jiang2008quantum, nemoto2016photonic, razavi2009quantum, razavi2006long, simon2007quantum, vinay2017practical, wu2020nearterm, sangouard2007long-distance, sangouard2008robust, sangouard2009quantum} and ad-hoc simulations~\cite{abruzzo2013quantum, brask2008memory, muralidharan2014ultrafast, pant2017ratedistance, ladd2006hybrid, loock2006hybrid, zwerger2017quantum, jiang2007fast} rough estimates for the requirements of such hardware proposals have been obtained. Yet, while greatly valuable to set minimal requirements, these studies still provide limited detail. Even for a small-scale quantum network, the intricate interplay between many communicating devices, and the resulting timing dependencies makes a precise analysis challenging. To go beyond, we would like a tool that can incorporate not only a detailed physical modelling, but also account for the implications of time-dependent behaviour.

Quantum networks cannot be built from quantum hardware alone; in order to scale they need a tightly integrated classical control plane, i.e. protocols responsible for orchestrating quantum network devices to bring entanglement to users. Fundamental differences between quantum and classical information demand an entirely new network stack in order to create entanglement, and run useful applications on future quantum networks~\cite{dahlberg2019linklayer, vanmeter2012quantumnetworking, aparicio2011protocol, vanmeter2013designing, vanmeter2007system, pirker2019stack}. 
The design of such a control stack is furthermore made challenging by numerous technological limitations of quantum devices. A good example is given by the limited lifetimes of quantum memories, due to which delays in the exchange of classical control messages have a direct impact on the performance of the network.
To succeed, we hence need to understand how possible classical control strategies do perform on specific quantum hardware. 
Finally, to guide overall development, we need to understand the requirements of quantum network applications themselves. 
Apart from quantum key distribution (QKD) \cite{acin2007device, branciard2012onesided, scarani2009securityqkd, xu2020secure, pirandola2019advances} and a few select applications~\cite{barz2012demonstration, nickerson2014freely, lipinska2018anonymous, khabiboulline2019optical}, little is known about the requirements of quantum applications~\cite{wehner2018quantum} on imperfect hardware. 

Analytical tools offer only a limited solution for our needs. Statistical tools (see e.g.~\cite{shchukin2017waitingPRA, vinay2019statistical, vardoyan2019stochastic, razavi2009physical}) have been used to make predictions about certain aspects of large regular networks using simplified models, but are of limited use for more detailed studies.
Information theory~\cite{wilde2013quantum} can be used to benchmark implementations against the ideal performance.
However, it gives no information about how well a specific proposal will perform. 
As a consequence, numerical methods are of great use to go beyond what is feasible using an analytical study. Ad-hoc simulations of quantum networks have indeed been used to provide further insights on specific aspects of quantum networks (see e.g.~\cite{abruzzo2013quantum, brask2008memory, muralidharan2014ultrafast, pant2017ratedistance, ladd2006hybrid, loock2006hybrid, zwerger2017quantum, jiang2007fast, pant2017routing, kuzmin2019scalable, khatri2019practical}).
However, while greatly informative, setting up ad-hoc simulations for each possible networking scenario to a level of detail that might allow the determination of more precise requirements is cumbersome, and does not straightforwardly lend itself to extensive explorations of new possibilities.

We would hence like a simulation platform that satisfies at least the following three features: First, accuracy: the tool should allow modelling the relevant physics. This includes the ability to model time-dependent noise 
and network behaviour. Second, modularity: it should allow stacking protocols and models together in order to construct complicated network
simulations out of simple components. This includes the ability to investigate not only the physical layer hardware, but the entirety of the quantum network system including how different control protocols behave on a given hardware setup. Third, scalability: it should allow us to investigate large networks. 

Evaluating the performance of large classical network systems, including their time-dependent behaviour is the essence of classical network analysis. Yet, even for classical networks, the predictive power of analytical methods is limited due to complex emergent behaviour arising from the interplay 
between many network devices. Consequently, a crucial tool in the design of such networks are network simulators, which form a tool to test new ideas, 
and many such simulators exist for purely classical networks~\cite{omnet, riley2010ns-3, lantz2010network}.
However, such simulators do not allow the simulation of quantum behaviour. 

In the quantum domain, many simulators are known for the simulation of quantum computers (see e.g.~\cite{fingerhuth2018opensource}). However, the task of simulating a quantum network differs greatly from simulating the execution of one monolithic quantum system. In the network, many devices are communicating with each other both quantumly and classically, leading to complex stochastic behaviour, and asynchronous and time-dependent events.
From the perspective of building a simulation engine, there is also an important difference that allows for gains in the efficiency of the simulation. A simulator for a quantum computation is optimised to track large entangled states.
In contrast, in a quantum network the state space grows and shrinks dynamically as qubits get measured or entangled, but for many protocols, at any moment in time the state space describing the quantum state of the network is small. We would thus like a simulator capable of exploiting this advantage.

In this paper we introduce the quantum network simulator NetSquid: the NETwork Simulator for QUantum Information using Discrete events.
NetSquid is a software tool (available as a package for Python and previously made freely available online~\cite{netsquid-website})
for accurately simulating quantum networking and modular computing systems that are subject to physical non-idealities.
It achieves this by integrating several key technologies:
a discrete-event simulation engine, a specialised quantum computing library, a modular framework for modelling quantum hardware devices,
and an asynchronous programming framework for describing quantum protocols.
We showcase the utility of this tool for a range of applications by studying several use cases: the analysis of a control plane protocol beyond its analytically accessible regime, predicting the performance of protocols on realistic near-term hardware, and benchmarking different quantum devices.
These use cases, in combination with a scalability analysis,
demonstrate that NetSquid achieves all three features set forth above.
Furthermore, they show its potential as a general and versatile design tool for quantum networks, as well as for modular quantum computing architectures.

\section{Results and Discussion}
\label{sec:results}

\subsection{NetSquid in a nutshell}  
\label{sec:netnut}

\begin{figure*}[ht]
\centering
\includegraphics[width=1.0\textwidth]{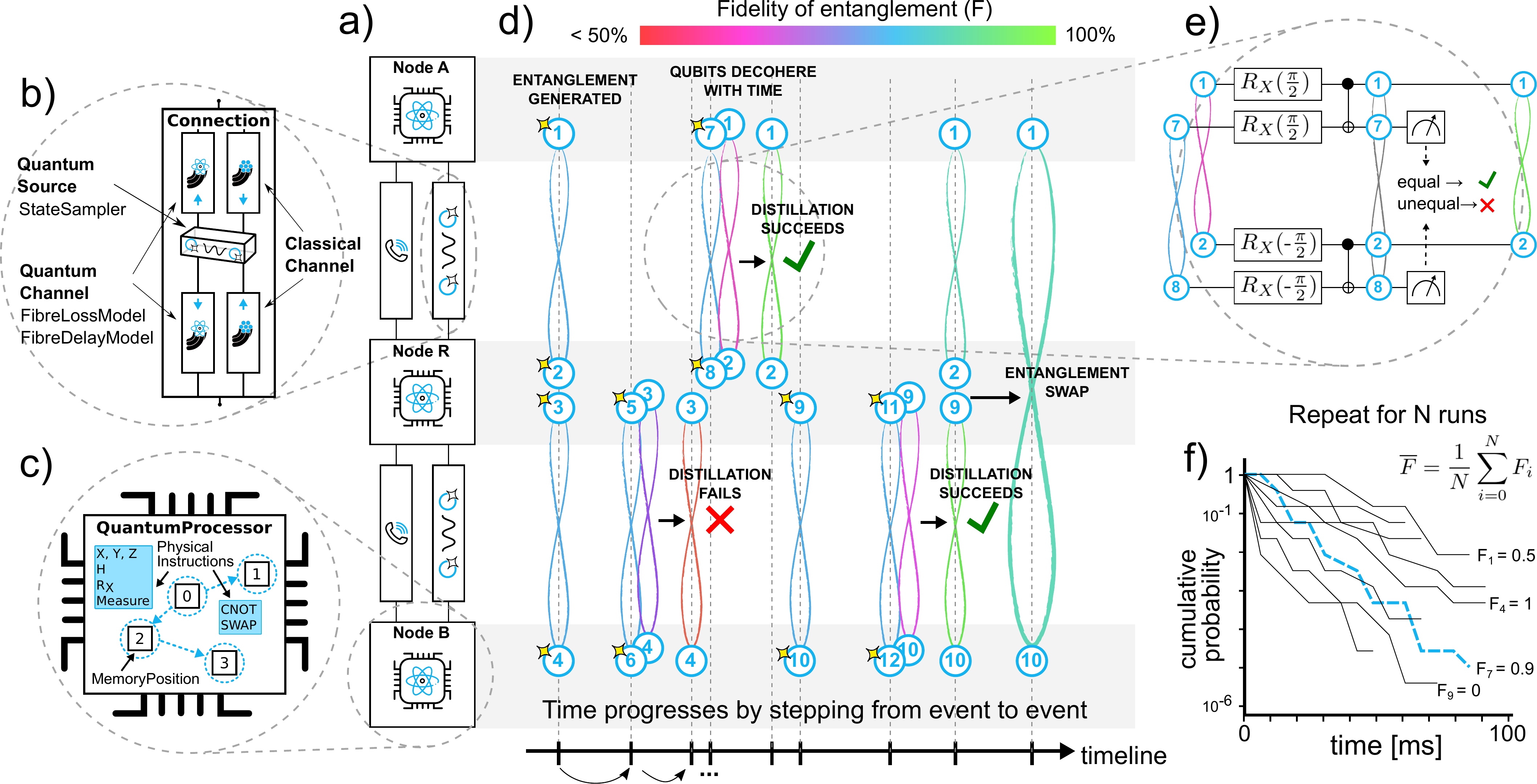}
\caption{\captionfigexamplerepeatdistil}
\label{fig:netsquid_distil_repeat}
\end{figure*}

Simulating a quantum network with NetSquid is generally performed in three steps.
Firstly, the network is modelled using a modular framework of components and physical models.
Next, protocols are assigned to network nodes to describe the intended behaviour.
Finally, the simulation is executed for a typically large number of independent runs to collect statistics with which to determine the performance of the network.
To explain these steps and the features involved further, we consider a simple use case for illustration.
For a more detailed presentation of the available functionality and design of the NetSquid framework see section \nameref{sec:methods-netsquid} of the Methods.

The scenario we will consider is the analysis of an entanglement distribution protocol over a quantum repeater chain with three nodes.
The goal of the analysis is to estimate the average output fidelity of the distributed entangled pairs.
The entanglement distribution protocol is depicted in Figure~\ref{fig:netsquid_distil_repeat}(d-e).
It works as follows. 
First, the intermediate node generates two entangled pairs with each of its adjacent neighbours. Entanglement generation is modelled as a stochastic process that succeeds with a certain probability at every attempt. 
When two pairs are ready at one of the links, the DEJMPS entanglement distillation scheme \cite{deutsch1996quantum} is run to improve the quality of the entanglement.
If it fails, the two links are discarded and the executing nodes restart entanglement generation.
Once both distilled states are ready, the intermediate node swaps the entanglement to achieve end-to-end entanglement. 
We remark that already this simple protocol is rather involved to analyse. 

We begin by modelling the network.
The basic element of NetSquid's modular framework is the ``component''.
It is capable of describing the physical model composition, quantum and classical communication ports,
and, recursively, any subcomponents. All hardware elements, including the network itself, are represented by components. 
For this example we require three remote nodes linked by two quantum 
and two classical connections, the setup of which is shown in Figure~\ref{fig:netsquid_distil_repeat}(a).
In Figure \ref{fig:netsquid_distil_repeat}(b,c) the nested structure of these components is highlighted. 
A selection of physical models is used to describe the loss and delay of the fibre optic channels, the decoherence of the quantum memories, and the errors of quantum gates.

Quantum information in NetSquid is represented at the level of qubits,
which are treated as objects that dynamically share their quantum states.
These internally shared states will automatically merge or ``split'' -- a term we use to mean the separation of a tensor product state into two separately shared sub-states -- as qubits entangle or are measured,
as illustrated by the distillation protocol in Figure~\ref{fig:netsquid_distil_repeat}(e).
The states are tracked internally, i.e.\ hidden from users, for two reasons: to encourage a node-centric approach to programming network protocols, and to allow a seamless switching between different quantum state representations.
The representations offered by NetSquid are ket vectors, density matrices, stabiliser tableaus and graph states with local Cliffords, each with trade-offs in modelling versatility, computation speed and network (memory) scalability (see the subsection \nameref{sec:benchmarking} below and Supplementary Note~\ref{appendix:netsquid}).

Discrete-event simulation,
an established method for simulating classical network systems~\cite{wehrle2010modeling},
is a modelling paradigm that progresses time by stepping through a sequence of events -- see Figure~\ref{fig:example_des} for a visual explanation.
This allows the simulation engine to efficiently handle the control processes and feedback loops characteristic of quantum networking systems, while tracking quantum state decoherence based on the elapsed time between events.
A novel requirement for its application to quantum networks is the need to accurately evolve the state of the quantum information present in a network with time.
This can be achieved by retroactively updating quantum states when the associated qubits are accessed during an event. 
While it is possible to efficiently track 
a density matrix, quantum operations requiring a singular outcome for classical decision making, for instance a quantum measurement, must be probabilistically sampled.
A single simulation run thus consists of a sequence of random choices that forms one of many possible paths.
In Figure~\ref{fig:netsquid_distil_repeat}~(d) we show such a run for the repeater protocol example,
which demonstrates the power of the discrete-event approach for tracking qubit decoherence and handling feedback loops.

\begin{figure}[ht]
\centering
\includegraphics[width=0.49\textwidth]{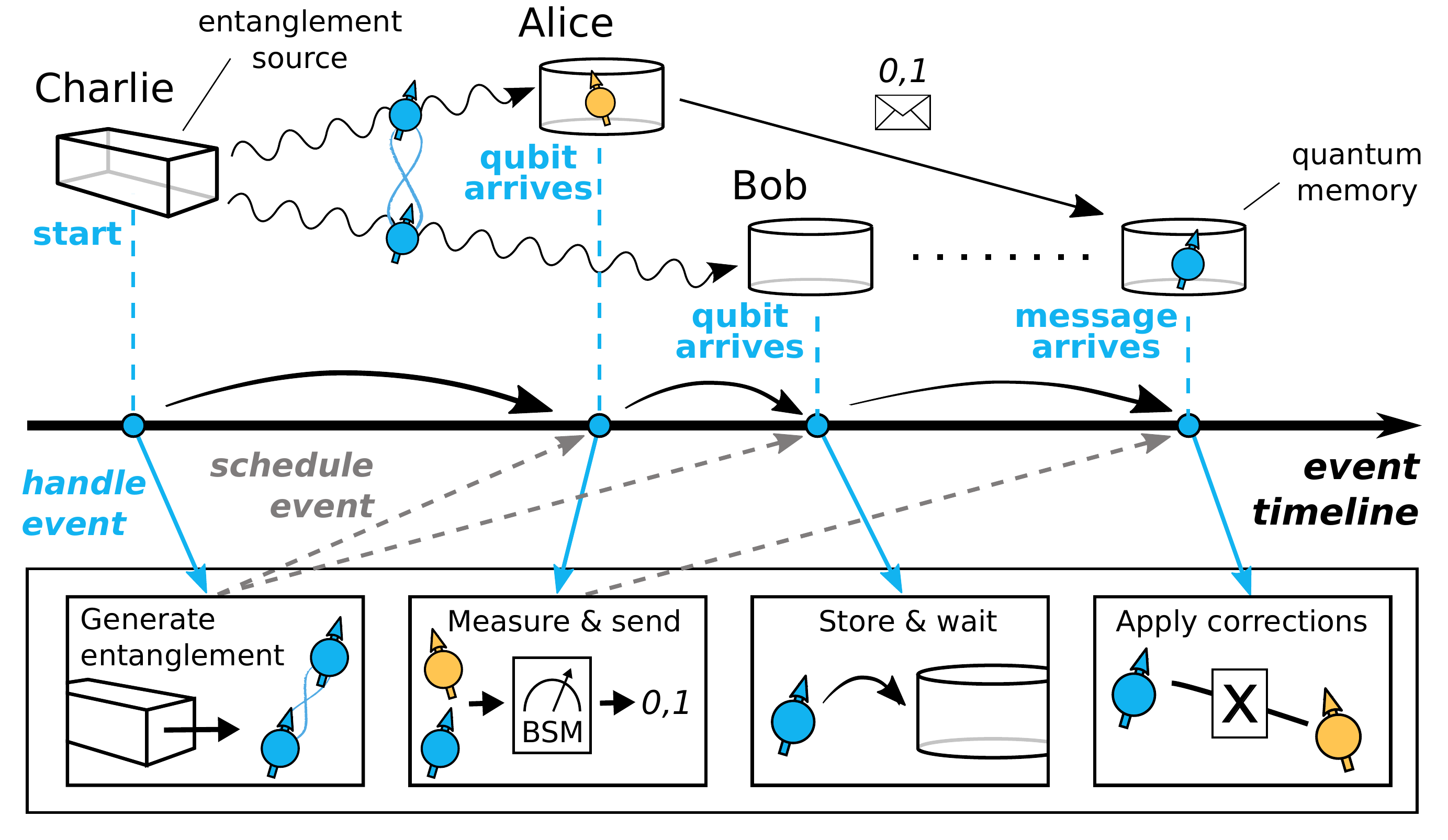}
    \caption{\captionfigexampledes}
\label{fig:example_des}
\end{figure}

The performance metrics of a simulation are determined statistically from many runs.
Due to the independence of each run, simulations can be massively parallelised and thereby efficiently executed on computing clusters.
For the example at hand we choose as metrics the output fidelity and run duration.
In Figure~\ref{fig:netsquid_distil_repeat}~(f) the sampled run from (d), which resulted in perfect fidelity, is plotted in terms of its likelihood and duration together with several other samples, some less successful.
By statistically averaging all of the sampled runs the output fidelity and duration can be estimated. 

In the following sections, we will outline three use cases of NetSquid: first, a quantum switch, followed by simulations of quantum repeaters based on nitrogen-vacancy technology or atomic-ensemble memories.
We will also benchmark NetSquid's scalability in both quantum state size and number of quantum network nodes.
Although the use cases each provide relevant insights into the performance of the studied hardware and protocols, we emphasise that one can use NetSquid to simulate arbitrary network topologies.

\subsection{Simulating a quantum network switch beyond its analytically known regime}\label{sec:results-switch}
As a first use case showcasing the power of NetSquid, we study the control plane
of a recently introduced quantum switch beyond the regime for which analytical results have been obtained, including its performance under time-dependent memory noise.

The switch is a node which is directly connected to each of $k$ users by an optical link.
The communications task is distributing Bell pairs and $n$-partite Greenberger-Horne-Zeilinger (GHZ) states~\cite{greenberger1989going} between $n\leq k$ users.
The switch achieves this by connecting Bell pairs which are generated at random intervals on each link.
See Figure~\ref{fig:quantum_switch}.

Intuitively, the switch can be regarded as a generalisation of a simple repeater performing entanglement swapping with added logic to choose which parties to link. 
Even with a streamlined physical model, it is already rather challenging to analytically characterise the switch use case \cite{vardoyan2019stochastic}. 

In the following, we recover via simulation a selection of the results from Vardoyan et al. \cite{vardoyan2019stochastic}, who studied the switch as the central node in a star network, and extend them in two directions. First, we increase the range of parameters for which we can estimate entanglement rates using the same model as used in the work of Vardoyan et al. Second, simulation enables us to investigate more sophisticated models than the exponentially distributed erasure process from their work, in particular we analyse the behaviour of a switch in the presence of memory dephasing noise.

The protocol for generating the target $n$-partite GHZ states is simple. Entanglement generation is attempted in parallel across all $k$ links. If successful they result in bipartite Bell states that are stored in quantum memories.
The switch waits until $n$ Bell pairs have been generated until performing an $n$-partite GHZ measurement, which converts the pairs into a state locally equivalent to a GHZ state.
An additional constraint is that the switch has a finite buffer $B$ of number of memories dedicated for each user (see Figure~\ref{fig:quantum_switch}).
If the number of pairs stored in a link is $B$ and a new pair is generated, the old one is dropped and the new one is stored.

The protocol can be translated to a Markov chain. The state space is represented by a $k$-length vector where each entry is associated with a link and its value denotes the number of stored links.
The switch's mean capacity, i.e.\ the number of states produced per second, can be derived from the steady-state of the Markov chain \cite{vardoyan2019stochastic}. 

Using NetSquid, it is straightforward to fully reproduce the previous model and study the behaviour of the network without constructing the Markov Chain (details can be found in Supplementary Note~\ref{app:switch}).
In Figure~\ref{fig:switch}(a), we use NetSquid to study the capacity of a switch network serving nine users.
The figure shows the capacity (number of produced GHZ-states per second), which we investigate for three use cases. First we consider a switch network distributing bipartite entanglement.
Second, we consider also a switch-network serving bipartite entanglement but with link generation rates that do not satisfy the stability condition for the Markov Chain if the buffer $B$ is infinitely large, i.e.\ a regime so far intractable.
Third, we consider a switch-network distributing four-partite entanglement where the link generation rates $\mu$ differ per user, a regime not studied so far, and compute the capacity. 

Beyond rate, it is important to understand the quality of the states produced. Answering this question with Markov chain models seems challenging. In order to analyse entanglement quality, we introduce a more sophisticated decoherence model where the memories suffer from decay over time. In particular,
we model decoherence as exponential $T_2$ noise, which impacts the quality of the state, as expressed in its fidelity with the ideal state.
In Figure~\ref{fig:switch}(b), we show the effect of the time-dependent memory noise on the average fidelity.

\begin{figure}[tb]
    \centering
    \includegraphics[width=0.5\textwidth]{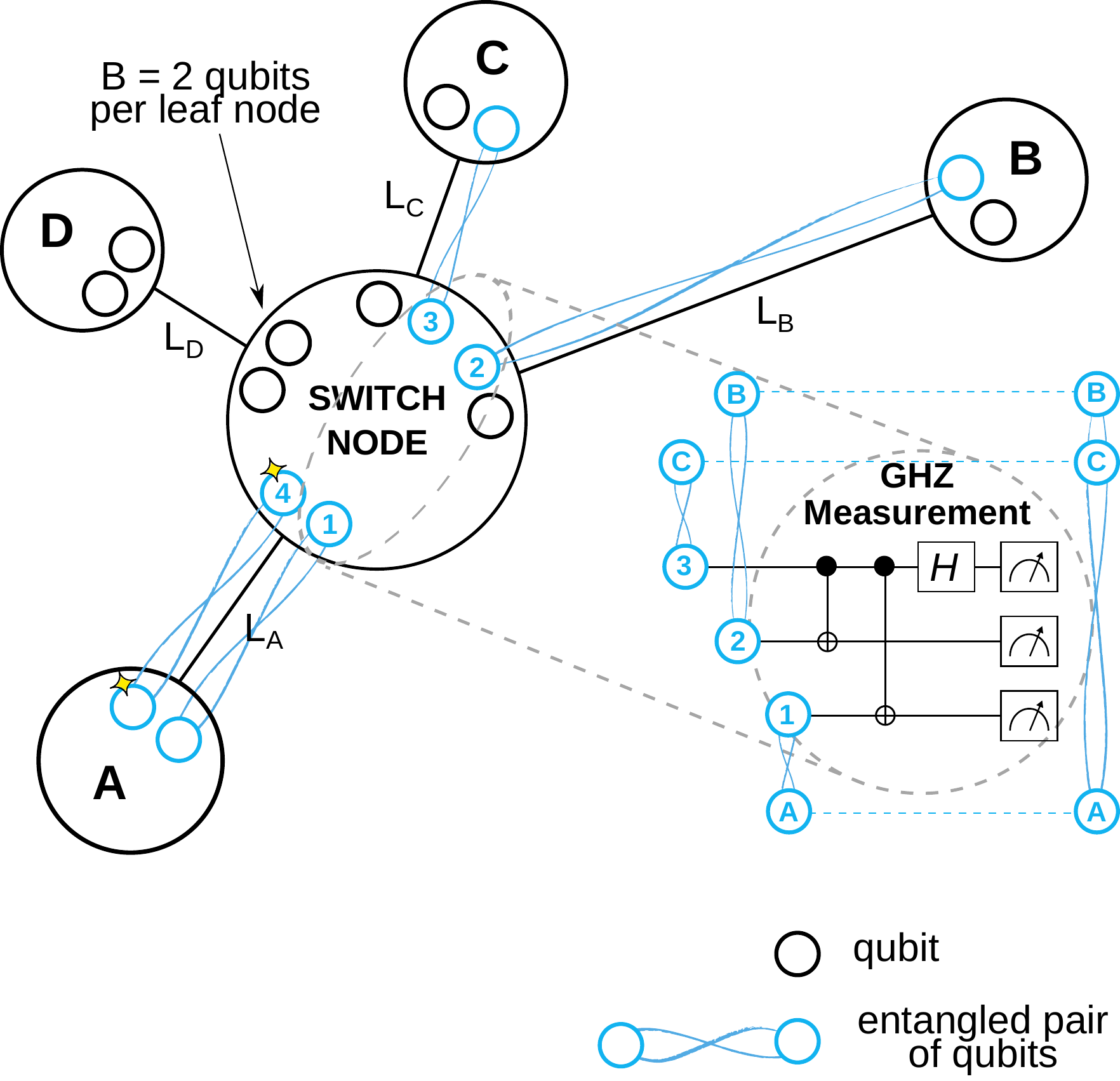}
    \caption{\captionfigswitchtopodiagram}
    \label{fig:quantum_switch}
\end{figure}

\begin{figure*}
  \centering
		\includegraphics[width=1.0\textwidth]{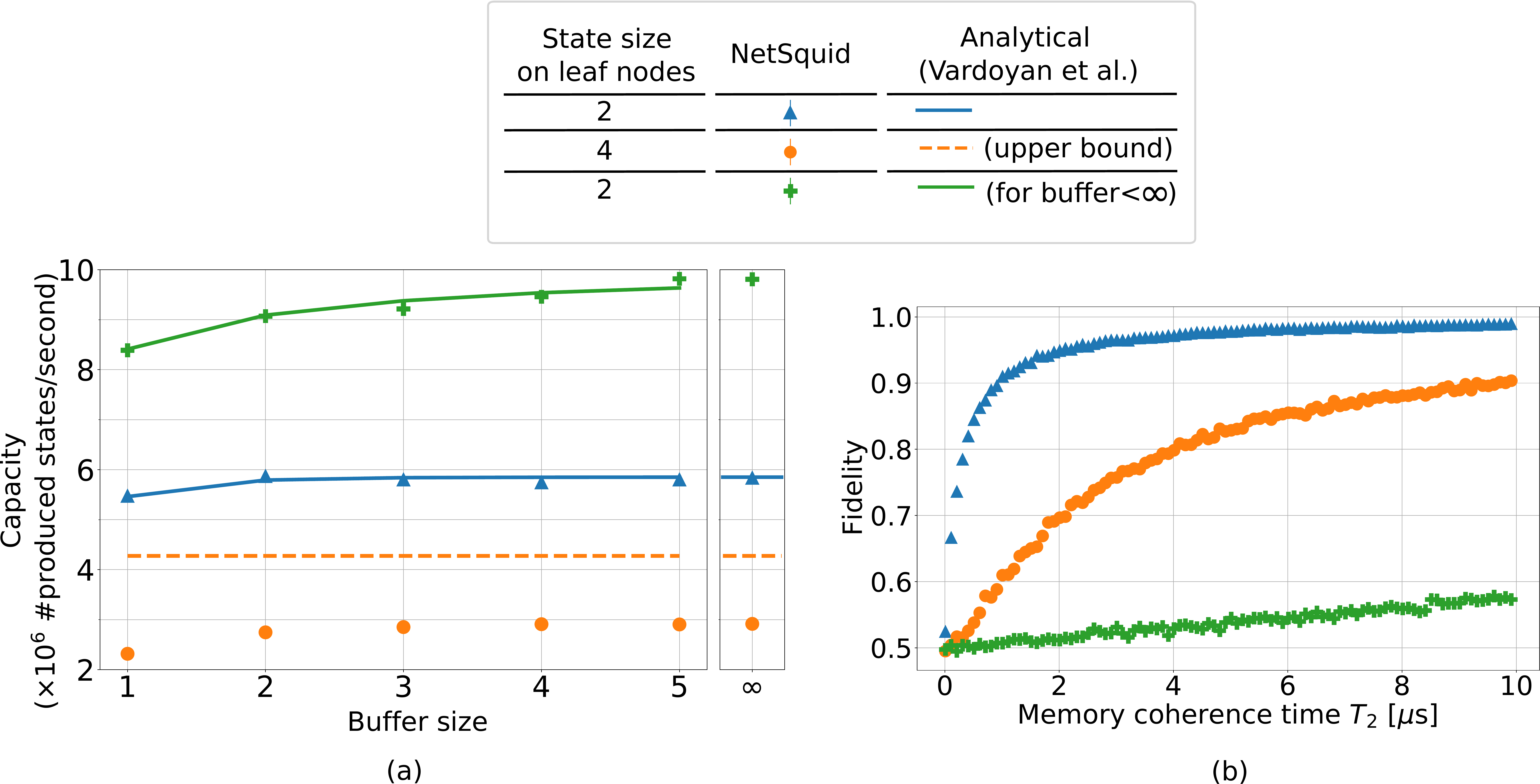}
\caption{\label{fig:switch} \captionfigswitch}
\end{figure*}

\subsection{Sensitivity analysis for the physical modelling of a long range repeater chain}
\label{sec:nv-sensitivity-analysis}
The next use case is the distribution of long-distance entanglement via a chain of quantum repeater nodes \cite{briegel1998quantum, munro2015inside} based on nitrogen-vacancy (NV) centres in diamond \cite{awschalom2018quantum, doherty2013nitrogen}.
This example consists of a more detailed physical model and more complicated control plane logic than the quantum switch or the distillation example presented at the start of this section.
It is also an example of how NetSquid's modularity supports setting up simulations involving many nodes; in this case the node model and the protocol (which runs locally at a node) only need to be specified once,
and can then be assigned to each node in the chain.
Furthermore, the use of a discrete-event engine allows the actions of the individual protocols to be simulated asynchronously, in contrast to the typically sequential execution of quantum computing simulators.

The NV-based quantum processor includes the following three features.
First, the nodes have a single communication qubit, i.e.\ a qubit acting as the optical interface that can be entangled with a remote qubit via photon interference.
This seemingly small restriction has important consequences for the communications protocol. In particular, entanglement can not proceed in parallel with both adjacent nodes. As a consequence, operations need to be scheduled in sequence and the state of the communication qubit transferred onto a storage qubit.
Second, the qubits in a node are connected with a star topology with the communication qubit located in the centre. Two-qubit gates are only possible between the communication qubit and a storage qubit.
Third, communication and storage qubits have unequal coherence times.
Furthermore, the storage qubits suffer additional decoherence when the node attempts to generate entanglement.
Previous repeater-chain analyses, e.g. \cite{nemoto2016photonic, razavi2009quantum, vanmeter2007system}, did not take all three into account simultaneously.

Together with the node model, we consider two protocols: \swapasap~and \withdistill. In \swapasap, 
as soon as adjacent links are generated the entanglement is swapped. 
\withdistill~is a nested protocol \cite{briegel1998quantum} with entanglement distillation at every nesting level.
For a description of the simulation, including the node model and protocols, see Methods, section~\nameref{sec:appprocnode}.

The first question that we investigate is the distance that can be covered by a repeater chain. For this we choose two sets of hardware parameters that we dub near-term and $10\times$ improved (see Supplementary Note~\ref{app:nv-physical-modelling}) and choose two configurations: one without intermediate repeaters and one with three of them. We observe, see Figure~\ref{fig:repchain-A}(a), that the repeater chain performs worse in fidelity than the repeaterless configuration with near-term hardware. For improved hardware, we see two regimes, for short distances the use of repeaters increases rate but lowers fidelity while from 750~km until 1500~km the repeater chain outperforms the no-repeater setup.

The second question that we address is which protocol performs best for a given distance.
We consider seven protocols: no repeater, and repeater chains implementing \swapasap~or \withdistill~over 1, 3 or 7 repeaters.
The latter is motivated by the fact that the \withdistill~protocol is defined for $2^n - 1$ repeaters ($n\geq 1$), and thus 1, 3, and 7 are the first three possible configurations.
In Figure~\ref{fig:repchain-A}(b), we sweep over the hardware parameter space for two distances, where we improve all hardware parameters simultaneously and the improvement is quantified by a number we refer to as "improvement factor" (see section \nameref{sec:improvement-factor} of the Methods).
For 500~km, we observe that the no-repeater configuration achieves larger or equal fidelity for the entire range studied.
However, repeater schemes boost the rate for all parameter values. 
If we increase the distance to 800~km, then we see that the use of repeaters increases both rate and fidelity for the same range of parameters. 
If we focus on the repeater scheme, we observe for both distances that for high hardware quality, the \withdistill~scheme, which includes distillation, is optimal.
In contrast, for lower hardware quality, the best-performing scheme that achieves fidelities larger than the classical bound $0.5$ is the \swapasap~protocol.

We note that beyond 700~km the entanglement rate decreases when the hardware is improved.
This is due to the presence of dark counts, i.e. false signals that a photon has been detected.
At large distances most photons dissipate in the fibre, whereby the majority of detector clicks are dark counts.
Because a dark count is mistakenly counted as a successful entanglement generation attempt, improving (i.e. decreasing) the dark count rate in fact results in a lower number of observed detector clicks, from which the (perceived) entanglement rate plotted in Figure~\ref{fig:repchain-A}(a) is calculated.

Lastly, in Figure~\ref{fig:sensitivity-analysis}, we investigate the sensitivity of the entanglement fidelity for the different hardware parameters. 
We take as the figure of merit the best fidelity achieved with a \swapasap~protocol.
The uniform improvement factor is set to 3, while the following four hardware parameters are varied: a two-qubit gate noise parameter, photon detection probability (excluding transmission), induced storage qubit noise and visibility.
We observe that improving the detection probability yields the largest fidelity increase from $2\times$ to $50\times$ improvement, while this increase is smallest for visibility.
We also see that improving two-qubit gate noise or induced storage qubit noise on top of an increase in detection probability yields only a small additional fidelity improvement, which however boosts fidelity beyond the classical threshold of 0.5.
These observations indicate that detection probability is the most important parameter for realising remote-entanglement generation with the \swapasap~scheme, followed by two-qubit gate noise and induced storage qubit noise. 

\begin{figure*}
  \centering
    \includegraphics[width=1.0\textwidth]{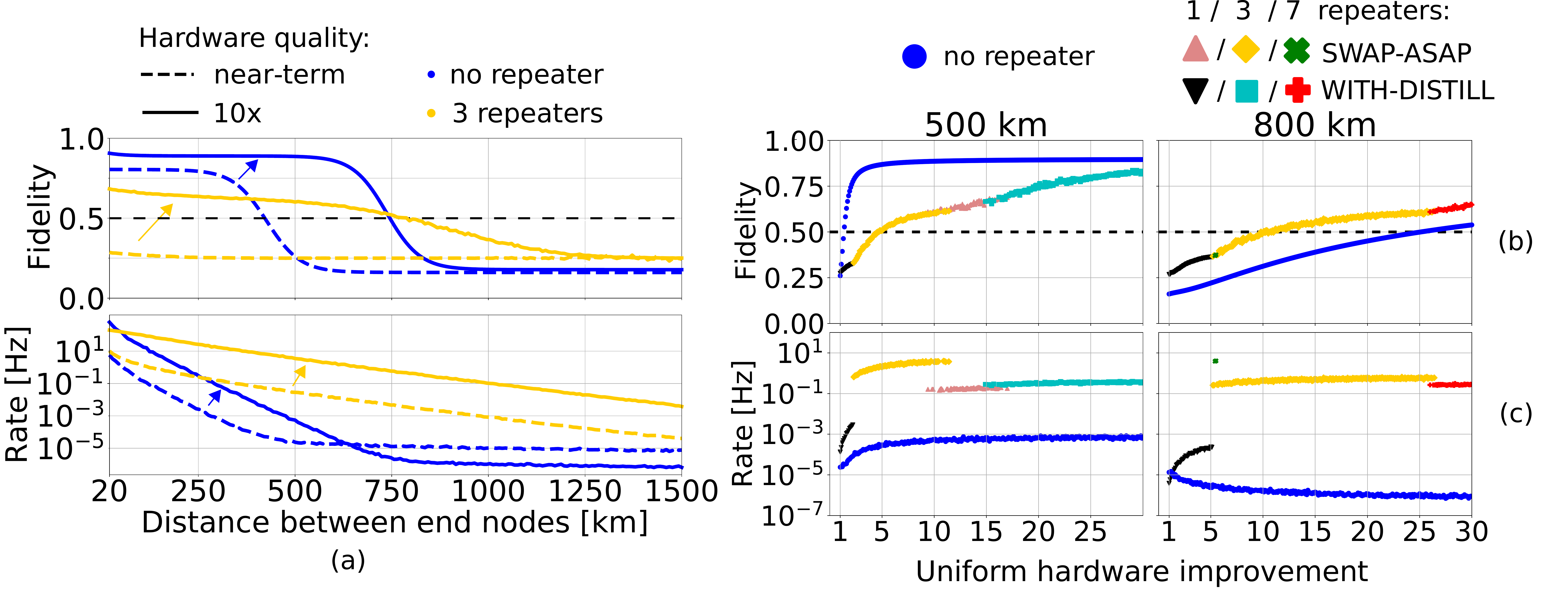}
\caption{\label{fig:repchain-A} \captionfigrepchainA}
\end{figure*}

\begin{figure}
    \centering
	\includegraphics[width=0.5\textwidth]{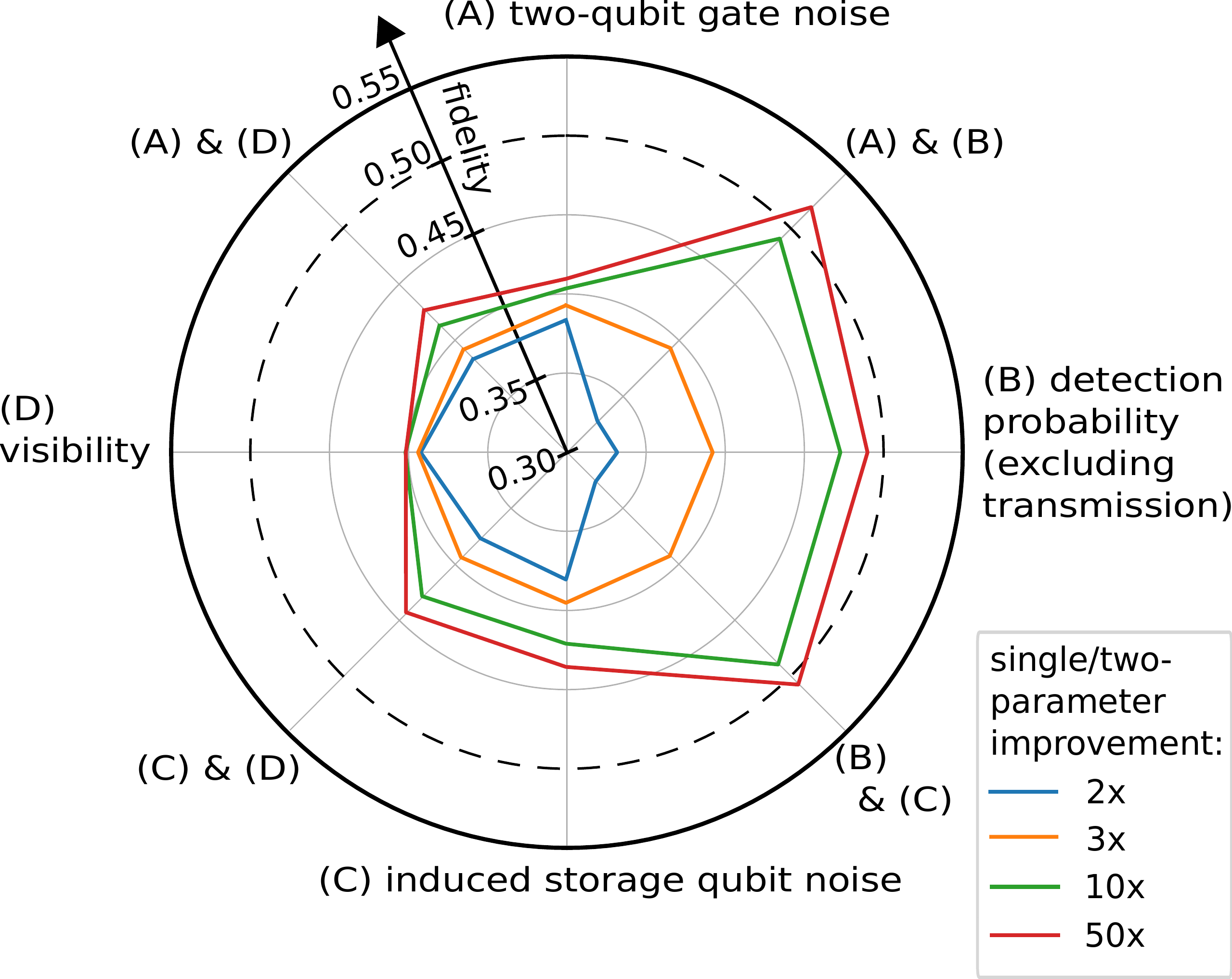}
\caption{\label{fig:sensitivity-analysis} \captionfigsensitivityanalysis}
\end{figure}

\subsection{Performance comparison between two atomic-ensemble memory types through NetSquid's modular design}\label{sec:ae_memory_comp}
Finally, we showcase that NetSquid's modular design greatly reduces the effort of assessing possible hardware development scenarios.
We demonstrate the power of this modularity by simulating point-to-point remote-entanglement generation based on either of two types of atomic-ensemble based quantum memories: atomic frequency combs (AFC) \cite{afzelius2009multimode} and electronically induced transparency (EIT) \cite{fleischhauer2005electromagnetically,lukin2003colloquium} memories. Both simulations are identical except for the choice of a different quantum memory component.

The two types of memories are a promising building block for high-rate remote entanglement generation through quantum repeaters because of their high efficiency (EIT) or their ability for multiplexing (AFC), i.e.\ to perform many attempts at entanglement generation simultaneously without network components having to wait for the arrival of classical messages that herald successful generation.
The first type of memories, AFCs, are based on a photon-echo process, where an absorbed photon is re-emitted after an engineered duration.
In contrast, the second type, EITs, emit the photon after an on-demand interval, due to optical control.
In principle the AFC protocol can be extended to offer such on-demand retrieval as well.
At this point both technologies are promising candidates and it is not yet clear which outperforms the other and under what circumstances.

Atomic-ensemble based repeaters have been analytically and numerically studied before with streamlined physical models \cite{guha2015rate, krovi2016}.
NetSquid's discrete-event based paradigm allows us to go beyond that by concurrently introducing several non-ideal characteristics.
In particular, we include the emission of more than one photon pair, photon distinguishability and time-dependent memory efficiency. Efficiency in this context is the probability that the absorbed photon will be re-emitted.
All these characteristics have a significant impact on the performance of the repeater protocol. 

In order to compare the two memory types, we simulate many rounds of the BB84 quantum key distribution protocol \cite{bennett2014} between two remote nodes, using a single repeater positioned precisely in between them.
Entanglement generation is attempted in synchronised rounds over both segments in parallel.
At the end of each round, the two end nodes measure in the X- or Z-basis, chosen uniformly at random, and the repeater performs a probabilistic linear-optical Bell-state measurement.
Upon a successful outcome, we expect correlation between the measurement outcomes if they were performed in the same basis.
As a figure of merit we choose the asymptotic BB84 secret-key rate.

The results of our simulations are shown in Figure~\ref{fig:afc-memory-comparison}, where the rate at which secret key between the two nodes can be generated is obtained as a function of the distance between the nodes.
For the parameters considered (see Supplementary Note~\ref{app:atomic-ensembles}), we observe that EIT memories outperform AFC memories at short distances. The crossover performance point is reached at $\sim 50$ kilometers, beyond which AFC memories outperform EIT memories.

In the use case above, we showcased NetSquid's modularity by only replacing the memory component.
We emphasise that this modularity also applies to different parts of the simulation.
For example, if the quantum switch should produce a different type of multipartite state than GHZ states, then one only needs to change the circuit at the switch node.
A different example is the NV repeater chain, where one could replace the protocol module (currently either {\sc swap-asap} or {\sc nested-with-distill}).

\begin{figure*}[htb]
\centering
\includegraphics[width=\textwidth]{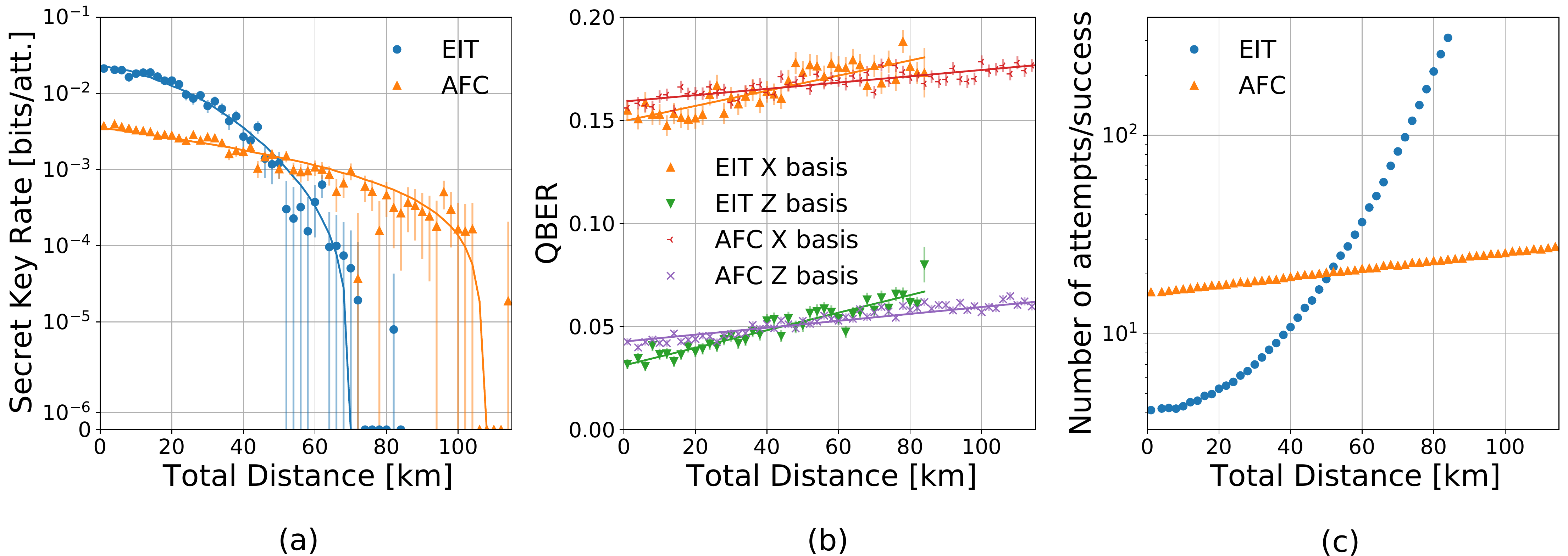}
\caption{\captionfigafcmemorycomparison \label{fig:afc-memory-comparison}}
\end{figure*}

\subsection{Fast and scalable quantum network simulation}
\label{sec:benchmarking}
NetSquid has been designed and optimised to meet several key performance criteria:
to be capable of accurate physical modelling, to be scalable to large networks,
and to be sufficiently fast to support multi-variate design analyses with adequate statistics.
While it is not always possible to jointly satisfy all the criteria for all use cases, NetSquid's design allows the user to prioritise them. 
We proceed to benchmark NetSquid to demonstrate its capabilities and unique strengths for quantum network simulation.

\subsubsection{Benchmarking of quantum computation}
\label{sec:benchmarking-qcomp}
To accurately model physical non-idealities, it is necessary to choose a representation for quantum states that allows a characterisation of general processes such as amplitude damping, general measurements, or arbitrary rotations.
NetSquid provides two representations, or ``formalisms'', that are capable of universal quantum computation: ket state vectors (KET) and density matrices (DM), both stored using dense arrays.
The resource requirements for storage in memory and the computation time associated with applying quantum operations both scale exponentially with the number of qubits.  
While the density matrix scales less favourably, $2^{2n}$ versus $2^{n}$ for $n$ qubits, its ability to represent mixed states makes it more versatile for specific applications.
Given the exponential scaling, these formalisms are most suitable for simulations in which a typical qubit lifetime involves only a limited number of (entangling) interactions.

When scaling to large network simulations it can happen that hundreds of qubits share the same entangled quantum state.
For such use cases, we need a quantum state representation that scales sub-exponentially in time and space.
NetSquid provides two such representations based on the stabiliser state formalism: ``stabiliser tableaus'' (STAB) and ``graph states with local Cliffords'' (GSLC)~\cite{aaronson2004improved,anders2006fast} that the user can select.
Stabiliser states are a subset of quantum states that are closed under the application of Clifford unitaries and single-qubit measurement in the computational basis.
In the context of simulations for quantum networks stabiliser states are particularly interesting because many network protocols consist of only Clifford operations and noise can be well approximated by stochastic application of Pauli gates.
For a theoretical comparison of the STAB and GSLC formalisms see Supplementary Note~\ref{appendix:netsquid}.

The repetitive nature of simulation runs due to the collection of statistics via random sampling allows NetSquid to take advantage of ``memoization'' for expensive quantum operations,
which is a form of caching that stores the outcome of expensive operations and returns them when the same input combinations reoccur to save computation time.
Specifically, the action of a quantum operator onto a quantum state for a specific set of qubit indices and other discrete parameters can be efficiently stored, for instance as a sparse matrix. 
Future matching operator actions can then be reduced to a fast lookup and application, avoiding several expensive computational steps -- see the Methods, section \nameref{sec:methods-qubits} for more details.

In the following we benchmark the performance of the available quantum state formalisms. For this, we first consider the generation of an $n$ qubit entangled GHZ state followed by a measurement of each qubit (see section \nameref{sec:methods-benchmarking} of the Methods).
For a baseline comparison with classical quantum computing simulators we also include the ProjectQ~\cite{steiger2018projectq} package for Python, which uses a quantum state representation equivalent to our ket vector.
We show the average computation time for a single run versus the number of qubits for the different quantum computation libraries in Figure~\ref{fig:qubit-benchmark}(a).
The exponential scaling of the universal formalisms in contrast to the stabiliser formalisms is clearly visible, with the density matrix formalism performing noticeably worse.
For the ket formalism we also show the effect of memoization, which gives a speed-up roughly between two and five.

\begin{figure*}[tb]
    \centering
        \includegraphics[width=1.0\textwidth]{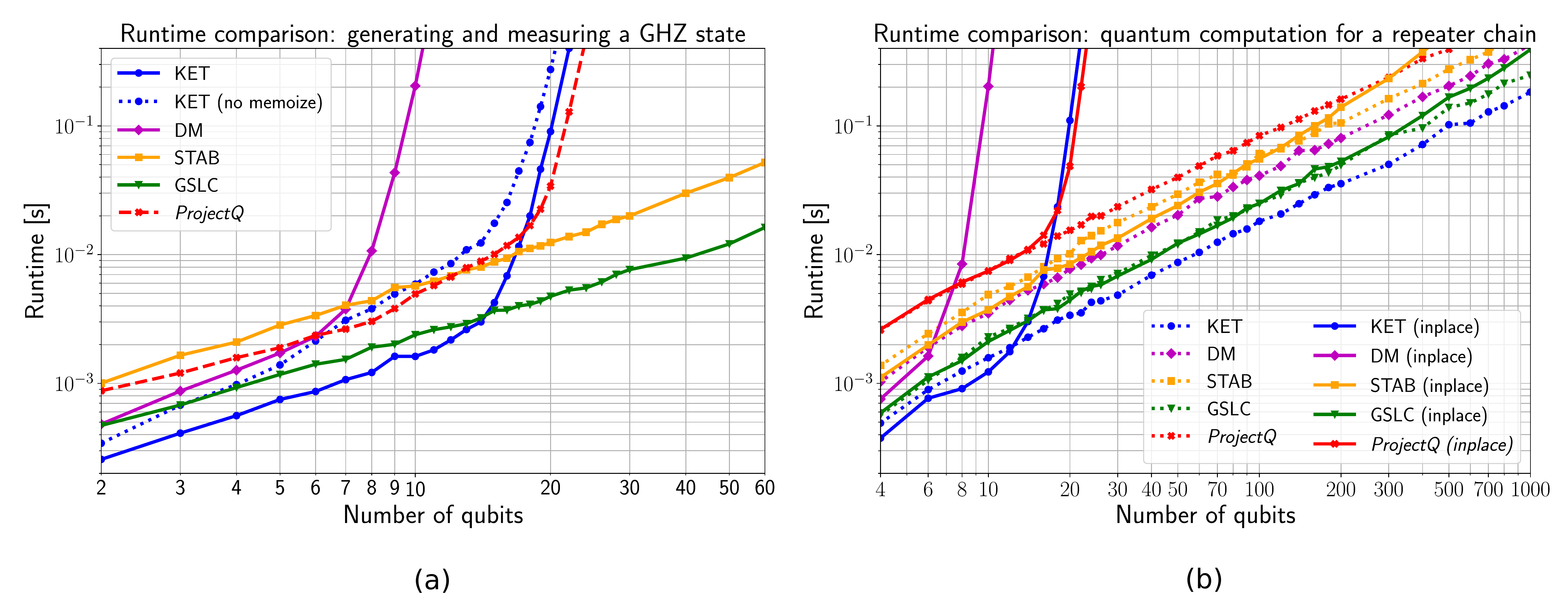}
    \caption{\captionqubitbenchmark \label{fig:qubit-benchmark}}
\end{figure*}

Let us next consider a more involved benchmarking use case: the quantum computation involved in simulating a repeater chain i.e.\ only the manipulation of qubits, postponing all other simulation aspects, such as event processing and component modelling, to the next section.
This benchmark involves the following steps: first the $N-1$ pairs of qubits along an $N$ node repeater chain are entangled, then each qubit experiences depolarising noise, and finally adjacent qubits on all but the end-nodes do an entanglement swap via a Bell state measurement (BSM).
If the measured qubits are split from their shared quantum states after the BSM, then the size of any state is limited to four qubits.

The average computation time for a single run versus the number of qubits in the chain are shown for the different quantum computation libraries in Figure~\ref{fig:qubit-benchmark}(b), where we have again included ProjectQ.
We observe that for the NetSquid formalisms (but not for ProjectQ) keeping qubits ``in-place'' after each measurement is more performant than ``splitting'' them below a certain threshold due to the extra overhead of doing the latter.
The ket vector formalism is seen to be the most efficient for this benchmarking use case if states are split after measurement.
When the measurement operations are performed in-place the GSLC formalism performs the best beyond 15 qubits.

\subsubsection{Benchmarking of event-driven simulations}
\label{sec:benchmarking-profiling}
As explained in the results section, a typical NetSquid simulation involves repeatedly sampling many independent runs.
As such NetSquid is ``embarrassingly parallelisable'': the reduction in runtime scales linearly with the number of processing cores available, assuming there is sufficient memory available.
Nonetheless, given the computational requirements associated with collecting sufficient statistics and analysing large parameter spaces it remains crucial to optimise the runtime performance per core.

Depending on the size of the network, the detail of the physical modelling, and the duration of the protocols under consideration, the number of events processed for a single simulation run can range anywhere from a few thousand to millions.
To efficiently process the dynamic scheduling and handling of events NetSquid uses the discrete-event simulation engine PyDynAA~\cite{de2013model} (see section \nameref{sec:methods-des} of the Methods).
NetSquid aims to schedule events as economically as possible, for instance by streamlining the flow of signals and messages between components using inter-connecting ports. 

To benchmark the performance of an event-driven simulation run in NetSquid we consider a simple network that extends the single repeater (without distillation) shown in Figure~\ref{fig:netsquid_distil_repeat} into an $N$ node chain -- see Supplementary Note~\ref{appendix:benchmarking} for further details on the simulation setup.
For the quantum computation we will use the ket vector formalism based on the benchmarking results from the previous section,
and split qubits from their quantum states after measurement to avoid an exponential scaling with the number of nodes.
In Figure~\ref{fig:repchain_benchmark} we show the average computation time for deterministically generating end-to-end entanglement versus the number of nodes in the chain.
Also shown is a relative breakdown in terms of the time spent in the NetSquid sub-packages involved, as well as the PyDynAA and NumPy packages.
We observe that the biggest contribution to the simulation runtime is the components sub-package, which accounts for 30\% of the total at 1000 nodes.
The relative time spent in each of the NetSquid sub-packages, as well as NumPy and PyDynAA, is seen to remain constant with the number of nodes.
The total runtime of each of the NetSquid sub-packages is the sum of many small contributions, with the costliest function for the components sub-package for a 1000 node chain, for example, contributing only 7\% to the total.

Extending this benchmark simulation with more detailed physical modelling may shift the relative runtime distribution and impact the overall performance.
For example, more time may be spent in calls to the ``components'' and ``components.models'' sub-packages, additional complexity can increase the volume of events processed by the ``pydynaa'' engine,
and extra quantum characteristics can lead to larger quantum states.
In case of the latter, however, the effective splitting of quantum states can still allow such networks to scale if independence among physical elements can be preserved.

\begin{figure}[htb]
\centering
\includegraphics[width=0.5\textwidth]{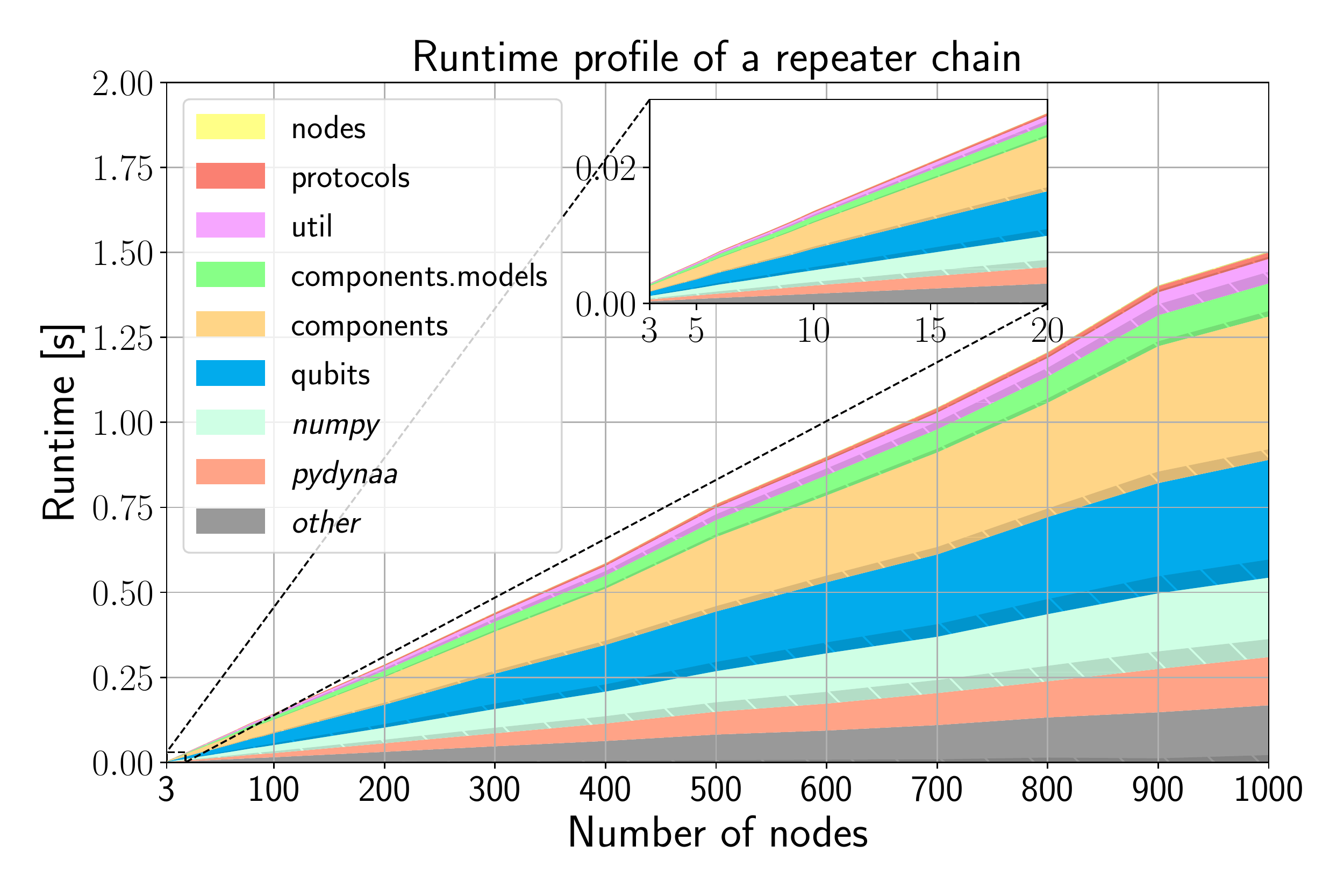}
    \caption{\captionrepchainbenchmark\label{fig:repchain_benchmark}}
\end{figure}

\subsection{Comparison with other quantum network simulators}
Let us compare NetSquid to other existing quantum network simulators. 
First, SimulaQron \cite{dahlberg2018simulaqron} and QuNetSim \cite{diadamo2020qunetsim} are two simulators that do not aim at realistic physical models of channels and devices, or timing control. 
Instead, SimulaQron's main purpose is application development. It is meant to be run in a distributed fashion on physically-distinct classical computers.
QuNetSim focuses on simplifying the development and implementation of quantum network protocols.

In contrast with SimulaQron and QuNetSim, the simulator SQUANCH \cite{bartlett2018distributed} allows for quantum network simulation with configurable error models at the physical layer.
However, SQUANCH, similar to SimulaQron and QuNetSim, does not use a simulation engine that can accurately track time. Accurate tracking is crucial for e.g. studying time-dependent noise such as memory decoherence.

Other than NetSquid, there now exist three discrete-event quantum simulators: the QuISP \cite{matsuo2019simulation}, qkdX \cite{mailloux2014modeling} and SeQUeNCe \cite{wu2020sequence} simulators. With these simulators it is possible to accurately characterise complex timing behaviour, however they differ in goals and scope. 
Similarly to NetSquid, QuISP aims to support the investigation of large networks that consist of too many entangled qubits for full quantum-state tracking.
In contrast to NetSquid, which achieves this by managing the size of the state space, and providing the stabiliser representation as one of its quantum state formalisms, QuISP's approach is to track an error model of the qubits in a network instead of their quantum state.
qkdX, on the other hand, captures the physics more closely through models of the quantum devices but is restricted to the simulation of quantum key distribution protocols. 
Lastly, SeQUeNCe, similar to NetSquid, aims at simulation at the level of hardware, control plane or application. 
It has a fixed control layer consisting of reprogrammable modules.
In contrast, NetSquid's modularity is not tied to a particular network stack design.
Furthermore, it is unclear to us how performant SeQUeNCe's quantum simulation engine is: currently, at most a 9-node network has been simulated, whereas NetSquid's flexibility to choose a quantum state representation enables scalability to simulation of networks of up to 1000 nodes.

\subsection{Conclusions}

In this work we have presented our design of a modular software framework for simulating scalable quantum networks and accurately modelling the non-idealities of real world physical hardware, providing us with a design tool for future quantum networks.
We have showcased its power and also its limitations via example use cases.
Let us recap NetSquid's main features.

First, NetSquid allows the modelling of any physical device in the network that can be mapped to qubits.
To demonstrate this we studied two use cases involving nitrogen-vacancy centres in diamond as well as atomic-ensemble based memories. 

Second, NetSquid is entirely modular,
allowing users to set up large scale simulations of complicated networks and to explore variations in the network design; for example, by comparing how different hardware platforms perform in an otherwise identical network layout. 
Moreover, this modularity makes it possible to explore different control plane protocols for quantum networks in a way that is essentially identical to how such protocols would be executed in the real world.
Control programs can be run on any simulated network node, exchanging classical and quantum communication with other nodes as dictated by the protocol.
That allows users to investigate the intricate interplay between control plane protocols and the physical devices dictating the performance of the combined quantum network system.
As an example, we studied the control plane of a quantum network switch.
NetSquid has also already found use in exploring the interplay between the control plane and the physical layer in~\cite{dahlberg2019linklayer,lee2020quantum, kozlowski2020designing}.

Finally, to allow large scale simulations, the quantum computation library used by NetSquid has been designed to manage the dynamic lifetimes of many qubits across a network.
It offers a seamless choice of quantum state representations to support different modelling use cases, allowing both a fully detailed simulation in terms of wave functions or density matrices, or simplified ones using certain stabiliser formalisms.
As an example use case, we explored the simulation run-time of a repeater chain with up to one thousand nodes. 

In light of the results we have presented, we see a clear application for NetSquid in the broad context of communication networks.
It can be used to predict performance with accurate models, to study the stability of large networks, to validate protocol designs, to guide experiment, etc.
While we have only touched upon it in our discussion of performance benchmarks, NetSquid would also lend itself well to the study of modular quantum computing architectures, where the timing of control plays a crucial role in studying their scalability.
For instance, it might be used to validate the microarchitecture of distributed quantum computers or more generally to simulate different components in modular architectures.

\section{Methods}\label{sec:methods}

\subsection{Design and functionality of NetSquid}
\label{sec:methods-netsquid}
The NetSquid simulator is available as a software package for the Python 3 programming language.
It consists of the sub-packages ``qubits'', ``components'', ``models'', ``nodes'', ``protocols'' and ``util'', which are shown stacked in Figure~\ref{fig:netsquid-arch}.
NetSquid depends on the PyDynAA software library to provide its discrete-event simulation engine~\cite{de2013model}.
Under the hood speed critical routines and classes are written in Cython~\cite{behnel2011cython} to give C-like performance,
including its interfaces to both PyDynAA and the scientific computation packages NumPy and SciPy.
In the following subsections we highlight some of the main design features and functionality of NetSquid;
for a more detailed presentation see Supplementary Note~\ref{appendix:netsquid}.

\begin{figure}[htb]
    \centering
    \includegraphics[width=0.5\textwidth]{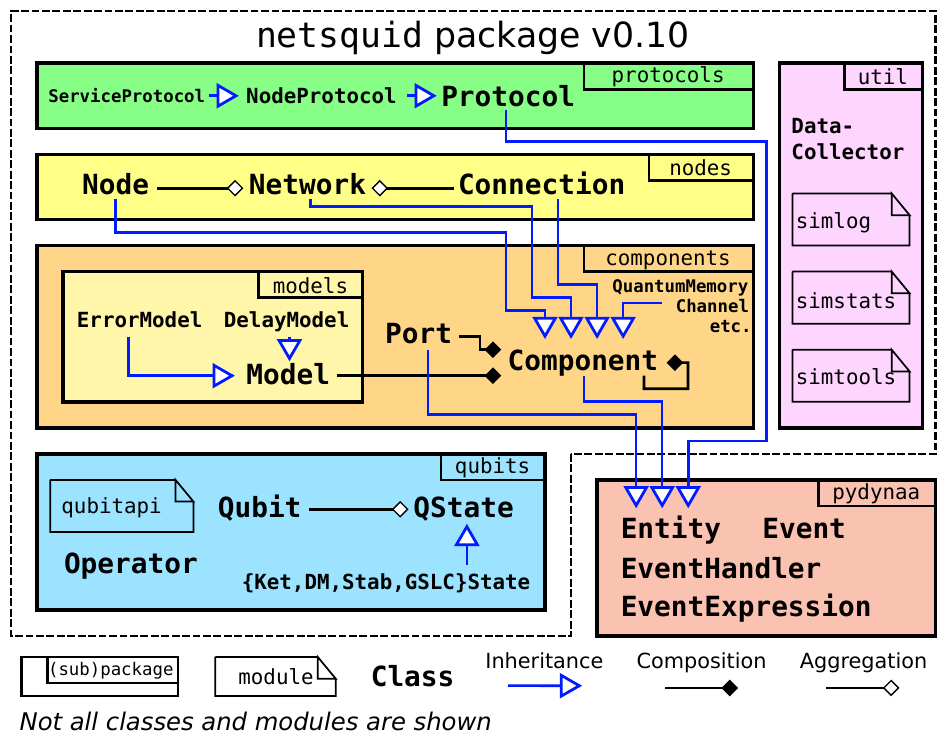}
    \caption{\captionfignetsquidarch}
    \label{fig:netsquid-arch}
\end{figure}

\subsubsection{Discrete event simulation}  
\label{sec:methods-des}
The PyDynAA package provides a fast, powerful, and lightweight discrete-event simulation engine.
It is a C\texttt{++} port of the core engine layer from the DynAA simulation framework~\cite{de2013model}, 
with bindings added for the Python and Cython languages.
DynAA defines a concise set of classes and concepts for modelling event-driven simulations.
The simulation engine manages a timeline of ``events'', which can only be manipulated by objects that are sub-classes of the ``entity'' base class.
Simulation entities can dynamically schedule events on the timeline and react to events by registering an ``event handler'' object to wait for event(s) with a specified type, source entity, or identifier to be triggered.

To deal with the timing complexities encountered in NetSquid simulations, an ``event expression'' class was introduced to PyDynAA to allow entities to also wait on logical combinations of events to occur.
Atomic event expressions, which describe regular wait conditions for standard events, can be combined to form composite expressions using logical ``and'' and ``or'' operators to any depth.
This feature has been used extensively in NetSquid to model both the internal behaviour of hardware components, as well as for programming network protocols.

\subsubsection{Qubits and quantum computation} 
\label{sec:methods-qubits}
The qubits sub-package of NetSquid defines the ``qubit'' object that is used to track the flow of quantum information.
Qubits internally share quantum state (``QState'') objects, which grow and shrink in size as qubits interact or are measured.
The ``QState'' class is an interface that is implemented by a range of different formalisms, as presented in section \nameref{sec:benchmarking-qcomp} of the Results and Discussion.
Via the qubit-centric API, which provides functions to directly manipulate qubits without knowledge of their shared quantum states, users can program simulations in a formalism agnostic way.
Functionality is also provided to automatically convert between quantum states that use different formalisms,
and to sample from a distribution of states, which is useful for instance for pure state formalisms.

The ket and density matrix formalisms use dense arrays (vectors or matrices, respectively) to represent quantum states.
Applying a $k$ qubit operator to an $n$ qubit ket vector state generally involves the computationally expensive task of performing $2^{n-k}$ matrix multiplications on $2^k$ temporary sub-vectors and aggregating the result (only in special cases can this be done in-place)~\cite{DERAEDT2007121,haner2017projectq}.
The analogous application of an operator to a density matrix is more expensive due to the extra dimension involved.
However, as discussed in section \nameref{sec:benchmarking} of the Results and Discussion, the repetitive nature of NetSquid simulations allows us to take advantage of operators frequently being applied to the same qubit indices for states of a given size.
For these operators, we compute a $2^n\times 2^n$ dimensional sparse matrix representation of the $k$ qubit operator via tensor products with the identity and memoize this result for the specific indices and size.
When the memoization is applicable the computational cost of applying a quantum operator can then be reduced to just sparse matrix multiplication onto a dense vector or matrix.
Memoization is similarly applicable to general Clifford operators in the stabiliser tableau formalism.
To use memoization on operators that depend on a continuous parameter, such as arbitrary rotations,
the parameter can be discretised i.e. rounded to some limited precision.

\subsubsection{Physical modelling of network components} 
All physical devices in a quantum network are modelled by a ``component'' object, and are thereby also all simulation entities, as shown in Figure~\ref{fig:netsquid-arch}.
Components can be composed of subcomponents, which makes setting up networks in NetSquid modular.
The network itself, for instance, can be modelled as a composite component containing ``node'' and ``connection'' components;
these composite components can in turn contain components such as quantum memories, quantum and classical channels, quantum sources, etc., as illustrated in Figure~\ref{fig:netsquid_distil_repeat}.
The physical behaviour of a component is described by composing it of ``models'', which can specify physical characteristics such as transmission delays or noise such as photon loss or decoherence.
Communication between components is facilitated by their ``ports'', which can be connected together to automatically pass on messages.

NetSquid also allows precise modelling of quantum computation capable devices. For this it provides the ``quantum processor'' component, a subclass of the quantum memory.
This component is capable of executing ``quantum programs'' i.e.\ sequences of ``instructions'' that describe operations such as quantum gates and measurements or physical processes such as photon emission.
Quantum programs fully support conditional and iterative statements, as well as parallelisation if the modelled device supports it.
When a program is executed its instructions are mapped to the physical instructions on the processor, which model the physical duration and errors associated to carrying out the operation.
A physical instruction can be assigned to all memory positions or only to a specific position, as well as directionally between specific memory positions in the case of multi-qubit instructions.

\subsubsection{Asynchronous framework for programming protocols} 
NetSquid provides a ``protocol'' class to describe the network protocols and classical control plane logic running on a quantum network.
Similarly to the component class, a protocol is a simulation entity and can thereby directly interact with the event timeline.
Protocols can be nested inside other protocols and may describe both local or remote behaviour across a network.
The ``node protocol'' subclass is specifically restricted to only operating locally on a single node.
Inter-protocol communication is possible via a signalling mechanism and a request and response interface defined by the ``service protocol'' class.
Protocols can be programmed using both the standard callback functionality of PyDynAA and a tailored asynchronous framework that allows the suspension of a routine conditioned on an ``event expression'';
for example, to wait for input to arrive on a port, a quantum program to finish, or to pause for a fixed duration.

The ``util'' sub-package shown in Figure~\ref{fig:netsquid-arch} provides a range of utilities for running, recording and interacting with simulations.
Functions to control the simulation are defined in the ``simtools'' module, including functions for inspecting and diagnosing the timeline. 
A ``data collector'' class supports the event-driven collection of data during a simulation,
which has priority over other event handlers to react to events.
The ``simstats'' module is responsible for collecting a range of statistics during a simulation run, such as the number of events and callbacks processed, the maximum and average size of manipulated quantum states, and a count of all the quantum operations performed.
Finally, the ``simlog'' module allows fine grained logging of the various modules for debugging purposes.

\subsubsection{Benchmarking} 
\label{sec:methods-benchmarking}
To perform the benchmarking described in section \nameref{sec:benchmarking} of the Results and Discussion we used computing nodes with two 2.6 GHz Intel Xeon E5-2690 v3 (Haswell) 12 core processors and 64 GB of memory.
Because each process only requires a single core, care was taken to ensure sufficient cores and memory were available when running jobs in parallel.
The computation time of a process is the arithmetic average of a number of successive iterations; to avoid fluctuations due to interfering CPU processes the reported time is a minimum of five such repeated averages.
To perform the simulation profiling
the Cython extension modules of both NetSquid and PyDynAA were compiled with profiling on,
which adds some runtime overhead.
Version 0.10.0 and 0.3.5 of NetSquid and PyDynAA were benchmarked.
We benchmarked against ProjectQ version 0.4.2 using its ``MainEngine'' backend.
See Supplementary Note~\ref{appendix:benchmarking} for further details.

Using the same machine, simulations for Figure~\ref{fig:repchain-A}(b-c) were run, which took almost 260 core hours wallclock time in total, while simulations for Figure~\ref{fig:afc-memory-comparison} took roughly 625 core hours.
For Figure~\ref{fig:switch} ($\approx$10 hours in total), Figure~\ref{fig:repchain-A}(a) ($\approx$90 minutes) and Figure~\ref{fig:sensitivity-analysis} ($\approx$30 minutes), a single core Intel Xeon Gold 6230 processor (3.9GHz) with 192 GB RAM was used.

\subsection{Implementing a processing-node repeater chain in NetSquid}
\label{sec:appprocnode}
Here, we explain the details of the most complex of our three use cases, namely the repeater chain of Nitrogen-Vacancy-based processing nodes
from section~\nameref{sec:nv-sensitivity-analysis} of the Results and Discussion (see Supplementary Notes~\ref{app:switch} and \ref{app:atomic-ensembles} for details on the other two use cases).
We first describe how we modelled the NV hardware, followed by the repeater protocols used.
With regard to the physical modelling, let us emphasise that this is well established (see e.g. \cite{rozpedek2018near-term}); the main goal here is to explain how we used this model in a NetSquid implementation.

In our simulations the following NetSquid components model the physical repeater chain:
``nodes'', each holding a single ``quantum processor'' modelling the NV centre, and ``classical channels'' that connect adjacent nodes and are modelled as fibres with a constant transmission time.
We choose equal spacing between the nodes.
If we were to simulate individual attempts at entanglement generation, we would also need components for transmitting and detecting qubits such as was used in previous NetSquid simulations of NV centres~\cite{dahlberg2019linklayer}.
However, in order to speed up simulations we insert the entangled state between remote NVs using a model.
We designed two types of protocols to run on each node of this network
that differ in whether they implement a scheme with or without distillation.

In the remainder of this section, we describe the components modelling.
More detailed descriptions of the hardware parameters and their values used in our simulation can be found in Supplementary Note~\ref{app:nv-physical-modelling}.

\subsubsection{Modelling a nitrogen-vacancy centre in diamond}
\label{sec:nv-qprocessor}
In NetSquid, the NV centre is modelled by a quantum processor component, which holds a single communication qubit (electronic spin-1 system) and multiple storage qubits (${}^{13}$C nuclear spins).
The decay of the state held by a communication qubit or storage qubit is implemented using a noise model, which is based on the relaxation time $T_1$ and the dephasing time $T_2$.
If a spin is acted upon after having been idle for time $\Delta t$, then to its state $\rho$ we first apply a quantum channel
\[
	\rho \mapsto E_0 \rho E_0^{\dagger} + E_1 \rho E_1^{\dagger}
	\]
where
\[
	E_0 = \dyad{0} + \sqrt{1 - p} \dyad{1}, E_1 = \sqrt{p} \dyad{0}{1}
	\]
and $p=1 - e^{-\Delta t / T_1}$.
Subsequently, we apply a dephasing channel
\def\Ndeph{\mathcal{N}^{\text{deph}}}
\begin{equation}
\label{eq:dephasing-channel}
\Ndeph_p : \rho \mapsto (1 - p) \rho + p Z\rho Z
\end{equation}
where $Z = \dyad{0} - \dyad{1}$ and the dephasing probability equals
\[
p=\frac{1}{2} \left(1 - e^{-\Delta t / T_2}\cdot e^{\Delta t/(2T_1)}\right)
.
\]
The electron and nuclear spins have different $T_1$ and $T_2$ times.

We allow the quantum processor to perform the following operations on the electron spin:
initialisation (setting the state to $\ket{0}$), readout (measurement in the $\{\ket{0}, \ket{1}\}$ basis) and arbitrary single-qubit rotation. 
In particular, the latter includes Pauli rotations
	\begin{equation}
		R_P(\theta) = \cos(\theta / 2) \unit_2 - i \sin(\theta / 2) P
	\end{equation}
		where $\theta$ is the rotation angle, $P\in \{X, Y, Z\}$ and
    \mbox{$\unit_2 = \dyad{0} + \dyad{1}$},
    \mbox{$X = |0\rangle\langle 1| + |1\rangle\langle 0|$},
    \mbox{$Y = -i|0\rangle\langle 1| + i |1\rangle\langle 0|$} and
    \mbox{$Z = \dyad{0} - \dyad{1}$}
	are the single-qubit Pauli operators.	

For the nuclear spin, we have only initialisation and rotations $R_Z(\theta)$ for arbitrary rotation angle $\theta$.
In addition, we allow the two-qubit controlled-$R_X(\pm\theta)$ gate between an electron ($e$) and a nuclear ($n$) spin:
\[
	\dyad{0}_e \otimes R_X(\theta)_n + \dyad{1}_e \otimes R_X(-\theta)_n
	.
\]

We model each noisy operation $O_{\text{noisy}}$ as the perfect operation $O_{\text{perfect}}$ followed by a noise channel $\mathcal{N}$:
\[
O_{\text{noisy}} = \mathcal{N} \circ O_{\text{perfect}}
.
\]
If $O$ is a single-qubit rotation, then $\mathcal{N}$ is the depolarising channel:
\def\Ndepol{\mathcal{N}^{\text{depol}}}
\begin{equation}
\label{eq:depolarizing-channel}
\Ndepol_p : 
	\rho \mapsto \left(1 - \frac{3p}{4}\right) \rho + \frac{p}{4} \left(X\rho X + Y\rho Y + Z\rho Z\right)
\end{equation}
with parameter $p = 4(1 - F)/3$ with $F$ the fidelity of the operation.

If $O$ is single-qubit initialisation, $\mathcal{N} = \Ndepol_p$ with parameter \mbox{$p=2(1 - F)$}.
The noise map of the controlled-$R_X$ gate is an identical single-qubit depolarising channel on both involved qubits, i.e.\ $\mathcal{N} = \mathcal{N}^{\text{depol}}_p \otimes \mathcal{N}^{\text{depol}}_p$.

Finally, we model electron spin readout by a POVM measurement with the Kraus operators
\begin{equation}
M_0 = 
\begin{pmatrix}
\sqrt{f_0} & 0\\
0 & \sqrt{1 - f_1}
\end{pmatrix}
,\quad
M_1 =
\begin{pmatrix}
\sqrt{1 - f_0} & 0\\
0 & \sqrt{f_1}
\end{pmatrix}
	\label{eq:nv-readout}
\end{equation}
where $1-f_0$ ($1-f_1$) is the probability that a measurement outcome $0$ ($1$) is flipped to $1$ ($0$).

\subsubsection{Simulation speedup via state insertion}
\label{sec:generation-magic}

For generating entanglement between the electron spins of two remote NVs, we simulate a scheme based on single-photon detection, following its experimental implementation in~\cite{humphreys2018deterministic}.
NetSquid was used previously to simulate each generation attempt of this scheme, which includes the emission of a single photon by each NV, the transmission of the photons to the midpoint through a noisy and lossy channel, the application of imperfect measurement operators at the midpoint, and the transmission of the measurement outcome back to the two involved nodes~\cite{dahlberg2019linklayer}.
For larger internode distances, simulating each attempt requires unfeasibly long simulation times due to the exponential decrease in attempt success rate.
To speed up our simulations in the examples studied here, we generate the produced state between adjacent nodes from a model which has shown good agreement with experimental results \cite{humphreys2018deterministic}.
This procedure includes a random duration and noise induced on the storage qubits, as we describe below.

Let us define
\begin{eqnarray*}
	p_{00} &=& \alpha^2 [
	2\pdet (1 - \pdet) (1 - \pdc)
	\\&&
	\qquad
	+
	2\pdc  (1 - \pdc) (1 - \pdet)^2 
	\\&&
	\qquad
	+
	\pdet^2 (1 - \pdc) \cdot \frac{1}{2} (1 + V)
	]
	\\
	p_{10} &=& \alpha (1 - \alpha) \cdot
	[
	(1 - \pdc) \cdot \pdet
	\\&&
	+
	2 \pdc (1 - \pdc) (1 - \pdet)
	]
	\\
p_{01} &=& p_{01}\\
	p_{11} &=& (1 - \alpha)^2 \cdot \pdc
\end{eqnarray*}
where $\pdet$ is the detection probability, $\pdc$ the dark count probability, $V$ denotes photon indistinguishability and $\alpha$ is the bright-state parameter (see Supplementary Note~\ref{app:nv-physical-modelling} for parameter descriptions).
We follow the model of the produced entangled state from the experimental work of ~\cite{humphreys2018deterministic}, whose setup consists of a beam splitter with two detectors located between the two adjacent nodes.
In their model, the unnormalised state is given by
\[
\rho =
\begin{pmatrix}
    p_{00} & 0 & 0 & 0\\
0
&
p_{01}
&
\pm \sqrt{V p_{01} p_{10}}
&
0\\
0
&
\pm \sqrt{V p_{01} p_{10}}
&
p_{10}
&
0\\
    0 & 0 & 0 & p_{11}
\end{pmatrix}
\]
where $\pm$ denotes which of the two detectors detected a photon (each occurring with probability $\frac{1}{2}$).
We also follow the model of~\cite{humphreys2018deterministic} for double-excitation noise and optical phase uncertainty, by applying a dephasing channel to both qubits with parameter $p=\pde/2$, followed by a dephasing channel of one of the qubits, respectively.

The success probability of a single attempt is
\[
	p_{\text{succ}} = p_{00} + p_{01} + p_{10} + p_{11}
.
\]

The time elapsed until the fresh state is put on the electron spins is $ (k - 1) \cdot \Delta t$ with \mbox{$\Delta t := (t_{\text{emission}} + L/c)$}, where $t_{\text{emission}}$ is the delay until the NV centre emits a photon, $L$ the internode distance and $c$ the speed of light in fibre.
Here, $k$ is the number of attempts up to and including successful entanglement generation and is computed by drawing a random sample from the geometric distribution $\Pr(k) = p_{\text{succ}} \cdot (1 - p_{\text{succ}})^{k-1}$.
After the successful generation, we wait for another time $\Delta t$ to mimic the photon travel delay and midpoint heralding message delay.

Every entanglement generation attempt induces dephasing noise on the storage qubits in the same NV system.
We apply the dephasing channel (eq.~\eqref{eq:dephasing-channel}) at the end of the successful entanglement generation, where the accumulated dephasing probability is
\begin{equation}
\label{eq:n-fold-application-nuclear-dephasing}
	\frac{1 - (1 - 2p_{\text{single}})^k}{2}
\end{equation}
where $p_{\text{single}}$ is the single-attempt dephasing probability (see eq.~\eqref{eq:nuclear-dephasing-single-attempt-number-of-attempts} in Supplementary Note \ref{app:nv-physical-modelling}).

\subsubsection{How we choose improved hardware parameters}
\label{sec:improvement-factor}
Here, we explain how we choose `improved' hardware parameters.
Let us emphasise that this choice is independent of the setup of our NetSquid simulations and only serves the purpose of showcasing that NetSquid can assess the performance of hardware with a given quality.

By `near-term' hardware, we mean values for the above defined parameters as expected to be achieved in the near future by NV hardware.
If we say that an error probability is improved by an improvement factor $k$, we mean that its corresponding no-error probability equals $\sqrt[\leftroot{-2}\uproot{2}k]{p_{\text{ne}}}$, where $p_{\text{ne}}$ is the no-error probability of the near-term hardware.
For example, visibility $V$ is improved as 
$\sqrt[\leftroot{-2}\uproot{2}k]{V}$
while the probability of dephasing $p$ of a gate is improved as 
$1 - \sqrt[\leftroot{-2}\uproot{2}k]{1 - p}$.
A factor $k=1$ thus corresponds to `near-term' hardware.
By `uniform hardware improvement by $k$', we mean that all hardware parameters are improved by a factor $k$.
We do not improve the duration of local operations or the fibre attenuation.
The near-term parameter values as well as the individual improvement functions for each parameter can be found in Supplementary Note~\ref{app:nv-physical-modelling}.

\subsubsection{NV repeater chain protocols}
\label{sec:nv-protocols}
For the NV repeater chain, we simulated two protocols: \swapasap~and \withdistill.
Both protocols are composed of five building blocks: \entgen, \move, \unmove, \distill~and \swap.
By \entgen, we denote the simulation of the entanglement generation protocol based on the description in the previous subsection: two nodes wait until a classical message signals that their respective electron spins hold an entangled pair.
In reality, such functionality would be achieved by a link layer protocol~\cite{dahlberg2019linklayer}.
\move~is the mapping of the electron spin state onto a free nuclear spin, and \unmove~is the reverse operation.
The \distill~block implements entanglement distillation between two remote NVs for probabilistically improving the quality of entanglement between two nuclear spins (one at each NV), at the cost of reading out entanglement between the two electron spins.
It consists of local operations followed by classical communication to determine whether distillation succeeded.
The entanglement swap (\swap) converts two short-distance entangled qubit pairs $A-M$ and $M-B$ into a single long-distance one $A-B$, where $A, B$ and $M$ are nodes.
It consists of local operations at $M$, including spin readout, and communicating the measurement outcomes to $A$ and $B$, followed by $A$ and $B$ updating their knowledge of the precise state $A-B$ they hold in the perfect case.
We opt for such tracking as opposed to applying a correction operator to bring $A-B$ back to a canonical state since the correction operator generally cannot be applied to the nuclear spins directly.
Details of the tracking are given in Supplementary Note~\ref{app:nv-pauli-frame}.
The circuit implementations for the building blocks, ``quantum programs" in NetSquid, are given in Supplementary Note~\ref{app:nv-protocols}.

Let us explain the \swapasap~and \withdistill~protocols in spirit; the exact protocols run asynchronously on each node and can be found in Supplementary Note~\ref{app:nv-protocols}.
In the \swapasap~protocol, a repeater node performs \entgen~with both its neighbours, followed by \swap~as soon as it holds the two entangled pairs.
Next, \withdistill~is a nested protocol on $2^n + 1$ nodes (integer $n\geq 0$) with distillation at each nesting level which is based on the BDCZ protocol~\cite{briegel1998quantum}.
For nesting level $n=0$, there are no repeaters and the two nodes only perform \entgen~once.
For nesting level $n>0$, the chain is divided into a left part and a right part of $2^{n-1} + 1$ nodes, and the middle node (included in both parts) in the chain generates twice an entangled pair with the left end node following the $(n-1)$-level protocol; \move~is applied in between to free the electron spin.
Subsequently, \distill~is performed with the two pairs as input (restart if distillation fails), after which the same procedure is performed on the right.
Once the right part has finished, the middle node performs \swap~to connect the end nodes.
If needed, \move~and \unmove~are applied prior to \distill~and \swap~in order achieve the desired configuration of qubits in the quantum processor, e.g. for \distill~to ensure that the two involved NVs hold an electron-electron and nuclear-nuclear pair of qubits, instead of electron-nuclear for both entangled pairs.

\section{Data availability}
The data presented in this paper have been made available at \url{https://doi.org/10.34894/URV169} \cite{netsquidpaperdata}.

\section{Code availability}
The NetSquid-based simulation code that was used for the simulations in this paper has been made available at \url{https://doi.org/10.34894/DU3FTS} \cite{netsquidpapercode}.

\section*{Acknowledgements} 
This work was supported by the Dutch Research Cooperation Funds (SMO), 
the European Research Council through a Starting Grant (S.W.), the QIA project (funded by European Union's Horizon 2020, Grant Agreement No. 820445) and the Netherlands Organisation for Scientific Research (NWO/OCW), as part of the Quantum Software Consortium program (project number 024.003.037/3368). 
The authors would like to thank Francisco Ferreira da Silva, Wojciech Kozlowski and Gayane Vardoyan for critical reading of the manuscript.
The authors would like to thank Gustavo Amaral, Guus Avis, Conor Bradley, Chris Elenbaas, Francisco Ferreira da Silva, Sophie Hermans, Roeland ter Hoeven, Hana Jirovsk\'a, Wojciech Kozlowski, Matteo Pompili, Arian Stolk and Gayane Vardoyan for useful discussions.

\section*{Author Contributions}  
T.C. realised the NV repeater chain and the quantum switch simulations.
R.K., L.W. realised the benchmarking simulations.
D.M., J.R. realised the atomic ensembles simulations.
R.K. and J.O. designed NetSquid's software architecture and R.K. led its software development.
T.C, A.D, R.K., D.M., L.N., J.O., M.P., F.R, J.R., M.S., A.T., L.W., and S.W designed use case driven architectures, and contributed to the development of NetSquid and the modelling libraries used in the simulations.
W.J., D.P., A.T. contributed to the optimal execution of simulations on computing clusters.
T.C., D.E., R.K., D.M. and S.W. wrote the manuscript.
All authors revised the manuscript.
D.E. and S.W. conceived and supervised the project.

\section*{Competing Interests statement}
The authors declare no competing interests.

% code to add heading 'References' to references
\def\bibsection{}
\section*{References}
\bigbreak

\bibliographystyle{ieeetr}
\bibliography{bibliography}

\begin{thebibliography}{100}

\bibitem{vanmeter2016scalable}
R.~{Van Meter} and S.~J. {Devitt}, ``The path to scalable distributed quantum
  computing,'' {\em Computer}, vol.~49, no.~9, pp.~31--42, 2016.

\bibitem{lekitsch2017blueprint}
B.~Lekitsch, S.~Weidt, A.~G. Fowler, K.~M{\o}lmer, S.~J. Devitt, C.~Wunderlich,
  and W.~K. Hensinger, ``Blueprint for a microwave trapped ion quantum
  computer,'' {\em Science Advances}, vol.~3, no.~2, p.~e1601540, 2017.

\bibitem{monroe2014largescale}
C.~Monroe, R.~Raussendorf, A.~Ruthven, K.~R. Brown, P.~Maunz, L.-M. Duan, and
  J.~Kim, ``Large-scale modular quantum-computer architecture with atomic
  memory and photonic interconnects,'' {\em Phys. Rev. A}, vol.~89, p.~022317,
  Feb 2014.

\bibitem{stephens2008deterministic}
A.~M. Stephens, Z.~W. Evans, S.~J. Devitt, A.~D. Greentree, A.~G. Fowler, W.~J.
  Munro, J.~L. O'Brien, K.~Nemoto, and L.~C. Hollenberg, ``Deterministic
  optical quantum computer using photonic modules,'' {\em Physical Review A},
  vol.~78, no.~3, p.~032318, 2008.

\bibitem{wehner2018quantum}
S.~Wehner, D.~Elkouss, and R.~Hanson, ``Quantum internet: A vision for the road
  ahead,'' {\em Science}, vol.~362, no.~6412, 2018.

\bibitem{munro2015inside}
W.~J. Munro, K.~Azuma, K.~Tamaki, and K.~Nemoto, ``Inside quantum repeaters,''
  {\em {IEEE} Journal of Selected Topics in Quantum Electronics}, vol.~21,
  pp.~78--90, may 2015.

\bibitem{muralidharan2016optimal}
S.~Muralidharan, L.~Li, J.~Kim, N.~L{\"u}tkenhaus, M.~D. Lukin, and L.~Jiang,
  ``Optimal architectures for long distance quantum communication,'' {\em
  Scientific Reports}, vol.~6, pp.~20463 EP --, Feb 2016.
\newblock Article.

\bibitem{gisin2007quantum}
N.~Gisin and R.~Thew, ``Quantum communication,'' {\em Nature Photonics},
  vol.~1, pp.~165 EP --, Mar 2007.
\newblock Review Article.

\bibitem{briegel1998quantum}
H.-J. Briegel, W.~D\"ur, J.~I. Cirac, and P.~Zoller, ``Quantum repeaters: The
  role of imperfect local operations in quantum communication,'' {\em Phys.
  Rev. Lett.}, vol.~81, pp.~5932--5935, Dec 1998.

\bibitem{duer1999quantum}
W.~D\"ur, H.-J. Briegel, J.~I. Cirac, and P.~Zoller, ``Quantum repeaters based
  on entanglement purification,'' {\em Phys. Rev. A}, vol.~59, pp.~169--181,
  Jan 1999.

\bibitem{duan2001long}
L.-M. Duan, M.~D. Lukin, J.~I. Cirac, and P.~Zoller, ``Long-distance quantum
  communication with atomic ensembles and linear optics,'' {\em Nature},
  vol.~414, pp.~413 EP --, Nov 2001.
\newblock Article.

\bibitem{amirloo2014quantum}
J.~Amirloo, M.~Razavi, and A.~H. Majedi, ``Quantum key distribution over
  probabilistic quantum repeaters,'' {\em Phys. Rev. A}, vol.~82, p.~032304,
  Sep 2010.

\bibitem{asadi2018quantum}
F.~Kimiaee~Asadi, N.~Lauk, S.~Wein, N.~Sinclair, C.~O'Brien, and C.~Simon,
  ``Quantum repeaters with individual rare-earth ions at telecommunication
  wavelengths,'' {\em {Quantum}}, vol.~2, p.~93, Sept. 2018.

\bibitem{bernardes2011rate}
N.~K. Bernardes, L.~Praxmeyer, and P.~van Loock, ``Rate analysis for a hybrid
  quantum repeater,'' {\em Phys. Rev. A}, vol.~83, p.~012323, Jan 2011.

\bibitem{borregaard2015heralded}
J.~Borregaard, P.~K\'om\'ar, E.~M. Kessler, A.~S. S\o{}rensen, and M.~D. Lukin,
  ``Heralded quantum gates with integrated error detection in optical
  cavities,'' {\em Phys. Rev. Lett.}, vol.~114, p.~110502, Mar 2015.

\bibitem{bruschi2014repeat}
D.~E. Bruschi, T.~M. Barlow, M.~Razavi, and A.~Beige, ``Repeat-until-success
  quantum repeaters,'' {\em Phys. Rev. A}, vol.~90, p.~032306, Sep 2014.

\bibitem{chen2007fault}
Z.-B. Chen, B.~Zhao, Y.-A. Chen, J.~Schmiedmayer, and J.-W. Pan,
  ``Fault-tolerant quantum repeater with atomic ensembles and linear optics,''
  {\em Phys. Rev. A}, vol.~76, p.~022329, Aug 2007.

\bibitem{collins2007multiplexed}
O.~A. Collins, S.~D. Jenkins, A.~Kuzmich, and T.~A.~B. Kennedy, ``Multiplexed
  memory-insensitive quantum repeaters,'' {\em Phys. Rev. Lett.}, vol.~98,
  p.~060502, Feb 2007.

\bibitem{guha2015rate}
S.~Guha, H.~Krovi, C.~A. Fuchs, Z.~Dutton, J.~A. Slater, C.~Simon, and
  W.~Tittel, ``Rate-loss analysis of an efficient quantum repeater
  architecture,'' {\em Phys. Rev. A}, vol.~92, p.~022357, Aug 2015.

\bibitem{hartmann2007role}
L.~Hartmann, B.~Kraus, H.-J. Briegel, and W.~D\"ur, ``Role of memory errors in
  quantum repeaters,'' {\em Phys. Rev. A}, vol.~75, p.~032310, Mar 2007.

\bibitem{jiang2008quantum}
L.~Jiang, J.~M. Taylor, K.~Nemoto, W.~J. Munro, R.~Van~Meter, and M.~D. Lukin,
  ``Quantum repeater with encoding,'' {\em Phys. Rev. A}, vol.~79, p.~032325,
  Mar 2009.

\bibitem{nemoto2016photonic}
K.~Nemoto, M.~Trupke, S.~J. Devitt, B.~Scharfenberger, K.~Buczak,
  J.~Schmiedmayer, and W.~J. Munro, ``Photonic quantum networks formed from
  {N}{V}- centers,'' {\em Scientific Reports}, vol.~6, pp.~26284 EP --, May
  2016.
\newblock Article.

\bibitem{razavi2009quantum}
M.~Razavi, M.~Piani, and N.~L\"utkenhaus, ``Quantum repeaters with imperfect
  memories: Cost and scalability,'' {\em Phys. Rev. A}, vol.~80, p.~032301, Sep
  2009.

\bibitem{razavi2006long}
M.~Razavi and J.~H. Shapiro, ``Long-distance quantum communication with neutral
  atoms,'' {\em Phys. Rev. A}, vol.~73, p.~042303, Apr 2006.

\bibitem{simon2007quantum}
C.~Simon, H.~de~Riedmatten, M.~Afzelius, N.~Sangouard, H.~Zbinden, and
  N.~Gisin, ``Quantum repeaters with photon pair sources and multimode
  memories,'' {\em Phys. Rev. Lett.}, vol.~98, p.~190503, May 2007.

\bibitem{vinay2017practical}
S.~E. Vinay and P.~Kok, ``Practical repeaters for ultralong-distance quantum
  communication,'' {\em Phys. Rev. A}, vol.~95, p.~052336, May 2017.

\bibitem{wu2020nearterm}
Y.~Wu, J.~Liu, and C.~Simon, ``Near-term performance of quantum repeaters with
  imperfect ensemble-based quantum memories,'' {\em Phys. Rev. A}, vol.~101,
  p.~042301, Apr 2020.

\bibitem{sangouard2007long-distance}
N.~Sangouard, C.~Simon, J.~c.~v. Min\'a\ifmmode~\check{r}\else \v{r}\fi{},
  H.~Zbinden, H.~de~Riedmatten, and N.~Gisin, ``Long-distance entanglement
  distribution with single-photon sources,'' {\em Phys. Rev. A}, vol.~76,
  p.~050301, Nov 2007.

\bibitem{sangouard2008robust}
N.~Sangouard, C.~Simon, B.~Zhao, Y.-A. Chen, H.~de~Riedmatten, J.-W. Pan, and
  N.~Gisin, ``Robust and efficient quantum repeaters with atomic ensembles and
  linear optics,'' {\em Phys. Rev. A}, vol.~77, p.~062301, Jun 2008.

\bibitem{sangouard2009quantum}
N.~Sangouard, R.~Dubessy, and C.~Simon, ``Quantum repeaters based on single
  trapped ions,'' {\em Phys. Rev. A}, vol.~79, p.~042340, Apr 2009.

\bibitem{abruzzo2013quantum}
S.~Abruzzo, S.~Bratzik, N.~K. Bernardes, H.~Kampermann, P.~van Loock, and
  D.~Bru\ss{}, ``Quantum repeaters and quantum key distribution: Analysis of
  secret-key rates,'' {\em Phys. Rev. A}, vol.~87, p.~052315, May 2013.

\bibitem{brask2008memory}
J.~B. Brask and A.~S. S\o{}rensen, ``Memory imperfections in
  atomic-ensemble-based quantum repeaters,'' {\em Phys. Rev. A}, vol.~78,
  p.~012350, Jul 2008.

\bibitem{muralidharan2014ultrafast}
S.~Muralidharan, J.~Kim, N.~L\"utkenhaus, M.~D. Lukin, and L.~Jiang,
  ``Ultrafast and fault-tolerant quantum communication across long distances,''
  {\em Phys. Rev. Lett.}, vol.~112, p.~250501, Jun 2014.

\bibitem{pant2017ratedistance}
M.~Pant, H.~Krovi, D.~Englund, and S.~Guha, ``Rate-distance tradeoff and
  resource costs for all-optical quantum repeaters,'' {\em Phys. Rev. A},
  vol.~95, p.~012304, Jan 2017.

\bibitem{ladd2006hybrid}
T.~D. Ladd, P.~van Loock, K.~Nemoto, W.~J. Munro, and Y.~Yamamoto, ``Hybrid
  quantum repeater based on dispersive {CQED} interactions between matter
  qubits and bright coherent light,'' {\em New Journal of Physics}, vol.~8,
  pp.~184--184, sep 2006.

\bibitem{loock2006hybrid}
P.~van Loock, T.~D. Ladd, K.~Sanaka, F.~Yamaguchi, K.~Nemoto, W.~J. Munro, and
  Y.~Yamamoto, ``Hybrid quantum repeater using bright coherent light,'' {\em
  Phys. Rev. Lett.}, vol.~96, p.~240501, Jun 2006.

\bibitem{zwerger2017quantum}
M.~Zwerger, B.~Lanyon, T.~Northup, C.~Muschik, W.~D{\"u}r, and N.~Sangouard,
  ``Quantum repeaters based on trapped ions with decoherence-free subspace
  encoding,'' {\em Quantum Science and Technology}, vol.~2, no.~4, p.~044001,
  2017.

\bibitem{jiang2007fast}
L.~Jiang, J.~M. Taylor, and M.~D. Lukin, ``Fast and robust approach to
  long-distance quantum communication with atomic ensembles,'' {\em Phys. Rev.
  A}, vol.~76, p.~012301, Jul 2007.

\bibitem{dahlberg2019linklayer}
A.~Dahlberg, M.~Skrzypczyk, T.~Coopmans, L.~Wubben, F.~Rozp\k{e}dek,
  M.~Pompili, A.~Stolk, P.~Pawe\l{}czak, R.~Knegjens, J.~de~Oliveira~Filho,
  R.~Hanson, and S.~Wehner, ``A link layer protocol for quantum networks,'' in
  {\em Proceedings of the ACM Special Interest Group on Data Communication},
  SIGCOMM '19, (New York, NY, USA), pp.~159--173, Association for Computing
  Machinery, 2019.

\bibitem{vanmeter2012quantumnetworking}
R.~V. {Meter}, ``Quantum networking and internetworking,'' {\em IEEE Network},
  vol.~26, no.~4, pp.~59--64, 2012.

\bibitem{aparicio2011protocol}
L.~Aparicio, R.~Van~Meter, and H.~Esaki, ``Protocol design for quantum repeater
  networks,'' in {\em Proceedings of the 7th Asian Internet Engineering
  Conference}, AINTEC '11, (New York, NY, USA), pp.~73--80, Association for
  Computing Machinery, 2011.

\bibitem{vanmeter2013designing}
R.~V. Meter and J.~Touch, ``Designing quantum repeater networks,'' {\em IEEE
  Communications Magazine}, vol.~51, pp.~64--71, August 2013.

\bibitem{vanmeter2007system}
R.~V. Meter, T.~D. Ladd, W.~J. Munro, and K.~Nemoto, ``System design for a
  long-line quantum repeater,'' {\em IEEE/ACM Transactions on Networking},
  vol.~17, pp.~1002--1013, June 2009.

\bibitem{pirker2019stack}
A.~Pirker and W.~D\"ur, ``A quantum network stack and protocols for reliable
  entanglement-based networks,'' {\em New Journal of Physics}, vol.~21,
  p.~033003, mar 2019.

\bibitem{acin2007device}
A.~Ac\'{\i}n, N.~Brunner, N.~Gisin, S.~Massar, S.~Pironio, and V.~Scarani,
  ``Device-independent security of quantum cryptography against collective
  attacks,'' {\em Phys. Rev. Lett.}, vol.~98, p.~230501, Jun 2007.

\bibitem{branciard2012onesided}
C.~Branciard, E.~G. Cavalcanti, S.~P. Walborn, V.~Scarani, and H.~M. Wiseman,
  ``One-sided device-independent quantum key distribution: Security,
  feasibility, and the connection with steering,'' {\em Phys. Rev. A}, vol.~85,
  p.~010301, Jan 2012.

\bibitem{scarani2009securityqkd}
V.~Scarani, H.~Bechmann-Pasquinucci, N.~J. Cerf, M.~Du\ifmmode~\check{s}\else
  \v{s}\fi{}ek, N.~L\"utkenhaus, and M.~Peev, ``The security of practical
  quantum key distribution,'' {\em Rev. Mod. Phys.}, vol.~81, pp.~1301--1350,
  Sep 2009.

\bibitem{xu2020secure}
F.~Xu, X.~Ma, Q.~Zhang, H.-K. Lo, and J.-W. Pan, ``Secure quantum key
  distribution with realistic devices,'' {\em Rev. Mod. Phys.}, vol.~92,
  p.~025002, May 2020.

\bibitem{pirandola2019advances}
S.~Pirandola, U.~L. Andersen, L.~Banchi, M.~Berta, D.~Bunandar, R.~Colbeck,
  D.~Englund, T.~Gehring, C.~Lupo, C.~Ottaviani, {\em et~al.}, ``Advances in
  quantum cryptography,'' {\em Advances in Optics and Photonics}, vol.~12,
  no.~4, pp.~1012--1236, 2020.

\bibitem{barz2012demonstration}
S.~Barz, E.~Kashefi, A.~Broadbent, J.~F. Fitzsimons, A.~Zeilinger, and
  P.~Walther, ``Demonstration of blind quantum computing,'' {\em Science},
  vol.~335, no.~6066, pp.~303--308, 2012.

\bibitem{nickerson2014freely}
N.~H. Nickerson, J.~F. Fitzsimons, and S.~C. Benjamin, ``Freely scalable
  quantum technologies using cells of 5-to-50 qubits with very lossy and noisy
  photonic links,'' {\em Phys. Rev. X}, vol.~4, p.~041041, Dec 2014.

\bibitem{lipinska2018anonymous}
V.~Lipinska, G.~Murta, and S.~Wehner, ``Anonymous transmission in a noisy
  quantum network using the $w$ state,'' {\em Phys. Rev. A}, vol.~98,
  p.~052320, Nov 2018.

\bibitem{khabiboulline2019optical}
E.~T. Khabiboulline, J.~Borregaard, K.~De~Greve, and M.~D. Lukin, ``Optical
  interferometry with quantum networks,'' {\em Phys. Rev. Lett.}, vol.~123,
  p.~070504, Aug 2019.

\bibitem{shchukin2017waitingPRA}
E.~Shchukin, F.~Schmidt, and P.~van Loock, ``Waiting time in quantum repeaters
  with probabilistic entanglement swapping,'' {\em Phys. Rev. A}, vol.~100,
  p.~032322, Sep 2019.

\bibitem{vinay2019statistical}
S.~E. Vinay and P.~Kok, ``Statistical analysis of quantum-entangled-network
  generation,'' {\em Phys. Rev. A}, vol.~99, p.~042313, Apr 2019.

\bibitem{vardoyan2019stochastic}
G.~Vardoyan, S.~Guha, P.~Nain, and D.~Towsley, ``On the stochastic analysis of
  a quantum entanglement switch,'' {\em SIGMETRICS Perform. Eval. Rev.},
  vol.~47, pp.~27--29, Dec. 2019.

\bibitem{razavi2009physical}
M.~Razavi, K.~Thompson, H.~Farmanbar, M.~Piani, and N.~Lütkenhaus, ``{Physical
  and architectural considerations in quantum repeaters},'' in {\em Quantum
  Communications Realized II} (Y.~Arakawa, M.~Sasaki, and H.~Sotobayashi,
  eds.), vol.~7236, pp.~18 -- 30, International Society for Optics and
  Photonics, SPIE, 2009.

\bibitem{wilde2013quantum}
M.~M. Wilde, {\em Quantum information theory}.
\newblock Cambridge University Press, 2013.

\bibitem{pant2017routing}
M.~Pant, H.~Krovi, D.~Towsley, L.~Tassiulas, L.~Jiang, P.~Basu, D.~Englund, and
  S.~Guha, ``Routing entanglement in the quantum internet,'' {\em npj Quantum
  Information}, vol.~5, p.~25, Mar 2019.

\bibitem{kuzmin2019scalable}
V.~Kuzmin, D.~Vasilyev, N.~Sangouard, W.~D{\"u}r, and C.~Muschik, ``Scalable
  repeater architectures for multi-party states,'' {\em npj Quantum
  Information}, vol.~5, no.~1, pp.~1--6, 2019.

\bibitem{khatri2019practical}
S.~Khatri, C.~T. Matyas, A.~U. Siddiqui, and J.~P. Dowling, ``Practical figures
  of merit and thresholds for entanglement distribution in quantum networks,''
  {\em Phys. Rev. Research}, vol.~1, p.~023032, Sep 2019.

\bibitem{omnet}
A.~Varga, ``The {OMNeT++} discrete event simulation system,'' in {\em Proc. of
  the European Simulation Multiconference (ESM'2001)}, 2001.

\bibitem{riley2010ns-3}
G.~F. Riley and T.~R. Henderson, {\em The ns-3 Network Simulator}, pp.~15--34.
\newblock Berlin, Heidelberg: Springer Berlin Heidelberg, 2010.

\bibitem{lantz2010network}
B.~Lantz, B.~Heller, and N.~McKeown, ``A network in a laptop: rapid prototyping
  for software-defined networks,'' in {\em Proceedings of the 9th ACM SIGCOMM
  Workshop on Hot Topics in Networks}, pp.~1--6, 2010.

\bibitem{fingerhuth2018opensource}
M.~Fingerhuth, T.~Babej, and P.~Wittek, ``Open source software in quantum
  computing,'' {\em {PLOS} {ONE}}, vol.~13, p.~e0208561, dec 2018.

\bibitem{netsquid-website}
``Netsquid website and online documentation.'' \url{https://netsquid.org}.
\newblock Access to documentation requires registration.

\bibitem{deutsch1996quantum}
D.~Deutsch, A.~Ekert, R.~Jozsa, C.~Macchiavello, S.~Popescu, and A.~Sanpera,
  ``Quantum privacy amplification and the security of quantum cryptography over
  noisy channels,'' {\em Physical review letters}, vol.~77, no.~13, p.~2818,
  1996.

\bibitem{wehrle2010modeling}
K.~Wehrle, M.~G{\"u}nes, and J.~Gross, {\em Modeling and tools for network
  simulation}.
\newblock Springer Science \& Business Media, 2010.

\bibitem{greenberger1989going}
D.~M. Greenberger, M.~A. Horne, and A.~Zeilinger, ``Going beyond {B}ell's
  theorem,'' in {\em Bell's theorem, quantum theory and conceptions of the
  universe}, pp.~69--72, Springer, 1989.

\bibitem{awschalom2018quantum}
D.~D. Awschalom, R.~Hanson, J.~Wrachtrup, and B.~B. Zhou, ``Quantum
  technologies with optically interfaced solid-state spins,'' {\em Nature
  Photonics}, vol.~12, no.~9, pp.~516--527, 2018.

\bibitem{doherty2013nitrogen}
M.~W. Doherty, N.~B. Manson, P.~Delaney, F.~Jelezko, J.~Wrachtrup, and L.~C.
  Hollenberg, ``The nitrogen-vacancy colour centre in diamond,'' {\em Physics
  Reports}, vol.~528, pp.~1--45, jul 2013.

\bibitem{afzelius2009multimode}
M.~Afzelius, C.~Simon, H.~De~Riedmatten, and N.~Gisin, ``Multimode quantum
  memory based on atomic frequency combs,'' {\em Physical Review A}, vol.~79,
  no.~5, p.~052329, 2009.

\bibitem{fleischhauer2005electromagnetically}
M.~Fleischhauer, A.~Imamoglu, and J.~P. Marangos, ``Electromagnetically induced
  transparency: Optics in coherent media,'' {\em Reviews of modern physics},
  vol.~77, no.~2, p.~633, 2005.

\bibitem{lukin2003colloquium}
M.~Lukin, ``Colloquium: Trapping and manipulating photon states in atomic
  ensembles,'' {\em Reviews of Modern Physics}, vol.~75, no.~2, p.~457, 2003.

\bibitem{krovi2016}
H.~Krovi, S.~Guha, Z.~Dutton, J.~A. Slater, C.~Simon, and W.~Tittel,
  ``Practical {{Quantum Repeaters}} with {{Parametric Down}}-{{Conversion
  Sources}},'' {\em Applied Physics B}, vol.~122, p.~52, Mar. 2016.

\bibitem{bennett2014}
C.~H. Bennett and G.~Brassard, ``Quantum cryptography: {{Public}} key
  distribution and coin tossing,'' {\em Theoretical Computer Science},
  vol.~560, pp.~7--11, Dec. 2014.

\bibitem{aaronson2004improved}
S.~Aaronson and D.~Gottesman, ``Improved simulation of stabilizer circuits,''
  {\em Physical Review A}, vol.~70, no.~5, p.~052328, 2004.

\bibitem{anders2006fast}
S.~Anders and H.~J. Briegel, ``Fast simulation of stabilizer circuits using a
  graph-state representation,'' {\em Physical Review A}, vol.~73, no.~2,
  p.~022334, 2006.

\bibitem{steiger2018projectq}
D.~S. Steiger, T.~H{\"a}ner, and M.~Troyer, ``Project{Q}: an open source
  software framework for quantum computing,'' {\em Quantum}, vol.~2, p.~49,
  2018.

\bibitem{de2013model}
J.~de~Oliveira~Filho, Z.~Papp, R.~Djapic, and J.~Oosteveen, ``Model-based
  design of self-adapting networked signal processing systems,'' in {\em 2013
  IEEE 7th International Conference on Self-Adaptive and Self-Organizing
  Systems}, pp.~41--50, IEEE, 2013.

\bibitem{dahlberg2018simulaqron}
A.~Dahlberg and S.~Wehner, ``Simula{Q}ron -- a simulator for developing quantum
  internet software,'' {\em Quantum Science and Technology}, vol.~4, no.~1,
  p.~015001, 2018.

\bibitem{diadamo2020qunetsim}
S.~DiAdamo, J.~N{\"o}tzel, B.~Zanger, and M.~M. Be{\c{s}}e, ``Qu{N}et{S}im: A
  software framework for quantum networks,'' {\em arXiv:2003.06397}, 2020.

\bibitem{bartlett2018distributed}
B.~Bartlett, ``A distributed simulation framework for quantum networks and
  channels,'' {\em arXiv:quant-ph/1808.07047}, 2018.

\bibitem{matsuo2019simulation}
T.~Matsuo, ``Simulation of a dynamic, {R}ule{S}et-based quantum network,'' {\em
  arXiv:1908.10758}, 2020.

\bibitem{mailloux2014modeling}
L.~O. {Mailloux}, J.~D. {Morris}, M.~R. {Grimaila}, D.~D. {Hodson}, D.~R.
  {Jacques}, J.~M. {Colombi}, C.~V. {Mclaughlin}, and J.~A. {Holes}, ``A
  modeling framework for studying quantum key distribution system
  implementation nonidealities,'' {\em IEEE Access}, vol.~3, pp.~110--130,
  2015.

\bibitem{wu2020sequence}
X.~Wu, A.~Kolar, J.~Chung, D.~Jin, T.~Zhong, R.~Kettimuthu, and M.~Suchara,
  ``{S}e{Q}{U}e{N}{C}e: A customizable discrete-event simulator of quantum
  networks,'' {\em arXiv:2009.12000}, 2020.

\bibitem{lee2020quantum}
Y.~Lee, E.~Bersin, A.~Dahlberg, S.~Wehner, and D.~Englund, ``A quantum router
  architecture for high-fidelity entanglement flows in multi-user quantum
  networks,'' {\em arXiv:2005.01852}, 2020.

\bibitem{kozlowski2020designing}
W.~Kozlowski, A.~Dahlberg, and S.~Wehner, ``Designing a quantum network
  protocol,'' in {\em Proceedings of the 16th International Conference on
  emerging Networking EXperiments and Technologies (CoNEXT '20)}, p.~16, ACM,
  2020.

\bibitem{behnel2011cython}
S.~Behnel, R.~Bradshaw, C.~Citro, L.~Dalcin, D.~S. Seljebotn, and K.~Smith,
  ``Cython: The best of both worlds,'' {\em Computing in Science \&
  Engineering}, vol.~13, no.~2, pp.~31--39, 2011.

\bibitem{DERAEDT2007121}
K.~{De Raedt}, K.~Michielsen, H.~{De Raedt}, B.~Trieu, G.~Arnold, M.~Richter,
  T.~Lippert, H.~Watanabe, and N.~Ito, ``Massively parallel quantum computer
  simulator,'' {\em Computer Physics Communications}, vol.~176, no.~2, pp.~121
  -- 136, 2007.

\bibitem{haner2017projectq}
T.~H\"{a}ner and D.~S. Steiger, ``0.5 petabyte simulation of a 45-qubit quantum
  circuit,'' in {\em Proceedings of the International Conference for High
  Performance Computing, Networking, Storage and Analysis}, SC '17, (New York,
  NY, USA), Association for Computing Machinery, 2017.

\bibitem{rozpedek2018near-term}
F.~Rozp\k{e}dek, R.~Yehia, K.~Goodenough, M.~Ruf, P.~C. Humphreys, R.~Hanson,
  S.~Wehner, and D.~Elkouss, ``Near-term quantum repeater experiments with
  {N}{V} centers: overcoming the limitations of direct transmission,'' {\em
  Physical Review A}, vol.~99, no.~5, p.~052330, 2019.

\bibitem{humphreys2018deterministic}
P.~C. Humphreys, N.~Kalb, J.~P.~J. Morits, R.~N. Schouten, R.~F.~L. Vermeulen,
  D.~J. Twitchen, M.~Markham, and R.~Hanson, ``Deterministic delivery of remote
  entanglement on a quantum network,'' {\em Nature}, vol.~558, no.~7709,
  pp.~268--273, 2018.

\bibitem{netsquidpaperdata}
T.~Coopmans, R.~Knegjens, A.~Dahlberg, D.~Maier, L.~Nijsten, J.~Oliveira,
  M.~Papendrecht, J.~Rabbie, F.~Rozp\k{e}dek, M.~Skrzypczyk, L.~Wubben,
  W.~de~Jong, D.~Podareanu, A.~Torres~Knoop, D.~Elkouss, and S.~Wehner,
  ``{Replication Data for: NetSquid, a discrete-event simulation platform for
  quantum networks},'' 2021.

\bibitem{netsquidpapercode}
T.~Coopmans, R.~Knegjens, A.~Dahlberg, D.~Maier, L.~Nijsten, J.~Oliveira,
  M.~Papendrecht, J.~Rabbie, F.~Rozp\k{e}dek, M.~Skrzypczyk, L.~Wubben,
  W.~de~Jong, D.~Podareanu, A.~Torres~Knoop, D.~Elkouss, and S.~Wehner,
  ``{Simulation Code for: NetSquid, a discrete-event simulation platform for
  quantum networks},'' 2021.

\bibitem{gottesman1998heisenberg}
D.~Gottesman, ``The {H}eisenberg representation of quantum computers,'' {\em
  arXiv:quant-ph/9807006v1}, 1998.

\bibitem{nielsen2000quantum}
M.~A. Nielsen and I.~L. Chuang, ``Quantum information and quantum
  computation,'' {\em Cambridge: Cambridge University Press}, vol.~2, no.~8,
  p.~23, 2000.

\bibitem{hein2006entanglement}
M.~Hein, W.~D{\"u}r, J.~Eisert, R.~Raussendorf, M.~Nest, and H.-J. Briegel,
  ``Entanglement in graph states and its applications,'' {\em arXiv:0602096},
  2006.

\bibitem{hermans2020privatecommunication}
S.~Hermans. Personal communication, 2020.

\bibitem{paschotta2020fibers}
R.~Paschotta, ```{R}efractive {I}ndex' in \textit{RP Photonics Encyclopedia},''
  2020.

\bibitem{riedel2017deterministic}
D.~Riedel, I.~S\"ollner, B.~J. Shields, S.~Starosielec, P.~Appel, E.~Neu,
  P.~Maletinsky, and R.~J. Warburton, ``Deterministic enhancement of coherent
  photon generation from a nitrogen-vacancy center in ultrapure diamond,'' {\em
  Phys. Rev. X}, vol.~7, p.~031040, Sep 2017.

\bibitem{hensen2015loophole}
B.~Hensen, H.~Bernien, A.~E. Dr{\'e}au, A.~Reiserer, N.~Kalb, M.~S. Blok,
  J.~Ruitenberg, R.~F.~L. Vermeulen, R.~N. Schouten, C.~Abell{\'a}n, W.~Amaya,
  V.~Pruneri, M.~W. Mitchell, M.~Markham, D.~J. Twitchen, D.~Elkouss,
  S.~Wehner, T.~H. Taminiau, and R.~Hanson, ``Loophole-free bell inequality
  violation using electron spins separated by 1.3 kilometres,'' {\em Nature},
  vol.~526, pp.~682 EP --, Oct 2015.

\bibitem{zaske2012visible}
S.~Zaske, A.~Lenhard, C.~A. Ke\ss{}ler, J.~Kettler, C.~Hepp, C.~Arend,
  R.~Albrecht, W.-M. Schulz, M.~Jetter, P.~Michler, and C.~Becher,
  ``Visible-to-telecom quantum frequency conversion of light from a single
  quantum emitter,'' {\em Phys. Rev. Lett.}, vol.~109, p.~147404, Oct 2012.

\bibitem{kalb2017entanglement}
N.~Kalb, A.~A. Reiserer, P.~C. Humphreys, J.~J.~W. Bakermans, S.~J. Kamerling,
  N.~H. Nickerson, S.~C. Benjamin, D.~J. Twitchen, M.~Markham, and R.~Hanson,
  ``Entanglement distillation between solid-state quantum network nodes,'' {\em
  Science}, vol.~356, pp.~928--932, jun 2017.

\bibitem{kalb2018dephasing}
N.~Kalb, P.~C. Humphreys, J.~J. Slim, and R.~Hanson, ``Dephasing mechanisms of
  diamond-based nuclear-spin memories for quantum networks,'' {\em Phys. Rev.
  A}, vol.~97, p.~062330, Jun 2018.

\bibitem{beukers2019masterthesis}
H.~Beukers, ``Improving coherence of quantum memory during entanglement
  creation between nitrogen vacancy centres in diamond (master thesis),'' 2019.

\bibitem{abobeih2018one}
M.~H. Abobeih, J.~Cramer, M.~A. Bakker, N.~Kalb, M.~Markham, D.~J. Twitchen,
  and T.~H. Taminiau, ``One-second coherence for a single electron spin coupled
  to a multi-qubit nuclear-spin environment,'' {\em Nature Communications},
  vol.~9, p.~2552, Jun 2018.

\bibitem{bradley2019tenqubit}
C.~E. Bradley, J.~Randall, M.~H. Abobeih, R.~C. Berrevoets, M.~J. Degen, M.~A.
  Bakker, M.~Markham, D.~J. Twitchen, and T.~H. Taminiau, ``A ten-qubit
  solid-state spin register with quantum memory up to one minute,'' {\em Phys.
  Rev. X}, vol.~9, p.~031045, Sep 2019.

\bibitem{reiserer2016robust}
A.~Reiserer, N.~Kalb, M.~S. Blok, K.~J.~M. van Bemmelen, T.~H. Taminiau,
  R.~Hanson, D.~J. Twitchen, and M.~Markham, ``Robust quantum-network memory
  using decoherence-protected subspaces of nuclear spins,'' {\em Phys. Rev. X},
  vol.~6, p.~021040, Jun 2016.

\bibitem{taminiau2014universal}
T.~H. Taminiau, J.~Cramer, T.~van~der Sar, V.~V. Dobrovitski, and R.~Hanson,
  ``Universal control and error correction in multi-qubit spin registers in
  diamond,'' {\em Nature Nanotechnology}, vol.~9, pp.~171--176, feb 2014.

\bibitem{bennett1993teleporting}
C.~H. Bennett, G.~Brassard, C.~Cr\'epeau, R.~Jozsa, A.~Peres, and W.~K.
  Wootters, ``Teleporting an unknown quantum state via dual classical and
  {E}instein-{P}odolsky-{R}osen channels,'' {\em Phys. Rev. Lett.}, vol.~70,
  pp.~1895--1899, Mar 1993.

\bibitem{maier2020}
D.{Maier} 2020.
\newblock In preparation.

\bibitem{sinclair2014}
N.~Sinclair, E.~Saglamyurek, H.~Mallahzadeh, J.~A. Slater, M.~George,
  R.~Ricken, M.~P. Hedges, D.~Oblak, C.~Simon, W.~Sohler, and W.~Tittel,
  ``Spectral {{Multiplexing}} for {{Scalable Quantum Photonics}} using an
  {{Atomic Frequency Comb Quantum Memory}} and {{Feed}}-{{Forward Control}},''
  {\em Physical Review Letters}, vol.~113, p.~053603, July 2014.

\bibitem{sangouard2011quantum}
N.~Sangouard, C.~Simon, H.~de~Riedmatten, and N.~Gisin, ``Quantum repeaters
  based on atomic ensembles and linear optics,'' {\em Rev. Mod. Phys.},
  vol.~83, pp.~33--80, Mar 2011.

\bibitem{ivan2011operator}
J.~S. Ivan, K.~K. Sabapathy, and R.~Simon, ``Operator-sum representation for
  bosonic gaussian channels,'' {\em Phys. Rev. A}, vol.~84, p.~042311, Oct
  2011.

\bibitem{chuang1997}
I.~L. Chuang, D.~W. Leung, and Y.~Yamamoto, ``Bosonic quantum codes for
  amplitude damping,'' {\em Physical Review A}, vol.~56, pp.~1114--1125, Aug.
  1997.

\bibitem{dicke1954}
R.~H. Dicke, ``Coherence in spontaneous radiation processes,'' {\em Phys.
  Rev.}, vol.~93, pp.~99--110, Jan 1954.

\bibitem{sabooni2013}
M.~Sabooni, Q.~Li, S.~Kr\"oll, and L.~Rippe, ``Efficient quantum memory using a
  weakly absorbing sample,'' {\em Physical Review Letters}, vol.~110, Mar 2013.

\bibitem{seri2019}
A.~Seri, D.~Lago-Rivera, A.~Lenhard, G.~Corrielli, R.~Osellame, M.~Mazzera, and
  H.~de~Riedmatten, ``Quantum storage of frequency-multiplexed heralded single
  photons,'' {\em Phys. Rev. Lett.}, vol.~123, p.~080502, Aug 2019.

\bibitem{jobez2016}
P.~Jobez, N.~Timoney, C.~Laplane, J.~Etesse, A.~Ferrier, P.~Goldner, N.~Gisin,
  and M.~Afzelius, ``Towards highly multimode optical quantum memory for
  quantum repeaters,'' {\em Phys. Rev. A}, vol.~93, p.~032327, Mar 2016.

\bibitem{holzapfel2020}
A.~Holz\"apfel, J.~Etesse, K.~T. Kaczmarek, A.~Tiranov, N.~Gisin, and
  M.~Afzelius, ``Optical storage for 0.53 s in a solid-state atomic frequency
  comb memory using dynamical decoupling,'' {\em New Journal of Physics},
  vol.~22, p.~063009, Jun 2020.

\bibitem{bonarota2011}
M.~Bonarota, J.-L. Le~Gou\"et, and T.~Chaneli\`ere, ``Highly multimode storage
  in a crystal,'' {\em New Journal of Physics}, vol.~13, p.~013013, Jan 2011.

\bibitem{harris1990}
S.~E. Harris, J.~E. Field, and A.~Imamo\ifmmode~\breve{g}\else \u{g}\fi{}lu,
  ``Nonlinear optical processes using electromagnetically induced
  transparency,'' {\em Phys. Rev. Lett.}, vol.~64, pp.~1107--1110, Mar 1990.

\bibitem{cao2020}
M.~Cao, F.~Hoffet, S.~Qiu, A.~S. Sheremet, and J.~Laurat, ``Efficient
  reversible entanglement transfer between light and quantum memories,'' {\em
  arXiv:2007.00022}, June 2020.

\bibitem{vernaz-gris2018}
P.~{Vernaz-Gris}, K.~Huang, M.~Cao, A.~S. Sheremet, and J.~Laurat,
  ``Highly-efficient quantum memory for polarization qubits in a
  spatially-multiplexed cold atomic ensemble,'' {\em Nature Communications},
  vol.~9, p.~363, Jan. 2018.

\end{thebibliography}

\onecolumngrid
\appendix
\renewcommand{\thesection}{\arabic{section}}
\renewcommand{\thesubsection}{\Alph{subsection}}
\section{Anatomy of the NetSquid Simulator}
\label{appendix:netsquid}

This section supplements the Methods, section~\nameref{sec:methods-netsquid}, by going into more depth on specific details of NetSquid's design.
The version of NetSquid that we consider is 0.10.
For up-to-date documentation of the latest NetSquid version, including a detailed user tutorial, code examples, and its application programming interface, please visit the NetSquid website: \url{https://netsquid.org}~\cite{netsquid-website}.

\subsection{Qubits and their quantum state formalisms}
\label{sec:qstate-formalism}

The \textit{qubits} sub-package of NetSquid, shown in Figure~\ref{fig:netsquid-arch} (main text), provides a specialised quantum computation library for tracking the lifetimes of many qubits across a quantum network.
A class diagram of the main classes present in this sub-package is shown in in Supplementary Figure~\ref{fig:qubits-uml}.
Rather than assigning a single quantum state for a predefined number of qubits,
both the number of qubits and the quantum states describing them are managed dynamically during a simulation run.
Every \textit{Qubit} (\texttt{Qubit}) object references a \textit{shared quantum state} (\texttt{QState}) object,
which varies in size according to the number of qubits sharing it.
When two or more qubits interact, for instance via a multi-qubit operation, their respective shared quantum states are merged together.
On the other hand, when a qubit is projectively measured or discarded it can be split from the quantum state it's sharing and optionally be assigned a new single-qubit state.

The \texttt{QState} class is an interface for shared quantum states that NetSquid implements for four different \textit{quantum state formalisms} -- described in more detail below.
To allow simulations to seamlessly switch between formalisms NetSquid offers a formalism agnostic API,
which is defined in the \textit{qubitapi} module.
The functions in this API take as their primary input parameters the qubits to manipulate and the \textit{operators} (\texttt{Operator}) describing a quantum operation to perform, if applicable.
The merging and splitting of shared quantum states is handled automatically under the hood, as are conversions between states using different formalisms (where this is possible).
This allows users to program in a ``qubit-centric'' way, by for instance applying local operations to qubits at a network node without knowledge of their positions within a quantum state representation or any entanglement they may have across the network.

\begin{figure}[bt]
    \centering
    \includegraphics[width=1.0\textwidth]{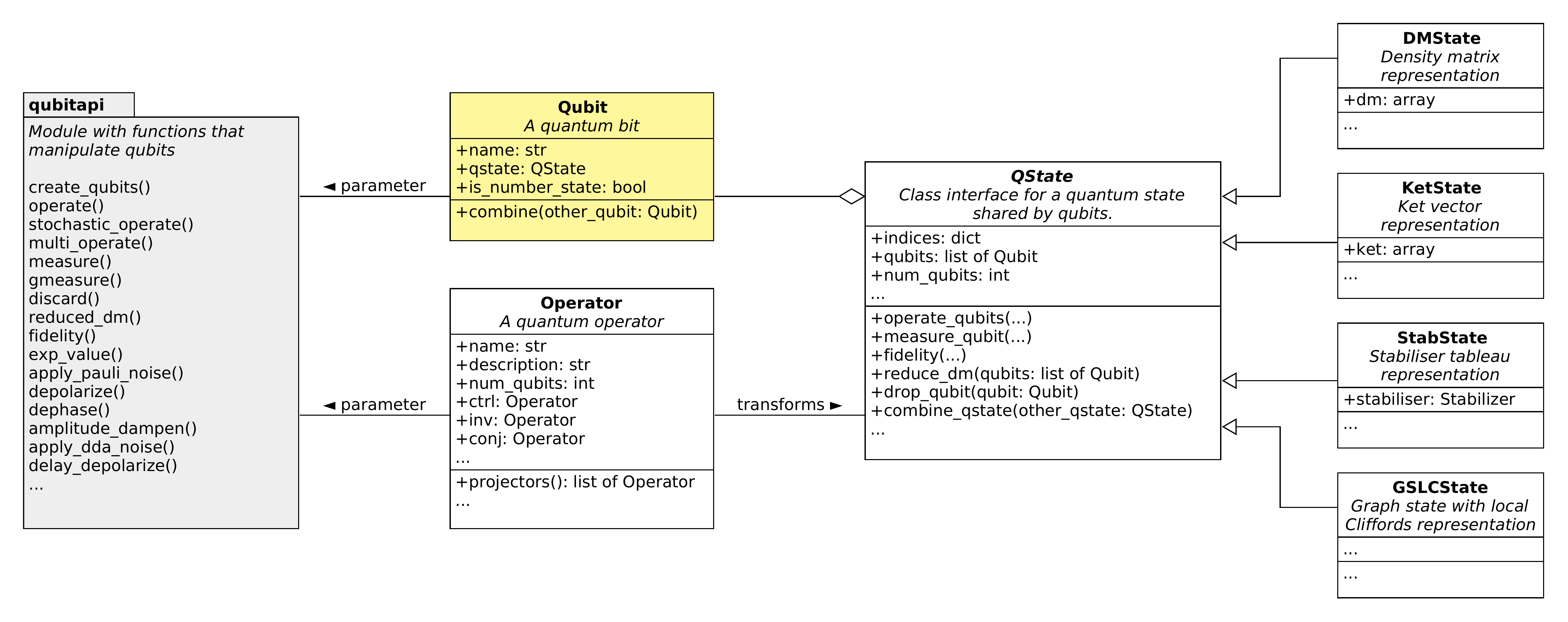}
    \caption{\textbf{Design overview of NetSquid's qubits sub-package.} The main classes and module of the \textit{netsquid.qubits} sub-package. \texttt{Qubit} objects can be manipulated,
        as described for instance by \texttt{Operator} objects, using the functions of the \textit{qubitapi} module.
        Under the hood the qubits share a specific sub-class of the \texttt{QState} interface.
        Ellipses indicate that not all of a class's public variables and methods are listed.}
    \label{fig:qubits-uml}
\end{figure}

We proceed to give a high-level description of the available quantum state formalisms.
The first two formalisms are ket state vectors (KET) and density matrices (DM), which both enable universal quantum computation.
A ket state vector represents a quantum pure state, while a density matrix can represent statistical ensembles of pure states.
The stabiliser formalism (STAB) \cite{gottesman1998heisenberg,aaronson2004improved} and graph states with local Cliffords formalisms (GSLC) \cite{anders2006fast} can only represent stabiliser states.
Stabiliser states form a subset of all quantum states that are closed under the application of: 
\begin{itemize}
    \item{\textit{Clifford gates}. Each Clifford gate can be written as circuit consisting of the following three gates only: the Hadamard gate $H$ (eq.~\eqref{eq:hadamard}), the phase gate $\dyad{0} + i\dyad{1}$ and the CNOT gate $\dyad{00} + \dyad{01} + \dyad{10}{01} + \dyad{01}{10}$. Not all unitaries are Clifford gates;}
    \item{single-qubit measurements in the standard ($\ket{0},\ket{1}$) basis.}
\end{itemize}
As such, for the STAB and GSLC formalisms quantum operations are limited to these two procedures.
The runtime complexity trade-off between GSLC and STAB is nontrivial, since the former is faster on single-qubit unitaries, where the latter outperforms in two-qubit gates.
An overview of the four formalisms and their runtime complexities can be found in Supplementary Table~\ref{netsquid-formalisms-table}.

\begin{table}[tb]
\begin{tabular}{lm{2cm}m{2cm}m{3cm}m{5cm}}
\hline
\hline
 & Density \newline Matrix (DM) & Ket state \newline vector (KET) & Stabiliser \newline tableau (STAB) & Graph state with \newline local Cliffords (GSLC)\\
\hline
Is universal & Yes & Yes & No\footnotemark[1] & No\footnotemark[1] \\
Supports mixed states & Yes & No & No & No \\
Memory (bits) & $128 \times 2^{2n}$ & $128 \times 2^n$ & $2n^2 + n$ & $\mathcal{O}(nd + n)$ \\
Operating complexity &$\mathcal{O}(2^{3n})$\footnotemark[2] &$\mathcal{O}(2^{2n})$\footnotemark[2]&$\mathcal{O}(n)$&Single qubit gates: $\mathcal{O}(1)$ \newline Two-qubit gates: $\mathcal{O}(d^2 + 1)$\\
Measurement complexity & $\mathcal{O}(2^{3n})$\footnotemark[2]& $\mathcal{O}(2^{2n})$\footnotemark[2] & $\mathcal{O}(n^3)$ & $\mathcal{O}(d^2 + 1)$ \\
\hline
\hline
\end{tabular}
\footnotetext[1]{Can only represent stabiliser states. The only operators that can operate on these states are Clifford operators.}
\footnotetext[2]{A stricter upper bound exists, depending on the matrix multiplication implementation.}
\caption{
    \textbf{The four different quantum state formalisms implemented in NetSquid.}
    Where $n$ is the amount of qubits in the quantum states and $d$ is the average amount of edges per vertex in the GSLC formalism with $0 \leq d < n$.}
\label{netsquid-formalisms-table}
\end{table}

Now, let us describe for each of the formalisms how a quantum state is represented.
An example of the different representations of the same quantum state is given in Fig.~\ref{formalism-comparison}.

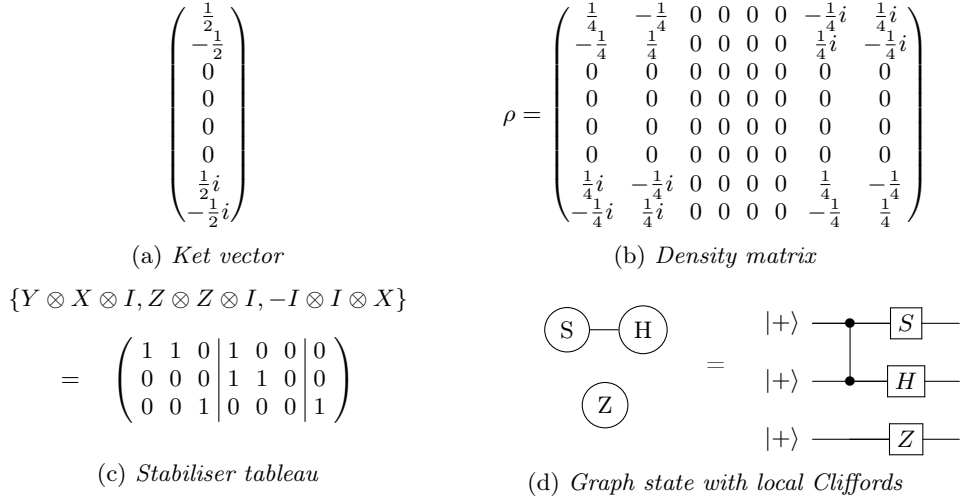
\begin{figure}[tb]
    \begin{subfigure}{0.4\textwidth}
        $\begin{pmatrix}
        \frac{1}{2} \\ -\frac{1}{2} \\ 0 \\ 0 \\ 0 \\ 0 \\ \frac{1}{2}i \\ -\frac{1}{2}i
        \end{pmatrix}$
        \caption{\textit{Ket vector}}
    \end{subfigure}
    \begin{subfigure}{0.4\textwidth}
        $\rho = \begin{pmatrix}
        \frac{1}{4} & -\frac{1}{4} & 0 & 0 & 0 & 0 & -\frac{1}{4}i & \frac{1}{4}i \\
        -\frac{1}{4} & \frac{1}{4} & 0 & 0 & 0 & 0 & \frac{1}{4}i & -\frac{1}{4}i \\
        0 & 0 & 0 & 0 & 0 & 0 & 0 & 0 \\
        0 & 0 & 0 & 0 & 0 & 0 & 0 & 0 \\
        0 & 0 & 0 & 0 & 0 & 0 & 0 & 0 \\
        0 & 0 & 0 & 0 & 0 & 0 & 0 & 0 \\
        \frac{1}{4}i & -\frac{1}{4}i & 0 & 0 & 0 & 0 & \frac{1}{4} & -\frac{1}{4} \\
        -\frac{1}{4}i & \frac{1}{4}i & 0 & 0 & 0 & 0 & -\frac{1}{4} & \frac{1}{4} \\
        \end{pmatrix}$
        \caption{\textit{Density matrix}}
    \end{subfigure}
    \begin{subfigure}{0.4\textwidth}
        $\{Y \otimes X \otimes I, Z \otimes Z \otimes I, -I \otimes I \otimes X\}$
        \begin{equation*}
        =\quad \left(\begin{array}{ccc|ccc|c}
        1 & 1 & 0 & 1 & 0 & 0 & 0 \\
        0 & 0 & 0 & 1 & 1 & 0 & 0 \\
        0 & 0 & 1 & 0 & 0 & 0 & 1 \\
        \end{array}\right)
        \end{equation*}
        \caption{\textit{Stabiliser tableau}}
    \end{subfigure}
    \begin{subfigure}{0.4\textwidth}
        \minipage{0.15\textwidth}
        \begin{tikzpicture}
        \node[shape=circle,draw=black,radius=1] (A) at (0, 0) {S};
        \node[shape=circle,draw=black] (B) at (1, 0) {H};
        \node[shape=circle,draw=black] (C) at (0.5, -1) {Z};
        \path [-] (A) edge (B); 
        \end{tikzpicture}
        \endminipage\hspace{1cm}
        \minipage{0.05\textwidth}
        =
        \endminipage\hspace{1cm}
        \minipage{0.15\textwidth}
        \[
        \Qcircuit @C=1.4em @R=1.1em {
            \lstick{\ket{+}} & \ctrl{1} \qw & \gate{S} \qw & \qw \\
            \lstick{\ket{+}} & \control \qw & \gate{H} \qw & \qw \\
            \lstick{\ket{+}} & \qw & \gate{Z} \qw & \qw \\
        }
        \]
        \endminipage
        \caption{\textit{Graph state with local Cliffords}}
    \end{subfigure}
    \caption{\textbf{Quantum state representations available in NetSquid.} Four different representations of the same quantum state $\ket{\psi} = \frac{1}{\sqrt{2}}(\ket{00} + i\ket{11})\ket{-}$. Each representation type is supported by NetSquid and has different trade-offs (see text of section~\ref{sec:qstate-formalism} in Supplementary Note~\ref{appendix:netsquid}).}
	\label{formalism-comparison}
\end{figure}

\subsubsection*{Ket vectors (KET)}

In the KET formalism, an $n$-qubit pure state $\ket{\psi} = \sum_{k=1}^{2^n} c_k \ket{k}$ is stored as a vector of length $2^n$ containing the complex amplitudes $c_k$. Here, $\ket{k}$ denotes the product state of the binary representation of $k$, e.g. $\ket{5} = \ket{1} \otimes \ket{0} \otimes \ket{1}$.

\subsubsection*{Density matrices (DM)}

The density matrix of a pure state $\ket{\psi}$ is $\dyad{\psi} = \ket{\psi} \cdot \left(\ket{\psi}\right)^{\dagger}$, where $\cdot$ denotes matrix multiplication and $(.)^{\dagger}$ refers to complex transposition.
An $n$-qubit mixed state is a statistical ensemble of $n$-qubit pure states and can be represented as
$$\sum_{k=1}^m p_k\dyad{\psi_k}$$
where $\ket{\psi_1}, \dots, \ket{\psi_m}$ are $n$-qubit pure states (with $1\leq m \leq n$) and the $p_k$ are probabilities that sum to 1.
In DM, the density matrix of a pure or mixed state is represented as a matrix of dimension $2^n \times 2^n$ with complex entries.

\subsubsection*{Stabiliser tableaus (STAB)}

In the stabiliser formalism~\cite{gottesman1998heisenberg}, one tracks the generators of the stabiliser group of a state.
We briefly explain the concept here; for a more accessible introduction to the topic, we refer to \cite{nielsen2000quantum}.
In order to define a stabiliser group, let us give the Pauli group, which consists of strings of Pauli operators with multiplicative phases $\pm 1, \pm i$:
\[
	\{\beta \cdot \bigotimes_{k=1}^n P_k \mid \text{$P_k \in \{\unit_2, X, Y, Z\}$} \text{ and } \beta\in \{\pm 1, \pm i\}\}
	.
	\]
A stabiliser group is a subgroup of the Pauli group which is commutative (i.e. any two elements $A$ and $B$ satisfy $A\cdot B = B\cdot A$) and moreover does not contain the element $-\unit_2 \otimes \unit_2 \otimes \dots\otimes \unit_2$.
In case the stabiliser group contains $2^n$ elements, there is a unique quantum state $\ket{\psi}$ for which each element $A$ from the stabiliser group stabilises $\ket{\psi}$, i.e. $A\ket{\psi} = \ket{\psi}$.
Not all quantum states have such a corresponding stabiliser group; those that do are called stabiliser states.
The intuition behind the stabiliser state formalism is that one tracks how the stabiliser group is altered by Clifford operations and $\ket{0}/\ket{1}$-basis measurements.
Since the stabiliser state belonging to a stabiliser group is unique, one could in principle always convert the group back to any other formalism, such as KET.
Concrete examples of stabiliser groups and their corresponding stabiliser states are:
\begin{itemize}
	\item{the stabiliser group $\{\unit_2, Z\}$, which corresponds to the state $\ket{0}$;}
	\item{the stabiliser group $\{\unit_2 \otimes \unit_2, \unit_2 \otimes Z, Z\otimes \unit_2, Z\otimes Z\}$, which corresponds to the state $\ket{0} \otimes \ket{0}$;}
	\item{the stabiliser group $\{\unit_2 \otimes \unit_2, X \otimes X, Z\otimes Z, -Y\otimes Y\}$, which corresponds to the state $(\ket{00} + \ket{11})/\sqrt{2}.$}
\end{itemize}
Rather than tracking the entire $2^n$-sized stabiliser group, it suffices to track a generating set, i.e. a set of $n$ Pauli strings whose $2^n$ product combinations yield precisely the $2^n$ elements of the stabiliser group.
The choice of generators is not unique.
For the examples given above, example sets of stabiliser generators are:
\begin{itemize}
	\item{for $\ket{0}$, the stabiliser group is generated by the single element $Z$, since $Z^2 = \unit_2$}
	\item{for $\ket{00}$, the stabiliser group is generated by $\{Z\otimes \unit_2, \unit_2 \otimes Z\}$, since squaring any of these two yields $\unit_2\otimes \unit_2$, while multiplying them yields $Z\otimes Z$;}
	\item{for the state $(\ket{00} + \ket{11})/\sqrt{2}$, one possible set of of generators is $\{X\otimes X, Z\otimes Z\}$.}
\end{itemize}

In NetSquid we store generators as a stabiliser tableau:
	$$\begin{vmatrix}X & Z & P \\\end{vmatrix} = 
	\begin{vmatrix}
	x_{11} & \dots & x_{1n} & z_{11} & \dots & z_{1n} & p_1\\
	\vdots & \ddots & \vdots & \vdots & \ddots & \vdots & \vdots\\
	x_{n1} & \dots & x_{nn} & z_{n1} & \dots & z_{nn} & p_n\\
	\end{vmatrix} \text{ where } p_k, x_{jk}, z_{jk} \in \{0, 1\}, 0 < j, k \leq n
	$$
	The $k$-th generator corresponds to the $k$-th row of this tableau and is given by
	 $$(-1)^{p_k}\bigotimes_{j=1}^nX^{x_{jk}}Z^{z_{jk}}$$
For updating the stabiliser tableau after the application of a Clifford gate or a $\ket{0}/\ket{1}$-basis measurement, NetSquid uses the algorithms by \cite{gottesman1998heisenberg} and \cite{aaronson2004improved}.
The runtime performance of stabiliser tableau algorithms is a direct function of the number of qubits:
linear for applying single- or two-qubit Clifford unitaries, which any Clifford can be composed into, and cubic for single-qubit measurement~\cite{gottesman1998heisenberg}.

\subsubsection*{Graph states with local Cliffords (GSLC)}

The last formalism is GSLC: graph states with local Cliffords \cite{anders2006fast}.
Graph states are a subset of all stabiliser states (see~\cite{hein2006entanglement} for a review) and an $n$-qubit graph state $\ket{\psi}$ can be written as
\begin{equation}
	\ket{\psi} = \prod_{(j, k) \in E} Z_{jk} \ket{+}^{\otimes n}
	\label{eq:graph-state}
\end{equation}
	where $Z_{jk}$ indicates a controlled-$Z$ gate $\dyad{00} + \dyad{01} + \dyad{10} - \dyad{11}$ between qubits $j$ and $k$, and we have denoted $\ket{+} = (\ket{0} + \ket{1})/ \sqrt{2}$.
As such, a graph state is completely determined by the set of qubit index pairs $(j, k)$ at which a controlled-$Z$ operation is performed.
These indices can be captured in a graph with undirected edges; in eq.~\eqref{eq:graph-state}, the edge set is $E$.
Each stabiliser state can be written as a graph state, followed by the application of single-qubit Clifford operations.
Thus, a stabiliser state in the GSLC formalism is represented by a set of edges $E$ and a list of $n$ single qubit Cliffords.
There exist 24 single-qubit Cliffords, so the Clifford list only requires $\mathcal{O}(n)$ space.
For updating the graph and the list of single-qubit Cliffords after the application of a Clifford gate or a $\ket{0}/\ket{1}$-basis measurement, NetSquid uses the algorithms by \cite{anders2006fast}.
The runtime scaling of the graph-state-based formalism depends on the edge degree $d$ of the vertices involved in the operation
-- constant-time for single-qubit Cliffords, quadratic in $d$ for two-qubit Cliffords and measurement -- 
and thus scales favourably if the graph is sparse.

\subsection{The PyDynAA simulation engine}

The discrete-event modelling framework used by NetSquid is provided by the Python package PyDynAA,
which is based on the core engine layer of DynAA, a system analysis and design tool~\cite{de2013model}.
This foundation provides a simple yet powerful language for describing large and complex system architectures.
To realise PyDynAA, the simulation engine core was written in C\texttt{++} for increased performance, and bindings to Python were added using Cython.
NetSquid takes advantage of the Cython headers exposed by PyDynAA to efficiently integrate the engine into its own compiled C extension libraries.

Several of NetSquid's sub-packages depend and build on the classes provided by PyDynAA, as illustrated in Figure~\ref{fig:netsquid-arch} (main text).
In Supplementary Figure~\ref{fig:pydynaa-uml} we highlight several of these key classes and how they interact with the simulation timeline in more detail, namely:
the simulation engine (\texttt{SimulationEngine}), events (\texttt{Event} and \texttt{EventType}), simulation entities (\texttt{Entity}), and event handlers (\texttt{EventHandler}).
We proceed to describe the concepts these classes represent in more detail.

Simulation \textit{entities} represent anything in the simulation world capable of generating or responding to events.
They may be dynamically added or removed during a simulation.
The \texttt{Entity} superclass provides methods for scheduling events to the timeline at specific instances and waiting for them to trigger.
The intended use is that users subclass the \texttt{Entity} class to implement their own entities.
The \textit{simulation engine} efficiently handles the scheduling of events at arbitrary (future) times by storing them in a self-balancing binary search tree.
Events may only be scheduled by entities, which ensures that events always have a source entity.
If an entity is removed during a simulation, then any future events it had scheduled will no longer trigger.

An entity responds to events by registering an event handler object with a callback function.
Responses can be associated to a specific type, source, and id (including wildcard combinations).
The simulation engine runs by stepping sequentially from event to event in a discrete fashion and checking if any event handlers in its registry match.
A hash table together with an efficient hashing algorithm ensure efficient lookups of the event handlers in the registry.

PyDynAA implements an \textit{event expression} class to allow entities to wait on logical combinations of events.
Atomic event expressions, which describe regular wait conditions for standard events, can be combined to form composite expressions using logical \textit{and} and \textit{or} operators to any depth.
Event expressions enable NetSquid simulations to deal with timing complexities.
This feature has been used extensively in NetSquid to model both the internal behaviour of hardware components, as well as for programming network protocols.
As example, consider DEJMPS entanglement distillation \cite{deutsch1996quantum}: two nodes participate in this protocol and a node can only decide whether the distillation succeeded or failed when both its local quantum operations have finished and it has received the measurement outcome from the remote node.
Thus, the node waits for the logical \textit{and} of the receive-event and the event that the local operations have finished.

\begin{figure}[tb]
    \centering
    \includegraphics[width=0.9\textwidth]{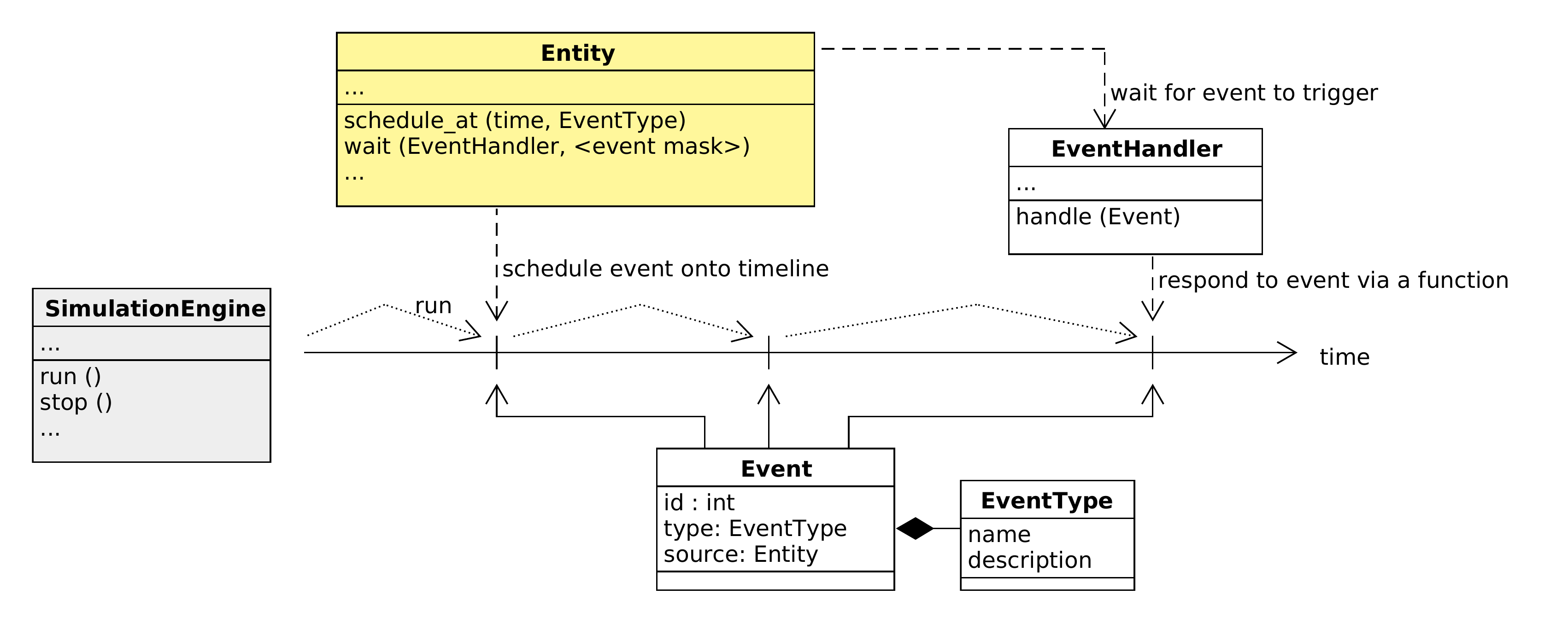}
    \caption{\textbf{Design overview of the PyDynAA package.}
    Schematic overview of key classes defined by the PyDynAA package, the discrete-event simulation engine used by NetSquid.
    Also shown is the relation of each class to the simulation timeline. Events are scheduled onto the simulation timeline by \texttt{Entity} objects. Entities wait for events to trigger by registering \texttt{EventHandler}s,
    which respond to an event by passing it as input to a specified callback function.
    The events to wait for can be specified by their type, id, and source entity. 
    Ellipses indicate that not all of a class's public variables and methods are listed.
    Omitted from this class diagram is the \texttt{EventExpression} class -- see the text for more details.}
    \label{fig:pydynaa-uml}
\end{figure}

\subsection{The modular component modelling framework}

The physical modelling of network devices is provided by several NetSquid sub-packages: \textit{components}, \textit{models} and \textit{nodes}, which are shown stacked with relation the NetSquid package in Figure~\ref{fig:netsquid-arch} (main text).
The pivotal base class connecting all them is the \textit{component} (\texttt{Component}), which is used to model all hardware devices.
Specifically, it represents all physical entities in the simulation, and as such sub-classes the \textit{entity} (\texttt{Entity}),which enables it to interact with the event timeline.
In Supplementary Figure~\ref{fig:components-uml} we show a class diagram of the component class and its relationships to other classes from these sub-packages.

\begin{figure}[htb]
    \centering
    \includegraphics[width=1.0\textwidth]{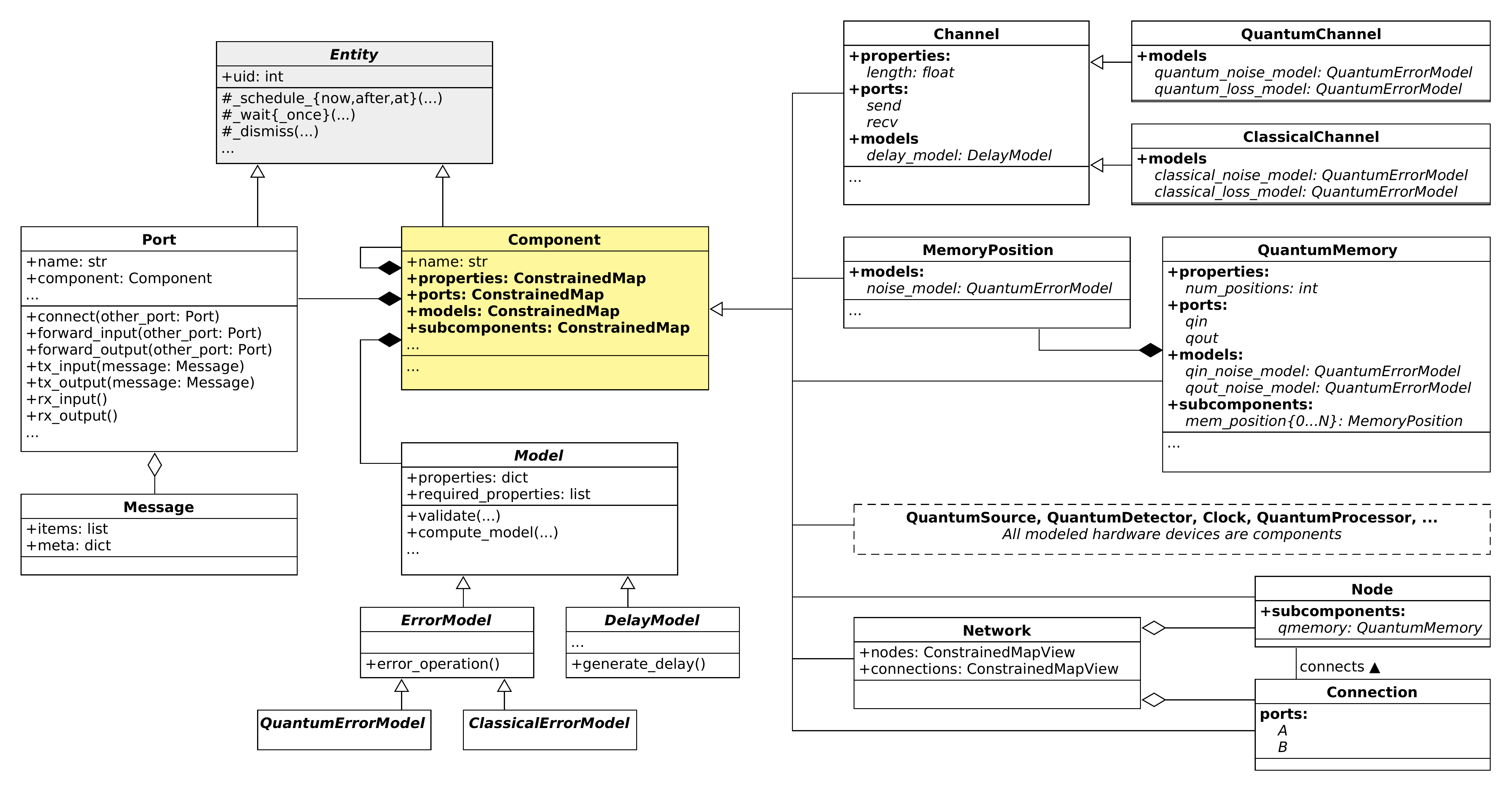}
    \caption{\textbf{Design overview of components in NetSquid.}
        Class diagram for the \texttt{Component} class, a simulation entity that is used to model all network hardware devices, including composite components such as nodes, connections and the network itself. A component is shown to be composed of \textit{properties}, \textit{ports}, \textit{models} and \textit{subcomponents}.
    Ellipses indicate that not all of a class's public variables and methods are listed.}
    \label{fig:components-uml}
\end{figure}

The modularity of NetSquid's modelling framework is achieved by the composition of components in terms of \textit{properties}, \textit{models}, communication \textit{ports} and \textit{subcomponents}.
A component's \textit{properties} are values that physically characterise it, such as the length of a channel or the frequency of a source.
A special \textit{constrained map} (\texttt{ConstrainedMap}) container is used to store the properties (as well as the other composed objects) to give control of the expected types and immutability of properties during a simulation.
\textit{Models} (\texttt{Model}) are used to describe the physical behaviour of a component, such as the transmission delay of a channel, or the quantum decoherence of a qubit in memory.
Model objects are essentially elaborate functions and generally do not store any state;
when a model is called it is passed its component's properties, in addition to any modelling specific input,
such as, in the case of a quantum noise model, the qubit to apply noise and the time the qubit has been waiting on a memory.
Components can be composed of other \textit{subcomponents}, which allows networks to be pieced together in a very modular fashion.
For instance, a complete network can be represented by a single component, which is composed of node and connection sub-components,
which in turn are composed of devices such as channels, sources, memories, etc.
To streamline and automate the communication between components, including to and from sub-components, components can be linked using \textit{ports} (\texttt{Port}) that can send, receive and forward both quantum and classical \textit{messages} (\texttt{Message}).

While the component base class defines a modular interface for modelling all kinds of hardware, it doesn't internally implement any event-driven behaviour itself.
That behaviour is implemented by a library of base classes that sub-class \texttt{Component}.
The right half of Supplementary Figure~\ref{fig:components-uml} shows the sub-classing hierarchy of the provided components, ranging from quantum and classical channels, quantum memory and processing devices, sources, detectors, clocks, to nodes, connections, and networks.

The \textit{quantum processor} (\texttt{QuantumProcessor}) is a component from the base class library used for modelling general quantum processing devices.
It sub-classes the \textit{quantum memory} (\texttt{QuantumMemory}) component, from which it inherits a collection of \textit{quantum memory positions} (\texttt{MemoryPosition}) for tracking the quantum noise of stored qubits.
The processor can assign a set of \textit{physical instructions} to these positions to describe the operations possible for manipulating their stored qubits,
such as quantum gates and measurements, or initialisation, absorption, and emission processes.
The physical instructions map to general device-independent instructions, for which they specify physical models such as duration and error models specific to the modelled device.
This mapping allows users to write \textit{quantum programs} in terms of device-independent instructions and re-use them across devices.
The quantum programs can include classical conditional logic, make use of parallel execution (if supported by the device), and import other programs.

\subsection{Asynchronous programming networks using protocols}

While components are entities in the simulation describing physical hardware, \textit{protocols} -- represented by the \texttt{Protocol} base class as shown in Supplementary Figure~\ref{fig:protocol-uml} -- are entities that describe the intended virtual behaviour of a simulation.
In other words, the protocol base class is used to model the various layers of software running on top of the components at the various nodes and connections of a network.
That can include, for instance, any automated control software at the physical or link layers of a quantum network stack, up to higher-level programs written at the application layer.

Protocols in NetSquid can be likened to background processes: they can be started, stopped, as well as reset to clear any state.
They can also be nested i.e. a protocol can manage the execution of \textit{sub-protocols} under its control.
To communicate changes of state, such as a successful or failed run, protocols can use a \textit{signalling} mechanism (\texttt{Signal}).

NetSquid defines several sub-classes of the protocol base class that add extra restrictions or functionality.
To restrict the influence of a protocol to only a local set of nodes the \textit{local protocol} (\texttt{LocalProtocol}) can be used.
Similarly, to restrict a protocol to executing on only a single node, which is a typical use case, a \textit{node protocol} (\texttt{NodeProtocol}) is available.
The \textit{service protocol} (\texttt{ServiceProtocol}) describes a protocol interface in terms of the types of requests and responses they support.
Lastly, a \textit{data node protocol} adds functionality to process data arriving from a port linked to a connection, and the \textit{timed node protocol} supports carrying out actions at regularly timed intervals.

Programming a protocol involves waiting for and responding to events,
which is achieved in the simulation engine by defining event handlers that wrap callback functions.
As the complexity of a protocol grows, typically the flow and dependencies of the callback calls do too.
To make the asynchronous interaction between protocol and component entities easier and more intuitive to program and read, the main execution function of a protocol (the \texttt{run()} method) can be suspended mid-function to wait for certain combinations of events to trigger.
This is implemented in Python using the \texttt{yield} statement, which takes as its argument an event expression.
Several helper methods have been defined that generate useful event expressions a protocol can \textit{await}, for instance: \texttt{await\_port\_input()} to wait for a message to arrive on a port, or \texttt{await\_timer()} to have the protocol sleep for some time.

\begin{figure}[htb]
    \centering
    \includegraphics[width=1.0\textwidth]{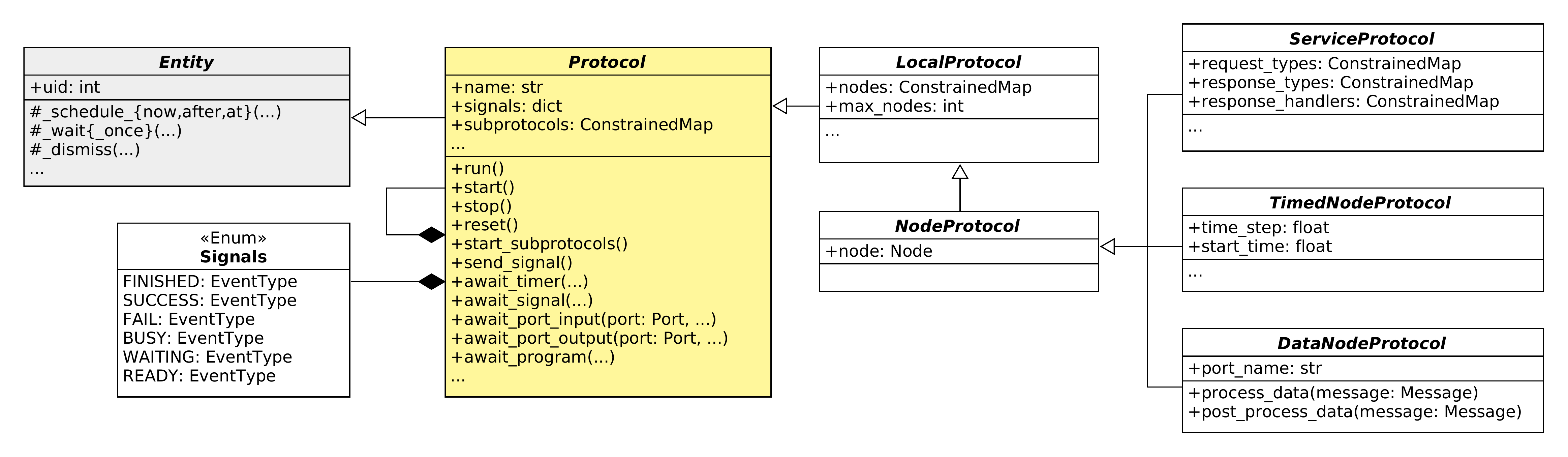}
    \caption{\textbf{Design overview of protocols in NetSquid.}
        Class diagram of the \texttt{Protocol} class and its subclasses.
    Ellipses indicate that not all of a class's public variables and methods are listed.}
    \label{fig:protocol-uml}
\end{figure}

\section{Quantum circuits and network setups for benchmarking}
\label{appendix:benchmarking}

In this section we extend the Methods, section~\nameref{sec:methods-benchmarking}, to provide additional details on the benchmarking simulations presented in the Results, section~\nameref{sec:benchmarking}.

\subsection{Benchmarking of quantum computation runtime}

The quantum circuit used to benchmark the runtime for generating an $n$ qubit GHZ state is shown in Supplementary Figure~\ref{fig:ghz-circuit}.
The $n$ qubits are created in NetSquid with independent quantum states and are combined into the larger state via the CNOT operation.
The measurement operations at the end of the circuit are performed sequentially and each split the measured qubit from its shared quantum state.
Unless otherwise specified the KET and DM formalisms utilise memoization (see Methods, section~\nameref{sec:methods-qubits}).
Memoization is effective because the circuit is successively iterated 30 times.
The reported runtime is the mean runtime of the iterations.
For the baseline comparison with the ProjectQ simulator we set up the circuit in an analogous way to NetSquid,
and its default \texttt{MainEngine} was used with no special settings applied.
Qubits are similarly added sequentially to the growing state via the CNOT operation, and also the measurements are performed sequentially with the measured qubit directly deallocated afterwards.

The quantum circuit used to benchmark the runtime of only the quantum computation involved for a simple repeater chain involving $n$ qubits is shown in Supplementary Figure~\ref{fig:repchain-circuit}.
It is implemented for NetSquid and ProjectQ similarly to the GHZ benchmark, with qubits only combining their quantum states when a multi-qubit gate is performed.
An option has been added to keep qubits \textit{inplace} after measurements i.e. they are not split from their shared quantum states -- in ProjectQ this is achieved by keeping a reference to prevent deallocation.
Noise is applied to each qubit after entanglement by selecting a Pauli gate to mimic depolarising noise, which is done deterministically for convenience.
For this process the runtime is also determined as the mean of 30 successive iterations.

To benchmark the runtimes of quantum computation circuits the processes were timed in isolation from any setup code using the Python \textit{timeit} package.
Python garbage collection is disabled during the timing of each process.
To avoid fluctuations due to interfering CPU processes the reported time is a minimum of five repetitions.

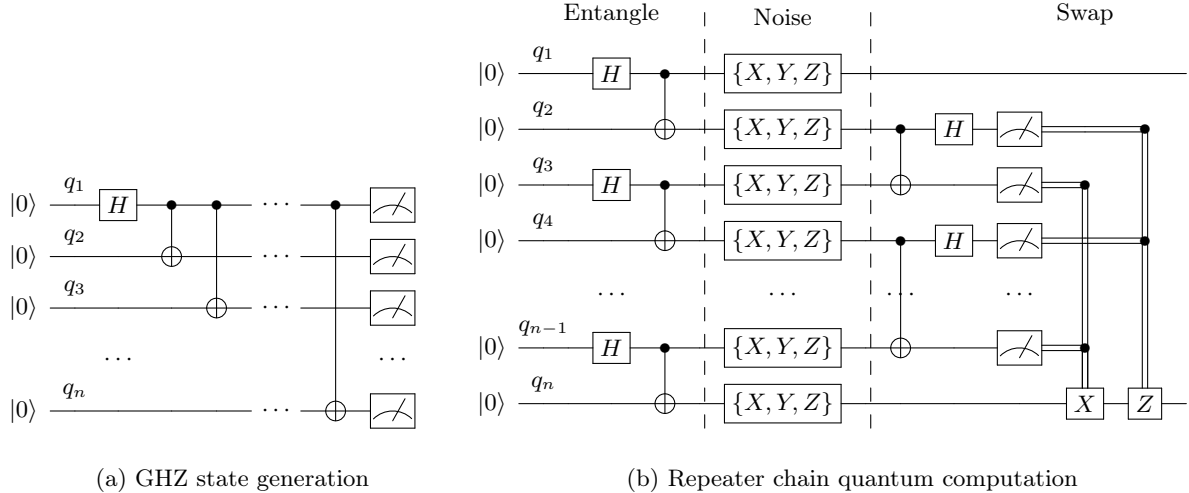
\begin{figure}[tb]
\centering
\begin{subfigure}[b]{0.39\textwidth}
\centering
\begin{equation*}
\Qcircuit @C=1em @R=.7em {
\lstick{\ket{0}} & \ustick{q_1}\qw & \gate{H} & \ctrl{1} & \ctrl{2} & \qw & \cdots & & \ctrl{6} & \meter \\
\lstick{\ket{0}} & \ustick{q_2}\qw & \qw      & \targ    & \qw      & \qw & \cdots & & \qw      & \meter \\
\lstick{\ket{0}} & \ustick{q_3}\qw & \qw      & \qw      & \targ    & \qw & \cdots & & \qw      & \meter \\
                 &     &          &          &          &     &        & &          & \\
                 &     & \cdots   &          &          &     &        & &          & \cdots \\
                 &     &          &          &          &     &        & &          & \\
\lstick{\ket{0}} & \ustick{q_n}\qw & \qw      & \qw      & \qw      & \qw & \cdots & & \targ    & \meter \\
}
\end{equation*}
\caption{GHZ state generation}
\label{fig:ghz-circuit}
\end{subfigure}
\begin{subfigure}[b]{0.59\textwidth}
\centering
\begin{equation*}
\Qcircuit @C=1em @R=.7em {
& & & \ustick{\text{Entangle}} & \barrier[0.2em]{9} & & \ustick{\text{Noise}} \barrier[0.2em]{9} & &
& & & \ustick{\text{Swap}} & & \\
\lstick{\ket{0}} & \ustick{q_1}\qw & \qw & \gate{H} & \ctrl{1} & \qw & \gate{\{X,Y,Z\}} & \qw & \qw      & \qw      & \qw    & \qw       & \qw       & \qw \\
\lstick{\ket{0}} & \ustick{q_2}\qw & \qw & \qw      & \targ    & \qw & \gate{\{X,Y,Z\}} & \qw & \ctrl{1} & \gate{H} & \meter & \cw       & \cctrl{2} & \\
\lstick{\ket{0}} & \ustick{q_3}\qw & \qw & \gate{H} & \ctrl{1} & \qw & \gate{\{X,Y,Z\}} & \qw & \targ    & \qw      & \meter & \cctrl{6} &           & \\
\lstick{\ket{0}} & \ustick{q_4}\qw & \qw & \qw      & \targ    & \qw & \gate{\{X,Y,Z\}} & \qw & \ctrl{4} & \gate{H} & \meter & \cw       & \cctrl{5} & \\
                 &                 &     &          &          &     &                  &     &          &          & \\
                 &                 &     & \cdots   &          &     & \cdots           &     & \cdots   &          & \cdots & \\
                 &                 &     &          &          &     &                  &     &          &          &        & \\
\lstick{\ket{0}} & \ustick{q_{n-1}}\qw & \qw & \gate{H} & \ctrl{1} & \qw & \gate{\{X,Y,Z\}} & \qw & \targ    & \qw      & \meter & \cctrl{1} & \\
\lstick{\ket{0}} & \ustick{q_n}\qw & \qw & \qw & \targ    & \qw & \gate{\{X,Y,Z\}} & \qw & \qw      & \qw      & \qw    & \gate{X}  & \gate{Z}   & \qw \\
}
\end{equation*}
\caption{Repeater chain quantum computation}
\label{fig:repchain-circuit}
\end{subfigure}
\caption{
    \textbf{Circuits used to benchmark quantum computation in the Results, section~\nameref{sec:benchmarking-qcomp}, for $n$ qubits.}
For panel (b) the CNOT control line crossing the ellipses represents multiple lines for $n > 6$ qubits,
following the pattern of $q_2$ and $q_3$. Similarly, the classical control lines represent an AND of the measurement results for $q_3$, $q_5$, \ldots, $q_{n-1}$ and $q_2$, $q_4$, \ldots, $q_{n-2}$ to determine the control of the $X$ and $Z$ gates, respectively.
The noise gates denoted by \{X,Y,Z\} cycle through the Pauli gates (see main text).
Note that this circuit always requires an even number of qubits. }
\label{fig:benchmarking-circuits}
\end{figure}

\subsection{Runtime profiling of a repeater chain simulation}
\label{appendix:repchain-benchmark}

The runtime profiling of NetSquid presented in the Results, section~\nameref{sec:benchmarking-profiling}, is performed for a simple repeater chain.
The network setup of this simulation extends the single repeater presented in Supplementary Figure~\ref{fig:netsquid_distil_repeat} to a chain of nodes by adding the entangling connection shown between each pair of neighbouring nodes.
Direct classical connections are connected between each node and one of the end-nodes, rather than between neighbouring nodes,
and are used to transmit the swapping corrections.
The chosen configuration for this network does not need to be physically realistic;
it suffices for it to be representative of the typical computational complexities.
The nodes are placed at 20km intervals and the channels transmit messages at the speed of light in fibre.
The entanglement sources, assumed to be perfect, are all synchronised and operate at a frequency of 100~kHz.
Physical non-idealities are represented by adding time-dependent depolarising noise to both the quantum channels and quantum memories, as well as dephasing noise to quantum gates.
The corresponding depolarising and dephasing times are $0.1$~s and $0.04$~s, which correspond to the $T_1$ and $T_2$ times presented in section~\nameref{sec:nv-qprocessor} of the Methods.

In a simulation run entanglement is created once between the end-nodes by performing entanglement swaps along the chain.
Protocols are assigned to all but the end-nodes to perform entanglement swaps after each round of entanglement generation,
and send their measurement results as corrections to the same end-node.
A protocol running on the end-node collects these corrections, and applies them if needed.

The runtime of this simulation is profiled to determine the distribution of time spent in the functions of NetSquid's sub-packages,
as well as its dependency packages NumPy and PyDynAA.
To perform this profiling the \texttt{cProfile} package is used.
The reported runtime for a given number of nodes is the mean of 400 successive simulation runs.

\section{Quantum switch: physical network and protocol}
\label{app:switch}

Here, we provide the details of the quantum switch simulations, whose results are presented in section~\nameref{sec:results-switch} of the Results.

We implement the model of Vardoyan et al. \cite{vardoyan2019stochastic}, for which the parameters of the simulation are:
\begin{itemize}
	\item the number of leaf nodes $k$;
	\item the desired size $n$ of the shared entanglement on the leaf nodes;
	\item for each leaf node: the rate $\mu$ at which bipartite entanglement is generated between leaf node and switch;
	\item $B$: the buffer size, i.e. the number of dedicated qubits per leaf node at the switch.
\end{itemize}
In addition, we include $T_2$, the memory coherence time.

\subsection{Physical network}
In the scenario we study, the quantum switch is the centre node of a star-topology network, with $k\geq 2$ leaf nodes.
Each leaf node individually is connected to the switch by a \textit{connection}, which consists of a \textit{source} producing perfect bipartite entangled states $(\ket{00} + \ket{11}) / \sqrt{2}$ on a randomised \textit{clock} and two \textit{quantum connections}, from the source to the leaf and switch node, respectively, for transporting the two produced qubits.
The interval $\Delta t$ between clock triggers is randomly sampled from an exponential distribution with probability $\mu \cdot e^{-\mu\cdot \Delta t}$ where $\mu$ is the rate of the source.
We set the delay of the quantum channels to zero.

Each node holds a single \textit{quantum processor} with enough quantum memory positions for the total duration of our runs.
Each memory position has a $T_2$ \textit{noise model}: if a qubit is acted upon after having resided in memory for time $\Delta t$, then a dephasing map (eq.~\eqref{eq:dephasing-channel}) is applied with dephasing probability $p = \frac{1}{2}\left( 1- e^{-\Delta t/T_2}\right)$.
Each quantum processor can perform any unitary operation or single-qubit measurement; these operations are noiseless and take no time.

\subsection{Protocol of the switch node}
The switch node continuously waits for incoming qubits.
Upon arrival of a qubit from leaf node $\ell$, the switch first checks whether it shares more entangled pairs of qubits with $\ell$ than the pre-specified buffer size $B$; if so, it discards the oldest of those pairs.
Then, it checks whether it holds entangled pairs with at least $n$ different leaves.
If so, then it performs and $n$-qubit GHZ-basis measurement (see below) on its qubits of those pairs.
If multiple groups of $n$ qubits from $n$ distinct nodes are available, then it chooses the oldest pairs.

Directly after completion of the GHZ-basis measurement, we register the measurement outcomes and obtain the resulting $n$-partite entangled state $\ket{\psi}$ on the leaf nodes.
From these, the fidelity $|\langle \psi | \phi_{\text{ideal}}\rangle|^2$ with the ideal target GHZ state $\ket{\phi_{\text{ideal}}}$ is computed.

The $n$-qubit GHZ states are
\begin{equation}
	\label{eq:ghz}
	\Big(\ket{0} \otimes \ket{b_2} \otimes \ket{b_3} \otimes \dots \otimes \ket{b_{n}}
	+ (-1)^{b_1}
	\ket{1} \otimes \ket{\overline{b_2}}\otimes \ket{\overline{b_3}} \otimes \dots \otimes \ket{\overline{b_{n}}}\Big)
	\Big/ \sqrt{2}
\end{equation}
where $b_j \in \{0, 1\}$ and we have denoted $\overline{b} = 1 - b$.
The $n$-qubit \textit{quantum program} that the switch node applies for performing a measurement in the $n$-qubit GHZ basis is as follows: first, a CNOT operation on qubits $1$ and $j$ (1 is the control qubit) is applied for all $j=2, 3, \dots, n$, followed by a Hadamard operation (eq.~\ref{eq:hadamard}) on qubit $1$.
Then, all qubits are measured in the $\ket{0}/\ket{1}$-basis.
If we denote the outcome of qubit $j$ as $b_j$, the GHZ-state that is measured is precisely the one in eq.~\eqref{eq:ghz}.

\section{Hardware parameters for the NV repeater chain}
\label{app:nv-physical-modelling}

Here, we provide the values for the hardware parameters of the nitrogen-vacancy setup used in our simulations.
An overview of all parameters is provided in Supplementary Table~\ref{table:parameters}, including two example sets of improved parameters following the approach in section~\nameref{sec:improvement-factor} of the Methods.

\begin{sidewaystable}
        \begin{tabular}{| l | c | c| c| c|c|}
\hline
		&Noise parameter & Duration/time & Probability &
		\multicolumn{2}{c|}{Improved noise param.}
		\\
		& (`near-term') && of no-error& $3\times$ & $10\times$
		\\\hline
Probability of double excitation $\pde$ (\ref{sec:double-excitation})
& 0.06
& -
&
$\pde$
&
$0.01$
&
$0.003$
\\\hline
Transmission loss $\gamma$ (dB/km, \ref{sec:detection})
& 0.2
& -
& $\cross$
&
$\cross$
&
$\cross$
\\\hline
Dark count probability $\pdc$ (\ref{sec:dark-counts})
&
		$2.5 \cdot 10^{-8}$
& -
& $1 - \pdc$
&

		$8.3 \cdot 10^{-8}$
&
		$2.5 \cdot 10^{-9}$
\\\hline
		Probability of photon detection (\ref{sec:detection}) &  &&&&\\
for zero-length fibre $\pdetnofibre$
& 0.0046
& -
&
$\pdetnofibre$
&
0.16
&
0.58
\\\hline
Interferometric phase uncertainty
		& $0.35$
& -
&
            $1 - p_{\text{phase}}$
		&
		0.20
		&
		0.11
\\
 $\sigma_{phase}$ (rad, \ref{sec:phase-uncertainty}) 
&&& (eq.~\eqref{eq:phase-dephasing-prob})&&\\\hline
Photon visibility $V$ (\ref{sec:visibility})
& 0.9
& -
&
$V$
&
0.97
&
0.99
\\\hline
$N_{1/e}$: indicates nuclear dephasing
&
1400
& -
&
$p$ from eq.~\eqref{eq:prob-carbon-dephasing}
&
4206
&
14006
\\
during electron initialization (\ref{sec:nuclear-dephasing})
&
&
&
&
&
\\\hline
Electron $T_1$
(\ref{app:nv-parameters-quantum-processor})
& -
& 1h
&
$e^{-1/T_1}$
&
2.8h
&
10h
\\\hline
Electron $T_2^*$
(\ref{app:nv-parameters-quantum-processor})
& -
& 1.46 s
& $e^{-1/T_2^*}$
&
4.4s
&
14.6s
\\\hline
Carbon $T_1$
(\ref{app:nv-parameters-quantum-processor})
& -
& 10h
&
$e^{-1/T_1}$
&
27h
&
100h
\\\hline
Carbon $T_2$
(\ref{app:nv-parameters-quantum-processor})
&
&
1s
&
$e^{-1/T_2}$
&
3s
&
10s
\\\hline
Carbon initialization to $\ket{0}$
(\ref{app:nv-parameters-quantum-processor})
&
$F$=0.997
&
310 $\mu$s&
$2F - 1$
&
$F=0.999$
&
$F=0.9997$
\\\hline
Carbon $Z$-rotation gate
(\ref{app:nv-parameters-quantum-processor})
&
$F$=0.999
&
20 $\mu$s
&
$4(F - 1)/3$
& 
$F>0.9999$
&
$F>0.9999$
\\\hline
E-C controlled-$R_X$-gate
&
$F_{EC}$=0.97
& 500 $\mu$s
& $(4\sqrt{F_{EC}} -1)/3$ 
&
$F_{EC}=0.990$
&
$F_{EC}=0.997$
\\
(electron=control)
(\ref{app:nv-parameters-quantum-processor})
&&&&&\\
\hline
Electron initialization to $\ket{0}$
(\ref{app:nv-parameters-quantum-processor})
&
$F$=0.99
&
2 $\mu$s
&
$2F - 1$
&
$F=0.997$
&
$F=0.999$
\\\hline
Electron single-qubit gate
(\ref{app:nv-parameters-quantum-processor})
&
$F$=1& 5 ns
&
$(4F-1)/3$
&
$F=1$
&
$F=1$
\\\hline
Electron readout (eq.~\eqref{eq:nv-readout} and sec.~\ref{app:nv-parameters-quantum-processor}
) &
$0.95/0.995 (f_0/f_1)$ &
3.7 $\mu$s 
& $f_x$
&
$0.983/0.9983$
&
$0.995/0.9995$
\\\hline
    \end{tabular}
    \caption{
        \textbf{Physical parameters dealing with elementary link generation, memory coherence times and duration and fidelities ($F$) of the gates.}
	    Depicted are both parameters of the dataset `near-term' and two examples of improved parameter sets (see Methods, section~\nameref{sec:improvement-factor}), for $3$ times and $10$ times improved, respectively, together with the function to convert the parameter to a `probability of no-error' to compute the improved parameter value for other factors.
	    The `near-term' values correspond to $1\times$ improvement.
The transmission loss parameter $\gamma$ is not changed by the improvement procedure and equals $\gamma = 0.2$ dB/km during any of our simulations.
\label{table:parameters}
}
\end{sidewaystable}

\subsection{Parameters for elementary link generation}
\label{app:nv-parameters-elementary-link-generation}

For generating entanglement between the electron spins of two remote NV centres in diamond, we simulate a scheme based on single-photon detection, following its experimental implementation in~\cite{humphreys2018deterministic}.
The setup consists of a middle station which is positioned exactly in between two remote NV centres in diamond.
The middle station is connected to the two NVs by glass fibre and contains a 50:50 beam splitter and two non-number resolving photon detectors.
In the single-photon scheme, each NV performs the following operations in parallel.
First, the electron of each NV system is brought into the state $\sqrt{\alpha} \ket{0} + \sqrt{1 - \alpha} \ket{1}$ by optical and microwave pulses, where $\alpha$ is referred to as the bright-state parameter.
Then, a laser pulse triggers the emission of a photon, yielding the spin-photon state $\sqrt{\alpha} \ket{0}_s \otimes\ket{1}_p + \sqrt{1 - \alpha} \ket{1}_s \otimes \ket{0}_p$, where $\ket{0}$ ($\ket{1}$) denotes absence (presence) of a photon.
We set $\alpha = 0.1$ since for that value, fidelity is approximately maximal at lab-scale distances \cite{humphreys2018deterministic}; optimising over $\alpha$ is out of the scope for this work.
We assume that the delay until emission of the photon is fixed at $3.8$ $\mu$s \cite{hermans2020privatecommunication}.

From each NV centre, the emitted photons are transmitted to the middle station through glass fibre, where a 50:50 beam splitter effectively erases the which-way information of an incoming photon.
An attempt at generating entanglement using this single-click scheme is declared successful if precisely one of the detectors clicks, which happens if either (a) a single photon arrives at the detector and the other does not or (b) both photons arrive (in case (b), only a single detector clicks due to the Hong-Ou-Mandel effect).
Case (a) yields the generation of the spin-spin state $\ket{\phi_{\pm}} = (\ket{01} \pm \ket{10})/\sqrt{2}$, where $\pm$ indicates which of the two detectors clicked, while case (b) results in $\dyad{00}$.
Given that a single photon arrives, the probabilities that the other photon has or has not arrived are respectively $1 - \alpha$ and $\alpha$ (in the absence of loss).
Therefore, a successful attempt results in the generation of the spin-spin state $(1 - \alpha) \dyad{\phi_{\pm}} + \alpha \dyad{00}$.
We refer to \cite{humphreys2018deterministic} for a more in-depth description of the scheme.
We assume that the speed of the photons and of all classical communication equals $c / n_{ri}$, where $c$ is the speed of light in vacuum and $n_{ri} = 1.44$ is the refractive index of glass \cite{paschotta2020fibers}.
\\

In reality, however, several sources of noise affect the produced state, which we treat below.

\subsubsection{Imperfect detection}
\label{sec:detection}

The total probability $\pdet$ that a photon, emitted by the NV, will be detected in the midpoint is given by the product of four probabilities \cite{rozpedek2018near-term}
\begin{itemize}
\item{
		the probability $p_{\text{zero\_phonon}}$ that the photon frequency is in the zero-phonon line \cite{riedel2017deterministic};
}
\item{
the probability $p_{\text{collection}}$ that the photon is collected into the glass fibre;
}
\item{the probability $p_{\text{transmission}}$ that the photon does not dissipate in the fibre during transmission;}
\item{the probability $p_{\text{detection}}$ that the photon is detected, conditioned on the fact that it reaches the detector.}
\end{itemize}
Thus we can write
\begin{equation}
\label{eq:pdet-computation}
\pdet = \pdetnofibre \cdot p_{\text{transmission}}
\end{equation}

where
\begin{equation}
\pdetnofibre = p_{\text{zero\_phonon}} \cdot p_{\text{collection}} \cdot p_{\text{detection}}
.
\end{equation}
The transmission probability is given by
\[
p_{\text{transmission}} = 10^{-(L/2) \cdot \gamma / 10}
\]

where $L$ is the internode distance (i.e.\ $L/2$ is the length of the fibre from NV to middle station) and $\gamma$ is the loss parameter which depends on the photon frequency.
In our simulations, we assume that the photon frequency is converted to the telecom frequency, corresponding to $\gamma = 0.2 $ dB/km.
Also, we assume that the emission in the zero-phonon line is enhanced by an optical cavity from $p_{\text{zero\_phonon}} = 3$\% (without cavity) to $p_{\text{zero\_phonon}} = 46$\% (with cavity) \cite{riedel2017deterministic}.
We set the detection efficiency $p_{\text{detection}}$ to 0.8 \cite{hensen2015loophole}.

What remains is the collection efficiency $p_{\text{collection}}$, which we compute from experimental values (no cavity, no conversion to the telecom frequency) using eq.~\eqref{eq:pdet-computation} with $L=2$m, $\pdet = 0.001$ \cite{hermans2020privatecommunication}
and $\gamma = 5$ dB/km \cite{dahlberg2019linklayer} (for the zero-photon line frequency), yielding $p_{\text{collection}} = 0.042$.  
Since frequency conversion to the telecom frequency is a probabilistic process and only succeeds with probability $30\%$ \cite{zaske2012visible}, we set $p_{\text{collection}} = 0.3 \cdot 0.042$.

\subsubsection{Other sources of noise}
\label{sec:dark-counts}
\label{sec:visibility}
\label{sec:double-excitation}
\label{sec:phase-uncertainty}
Other sources of noise on the freshly generated electron-electron entanglement are

\begin{itemize}
\item{
Dark counts:
		a photon detector falsely registering.
		The dark count probability follows a Poisson distribution $\pdc = 1 - e^{-t_{\text{w}} \cdot \lambda_{\text{dark}}}$ where $t_{\text{w}} = 25$ ns~\cite{humphreys2018deterministic} is the duration of the time window at the midpoint. We set $\lambda_{\text{dark}} = 1$ Hz as the dark count rate.
}
\item{
Imperfect photon indistinguishability.
The generation of entanglement at the middle station is based upon the erasure of the which-way information with respect to the path of the photons.
Only in case the photons are fully indistinguishable, the which-way information is erased perfectly.
		The overlap of the photon states is given by the visibility $V$, which we set to 0.9 \cite{humphreys2018deterministic}.
}
\item{Double excitation of the electron spin.
When triggered to emit a photon by a resonant laser pulse, an NV centre could be excited twice, which results into the emission of two photons.
		We set its occurrence probability to $\pde = 0.06$ \cite{hermans2020privatecommunication}.
		}
\item{Photon phase uncertainty.
The photons which interfere at the midpoint acquired a phase during transmission and a difference of these phases influences the precise entangled state that is produced~\cite{kalb2017entanglement}.
        Given a standard deviation $\sigma_{\text{phase}} = 0.35$ rad \cite{hermans2020privatecommunication}
		of the acquired phase, we compute the dephasing probability as \cite{humphreys2018deterministic}
\begin{equation}
\label{eq:phase-dephasing-prob}
    p_{\text{phase}} = \frac{1}{2} \left(1 - e^{-\sigma_{\text{phase}}^2 / 2}\right)
	.
\end{equation}
		}
\end{itemize}

\subsubsection{Nuclear spin dephasing during entanglement generation}
\label{sec:nuclear-dephasing}

The initialisation of the electron spin state induces dephasing of the carbon spin states through their hyperfine coupling.
Following~\cite{kalb2018dephasing}, we model this uncertainty by a dephasing channel for each attempt with dephasing probability
\begin{equation}
\label{eq:nuclear-dephasing-single-attempt}
	p_{\text{single}} = \frac{1}{2} (1 - \alpha) \cdot (1 - e^{-C_{\text{nucl}}^2 / 2}).
\end{equation}
The parameter $C_{\text{nucl}}$ is the product of the coupling strength between the electron spin and the carbon nuclear spin, and an empirically determined decay constant.
Rather than expressing the dephasing probability as function of $C_{\text{nucl}}$, we express the magnitude of nuclear dephasing as $N_{1/e}$, the number of electron spin pumping cycles after which the Bloch vector length of a nuclear spin in the state $(\ket{0} + \ket{1}) / \sqrt{2}$ in the $X-Y$ plane of the Bloch sphere has shrunk to $1 / e$, when the electron spin state has bright-state parameter $\alpha = 0.5$ (i.e the electron spin is in the state $(\ket{0} + \ket{1})/ \sqrt{2}$).

Let us compute how $p_{\text{single}}$ depends on $N_{1/e}$ instead of on $C_{\text{nucl}}$.
First, we find by direct computation that the equatorial Bloch vector length of a state is shrunk by a factor $1 - 2 p$ after a single application of the single-qubit dephasing channel (eq.~\eqref{eq:dephasing-channel}) with probability $p$ ($p \leq \frac{1}{2}$).

Equating $(1 - 2 p)^{N_{1/e}} = 1 / e$  yields 
\begin{equation}
	\label{eq:prob-carbon-dephasing}
p = \frac{1}{2} \left(1 - e^{-1 / N_{1/e}}\right)
.
\end{equation}
Equating $p_{\text{single}}$ from eq.~\eqref{eq:nuclear-dephasing-single-attempt} with $\alpha = 0.5$ and $p$ from eq.~\eqref{eq:prob-carbon-dephasing}, followed by solving for $C_{\text{nucl}}$ yields
\[
	1 - e^{-C_{\text{nucl}}^2 / 2} = 2 \left(1 - e^{-1 / N_{1/e}}\right)
	.
	\]
Substituting back into eq.~\eqref{eq:nuclear-dephasing-single-attempt} yields an expression for general $\alpha$:
\begin{equation}
\label{eq:nuclear-dephasing-single-attempt-number-of-attempts}
	p_{\text{single}} = (1 - \alpha) \left(1 - e^{-1 / N_{1/e}}\right)
	.
\end{equation}
We set $N_{1/e} = 1400$ \cite{beukers2019masterthesis}.
\\

\subsubsection{Local processing parameters}
\label{app:nv-parameters-quantum-processor}

For the dynamics of the electron spin, we use $T_1 = 1$ hour and $T_2^* = 1.46$s \cite{abobeih2018one}.
For the carbon nuclear spin, we take $T_1 = 10$ hours and $T_2 = 1$ s (experimentally realised: $T_1 =6$m and $T_2 \approx 0.26-25$s \cite{bradley2019tenqubit}).
For the noise of the controlled-$R_X$ gate (Methods, section~\nameref{sec:nv-qprocessor}), we set the depolarising probability $p=0.02$ (denoted as $p_{EC}$ in Supplementary Table~\ref{table:parameters}), since by simulation of the circuit~\cite[Fig.~2a]{kalb2017entanglement}, we find that this value agrees with the experimentally found effective circuit fidelity of $0.95$.
The corresponding fidelity of the gate is $F_{EC} = (1 - 3 p_{EC} / 4)^2 = 0.97$.
The initialisation fidelities of the electron and carbon spins are set at $0.99$ \cite{reiserer2016robust} and $0.997$ \cite{bradley2019tenqubit}.
We use $0.999$ for the carbon $Z$-rotation gate fidelity (experimentally achieved: $1$ \cite{taminiau2014universal}).
The durations of local operations are identical to our earlier simulations (see Appendix D, Table 6 in \cite{dahlberg2019linklayer} and references therein).
We summarise all hardware values in Supplementary Table~\ref{table:parameters}.

\section{Protocols and quantum programs for the NV repeater chain}
\label{app:nv-protocols}

Here, we first elaborate on the sequence of quantum operations and classical communication that the NV protocol building blocks consist of (Methods, section~\nameref{sec:nv-protocols}).
Then, we describe in detail the two repeater chain protocols we simulated.

\subsection{Operations for the building blocks: \move, \unmove, \distill\ and \swap}

\move~is the mapping of the electron spin state onto a free nuclear spin.
The operation requires the nuclear spin state to be $\ket{0}$ and the circuit, given in Supplementary Figure~\ref{fig:nv-circuits}(a), performs the following mapping:
\begin{equation}
\label{eq:move-operation}
\ket{\phi}_e \otimes \ket{0}_n \mapsto \ket{0}_e \otimes \left(H \ket{\phi}_n\right)
\end{equation}
where $\ket{\phi}$ is an arbitrary single-qubit state and 
\begin{equation}
	\label{eq:hadamard}
	H := \frac{1}{\sqrt{2}} \left(\dyad{0} + \dyad{0}{1} + \dyad{1}{0} - \dyad{1}\right)
\end{equation}
is the Hadamard gate.
By \unmove~(Supplementary Figure~\ref{fig:nv-circuits}(b)), we denote the reverse operation,
\[
\ket{0}_e \otimes \ket{\phi}_n \mapsto \left(H \ket{\phi}_e\right) \otimes \ket{0}_n
.
\]

We simulate the specific entanglement distillation protocol (\distill) from Kalb et al.~\cite{kalb2017entanglement}, which acts upon an electron-electron state and a nuclear-nuclear state to probabilistically increase the quality of the nuclear-nuclear state, at the cost of having to read out the electron-electron state.
In the protocol, the two involved nodes each perform a sequence of local operations including a measurement (Supplementary Figure~\ref{fig:nv-circuits}(c)), followed by communicating the measurement outcome from the circuit to each other.
In this work, we only use distillation in one of the two repeater schemes we consider (\withdistill) and in that case, the success condition is as follows: if the nodes are adjacent, then the measurement outcomes should both equal $0$ (i.e. the bright state of the electron), while otherwise the measurement outcomes only need to be equal (i.e. both $0$ or both $1$). 
In the case of failure, the nuclear-nuclear state is considered lost.

The entanglement swap (\swap) converts two short-distance entangled qubit pairs $A-M$ and $M-B$ into a single long-distance one $A-B$, where $A, B$ and $M$ are nodes.
It is equivalent to performing quantum teleportation \cite{bennett1993teleporting} to a qubit which is part of a larger remote-entangled state.
Our entanglement swapping protocol at node $M$ starts by assuming that one of $M$'s qubits which is involved in the entanglement swap is the electron spin.
Then, a series of local operations including measurements (Supplementary Figure~\ref{fig:nv-circuits}(d)) is performed; the measurement outcomes are transferred to both $A$ and $B$.
In the original teleportation proposal, $B$ performs a local operation to correct the state $A-B$ to the expected one.
However, due to the fact that such correction operation is generally not directly possible to perform on the nuclear spin state (see the allowed operations in section~\nameref{sec:nv-qprocessor} of the Methods), we opt for the approach where the correction operation is tracked in a classical database.
Details of this tracking, including how it affects the entanglement distillation and entanglement swap protocols, are given in Supplementary Note \ref{app:nv-pauli-frame}.

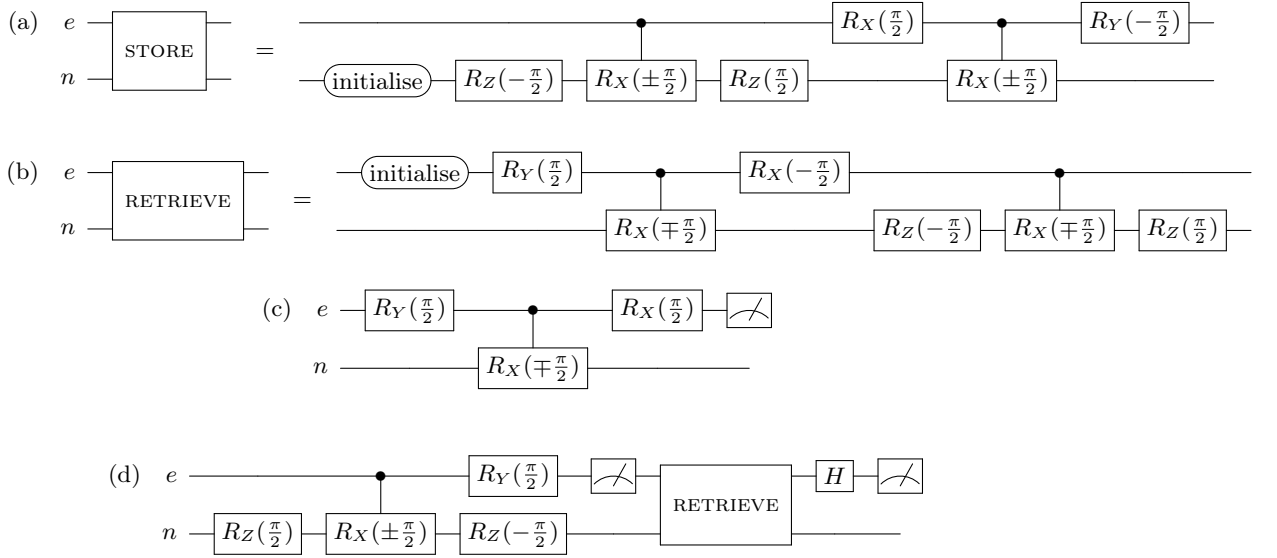
\begin{figure*}
\centering
\begin{subfigure}[b]{0.75\textwidth}
\centering
\[
\Qcircuit @C=1em @R=1.4em {
	\lstick{\textnormal{(a)} \quad e}
	&\multigate{1}{\move} & \qw \\
	\lstick{n}
	&\ghost{\move} & \qw
	}
	\quad
	\raisebox{-2.7ex}[-10pt][-10ex]{$=$}
	\quad
\Qcircuit @C=1em @R=.7em {
	& \qw & \qw          & \ctrl{1} & \qw     & \gate{R_X(\frac{\pi}{2})} & \ctrl{1}  & \Yminuspitwo & \qw\\
	&	\measure{\textnormal{initialise}}
	& \Zminuspitwo & \CXDIR   & \Zpitwo & \qw                       & \CXDIR    & \qw     & \qw
}
\]
\vspace{1em}
\[
\Qcircuit @C=1em @R=1.4em {
	\lstick{\textnormal{(b)} \quad e}
	&\multigate{1}{\unmove} & \qw \\
	\lstick{n}
	&\ghost{\unmove} & \qw
	}
	\quad
	\raisebox{-2.7ex}[-10pt][-10ex]{$=$}
	\quad
\Qcircuit @C=1em @R=.7em {
	&	\measure{\textnormal{initialise}}	& \Ypitwo & \ctrl{1}       & \gate{R_X(-\frac{\pi}{2})} & \qw     & \ctrl{1} & \qw   & \qw\\
	&\qw & \qw     & \CXDIRminus   & \qw                        & \Zminuspitwo & \CXDIRminus  & \Zpitwo & \qw
}
\]
\vspace{1em}
\[
\Qcircuit @C=1em @R=.7em {
	\lstick{\textnormal{(c)} \quad e}
	& \Ypitwo & \ctrl{1}       & \gate{R_X(\frac{\pi}{2})} & \meter\\
	\lstick{n}
	& \qw     & \CXDIRminus   & \qw                        & \qw
}
\]
\vspace{1em}
\newline
\[
\Qcircuit @C=1em @R=.7em {
	\lstick{\textnormal{(d)}\quad e} & \qw     & \ctrl{1}          & \Ypitwo     & \meter{} &  \multigate{1}{\unmove} & \gate{H} & \meter{} \\
\lstick{n} & \Zpitwo & \CXDIR & \Zminuspitwo & \qw                     & \ghost{\unmove} & \qw & \qw
}
\]
\end{subfigure}
\caption{
\label{fig:nv-circuits}
\textbf{Quantum circuits used in simulations of the NV repeater chain, acting on an electron ($e$) and nuclear ($n$) spin.}
Figure depicts the quantum circuit for the NV repeater protocol building blocks: (a) \move~operation (mapping electron spin state onto the nuclear spin), (b) \unmove~(reverse operation to \move), (c) entanglement distillation, (d) entanglement swap.
}
\end{figure*}

\subsection{Repeater chain protocols \swapasap~and \withdistill}
\label{sec:nv-repchain-protocol}

We describe two protocols for the NV repeater chain: \swapasap, a protocol where a node performs an entanglement swap as soon as it holds two entangled pairs, one in each direction of the chain, and \withdistill, a nested protocol with distillation at each nesting level which is based on the BDCZ protocol~\cite{briegel1998quantum}.
Both protocols run asynchronously on each node.

\def\request{{\sc send-request}}
\def\sleep{{\sc sleep}}
\def\waituntilconfirmation{{\sc wait-until-confirmation}}
\def\confirm{{\sc send-confirmation}}
In both protocols, a node remains idle until it is triggered to check whether it should perform an action.
It is triggered at the following three moments: (a) at the start of the simulation, (b) when the node receives a classical message (if a node is busy upon reception, the message is stored and responded to later)
, (c) when its previous action is finished.
A simulation run finishes as soon as the two end nodes share a single entangled pair of qubits.

For the \swapasap~protocol, the sequence of operations that a node performs depends on its index in the chain (start counting from left to right, nodes have indices $1, 2, 3, \dots$).
If the index of the node is even, the node sends a request for \entgen~to its left neighbour, and starts the operation as soon as it has received confirmation. After performing \move~to free the electron spin, it repeats it for its right neighbour.
Odd-indexed nodes remain idle until reception of an \entgen~request, after which they perform \move~if necessary to free the electron, send a confirmation, sleep for the duration of the message transmission, followed by performing \entgen.
Once a node has entanglement with both directions, it performs a \swap~and sends the outcome to the end nodes.

The two end nodes are exceptions to the above.
The left end node (i.e. with index $1$) behaves like an odd-indexed node, but without performing \swap.
The same holds for the rightmost node (i.e. the node with the largest index), unless its index is even, in which case it initiates and performs entanglement generation with the adjacent node on the left.

The \withdistill~protocol is a variant of the BDCZ protocol \cite{briegel1998quantum}, adapted to the fact that an NV cannot perform multiple \entgen~, \distill~or \swap~operations in parallel due to its restricted topology (Methods, section~\nameref{sec:nv-qprocessor}).
In the adapted version, nodes take the role of \textit{initiator} of one of the three main actions (\entgen, \distill, \swap) if the action occurs at the highest nesting level that this node belongs to.
To be precise, we do the following.
In a repeater chain with $2^n + 1$ nodes, denote by $\{0, 1, 2, \dots, 2^n\}$ the indices of the nodes from left to right.
A node (not an end node) with index $k \in \{1, 2, 3, \dots, 2^n - 1\}$ initiates an action only if the entanglement that is involved in the task spans precisely $f_n (k)$ segments, where
\begin{eqnarray*}
	f_1(k) &=& 1 \text{ for all $k$}\\
	f_n (k) &=&
  \begin{cases}
	  f_{n-1} (k) & \text{if } k < 2^{n - 1}, \\
	  f_{n-1} (2^{n} - k) & \text{if } k > 2^{n - 1}, \\
	  2^{n - 1} & \text{if } k = 2^{n - 1}.
  \end{cases}
\end{eqnarray*}

End nodes are never initiators.

When a node (index $k$) is triggered, it performs the following checks in order and performs the first action for which the check holds true:

\begin{enumerate}
	\item{
			If it shares entangled pairs with nodes $k - f_n(k)$ and $k + f_n(k)$, and both are the immediate result of successful distillation: perform \swap~and send the measurement outcomes to the involved nodes
}
\item{
		\label{item:distill-left}
		If it holds two entangled pairs with node $k - f_n(k)$ and neither pair is the result from successful entanglement distillation: send a request to distill to the node, wait for confirmation, followed by performing \distill
}
\item{
Same as~\ref{item:distill-left}, but now for \distill~on the right, i.e. remote node has index $k + f_n(k)$
}
\item{
		\label{item:request-response}
If there are any request-messages that have not been responded to yet: pick the oldest one and act as follows.
Respond to the message with a confirmation message, followed by sleeping for the time that the confirmation takes to arrive at the remote node. Then perform the requested action (\entgen~or \distill).
}
\item{
		\label{item:entgen-left}
If $f_n(k) = 1$ and the node does not hold entanglement with its immediate left neighbour that is the result of successful entanglement distillation: send a request for \entgen~to the node, wait for confirmation, followed by performing \entgen.
}
\item{
		Same as \ref{item:entgen-left} for right adjacent node.
		}
\end{enumerate}

If no action follows from the checks above, then the node remains idle until the next time at which it is triggered.
In the operations above, if necessary \entgen~is preceded by \move~to free the electron by storing its state into a free carbon spin. \distill~is preceded by a combination of \move~and \unmove~to ensure the correct state lives on the electron spin, and so is \swap~in case neither of the two to-be-swapped qubits live on the electron.
Since end nodes are never initiators, they only check~\ref{item:request-response}.

\section{Tracking of correction operations in the NV repeater chain}
\label{app:nv-pauli-frame}

\newcommand{\bellstate}[2]{\ket{\phi[#1, #2]}}
\def\poslose{\pi_{\textnormal{lose}}}
\def\poskeep{\pi_{\textnormal{keep}}}
\def\Plose{P_{\textnormal{lose}}}
\def\Pkeep{P_{\textnormal{keep}}}

Here, we explain how nodes of the NV repeater chain track the precise entangled state they hold.
This is done by associating unitary operations to each qubit, which map the state of two remotely entangled qubits back to $(\ket{01} + \ket{10}) / \sqrt{2}$ in the ideal case.
Tracking these unitaries (gates) in a classical database, instead of performing them on the (imperfect) quantum hardware, has the advantage of avoiding gate noise.
This argument is even stronger for NV centres in case the remote-entangled state is held by a carbon nuclear spin, because direct application of a correction operator to a carbon spin is generally not possible due to the restricted topology of the NV quantum processor (Methods, section~\nameref{sec:nv-qprocessor}).
Thus, performing the correction operator to the nuclear spin requires even more gates, namely the ones to map the nuclear spin to the electron spin (the \unmove~operation, see section~\nameref{sec:nv-protocols} of the Methods), where the correction operator could be applied.

In what follows, we first explain how we track the correction operations.
Then, we describe how the tracking changes the protocol building blocks from section~\nameref{sec:nv-protocols} of the Methods and subsequently prove the correctness of the tracking in the ideal case.

Let us denote the four Bell states as
\begin{align*}
	\bellstate{\pm 1}{1}
	&=
	(\ket{00} \pm \ket{11}) / \sqrt{2},
\\
	\bellstate{\pm 1}{-1}
	&=
	(\ket{01} \pm \ket{10}) / \sqrt{2} 
	.
\end{align*}

To each of the qubits it holds, a node associates a single-qubit Pauli operator $\unit, X, Y$ or $Z$, which are defined as
\[
	\unit = \dyad{0} + \dyad{1}, Z = \dyad{0} - \dyad{1},
	X = \dyad{0}{1} + \dyad{1}{0},
	Y = -i \dyad{0}{1} + i \dyad{1}{0}
	.
	\]
The goal of the tracking is, at any time during the simulation, for any two nodes $A$ and $B$ sharing electron-electron entanglement, that the target electron-electron state equals
\begin{equation}
	\label{eq:tracking-invariance-1}
	\ket{\psi} \equiv (P_A \otimes P_B)\bellstate{1}{-1}
	.
\end{equation}
Here, $P_A$ and $P_B$ denote the Pauli correction operations of node $A$ or $B$, respectively, and $\equiv$ denotes equality modulo a complex number of norm 1.

In what follows, it will be more convenient to use the following equivalent statement to eq.~\eqref{eq:tracking-invariance-1}:
\begin{equation}
	(P_A \otimes P_B) \ket{\psi} \equiv \bellstate{1}{-1}
	\label{eq:tracking-invariance}
	.
\end{equation}

\subsection{Tracking correction operators during the NV repeater chain protocol}
\label{sec:tracking-protocol-blocks}

Here, we explain how each of the four protocol building blocks from section~\nameref{sec:nv-protocols} of the Methods are adjusted to ensure that eq.~\eqref{eq:tracking-invariance} holds after the operations \entgen, \distill~and \swap.
\\

\textit{Entanglement generation.}
Suppose that nodes $A$ and $B$ perform the \entgen~protocol.
In the absence of noise, this protocol (approximately) produces the state $\bellstate{\pm 1}{-1}$, where $\pm$ denotes which detector clicked (Methods, section~\nameref{sec:generation-magic}).
If the $+$-detector clicked, then the state that A and B hold is the desired state $\bellstate{1}{-1}$, so we set $P_A = P_B = \unit$.
If the other detector clicked, then the produced state is $\bellstate{-1}{-1}$.
Therefore, one of the nodes (for example, the one with the higher position index in the chain) sets the correction operator to $Z$, whereas the other sets it to $\unit$, since $(\unit \otimes Z) \bellstate{-1}{-1} = \bellstate{1}{-1}$.
\\

\textit{Storing and retrieving qubits.}
Locally mapping the state of a qubit onto a different memory position by the \move~or \unmove~circuits does not alter the correction Pauli corresponding to that qubit.
\\

\textit{Entanglement distillation.}
Suppose that nodes $A$ and $B$ wish to perform the \distill~protocol, which starts by $A$ and $B$ sharing an electron-electron pair (correction Paulis $P_A^e$ and $P_B^e$ at node $A$ and $B$, respectively) and a nuclear-nuclear pair ($P_A^n$ and $P_B^n$).
In the protocol, first both nodes apply $P^n \cdot P^e$ to their electron spin qubit.
Then, both nodes locally perform the distillation circuit from Supplementary Figure~\ref{fig:nv-circuits}(c), followed by sending both the measurement outcome and $P^n$ to the other node.
The nodes determine whether the distillation succeeded using the condition explained in section~\ref{app:nv-protocols}.
In case of failure, the nuclear-nuclear state is discarded.
In case of success, one of the nodes in the chain (for example, the one with the lower position index in the chain) sets $P^n = \unit$, while the other sets
\begin{equation}
P^n =
\begin{cases}
	Y & \text{if $P_A^n \in \{X, Y\}$ and $P_B^n \in \{\unit, Z\}$}
	\\
	Y & \text{if $P_A^n \in \{\unit, Z\}$ and $P_B^n \in \{X, Y\}$}
	\\
	\unit & \text{otherwise}
	.
\end{cases}
\end{equation}
Below, in section~\ref{sec:tracking-proof-distill} of this Supplementary Note, we show that after this procedure, eq.~\eqref{eq:tracking-invariance} still holds.
\\

\textit{Entanglement swapping.}
Suppose that node $M$ wants to execute the \swap~protocol on shared pairs $A-M$ and $M-B$, with nodes $A$ and $B$ respectively.
We denote $M$'s correction Paulis as $P_M^A$ and $P_M^B$.
First, $M$ performs the Bell-state measurement circuit from Supplementary Figure~\ref{fig:nv-circuits}(d).
Let us denote the individual measurement outcomes of the circuit as $m_{\textnormal{earlier}}$ and $m_{\textnormal{later}}$ (both take values from $\{1, -1\}$), which correspond to the measured Bell state $\bellstate{a}{b}$ with $a=-1\cdot m_{\textnormal{earlier}}\cdot m_{\textnormal{later}}$ and $b=m_{\textnormal{later}}$.
Then, $M$ sends the Pauli $\unit$ to $A$, while to $B$ it sends $P_M^A \cdot P_M^B \cdot Q$, where $Q$ is given by
\begin{equation}
			\begin{tabular}{|c | c|}
				\hline
				$(a, b)$ & $Q$ \\\hline\hline
				$(1, 1)$ & $\X$\\\hline
				$(1, -1)$ & $\unit$\\\hline
				$(-1, 1)$ & $\Y$\\\hline
				$(-1, -1)$ & $\Z$\\\hline
			\end{tabular}
\label{table:pauli-correction}
\end{equation}
Both nodes $A$ and $B$ multiply their local Pauli with the Pauli they received from $M$.
The proof that after the swap, eq.~\eqref{eq:tracking-invariance} still holds can be found below in section~\ref{sec:tracking-proof-swap} of this Supplementary Note.

\subsection{Correctness proof of the correction operator update for \distill}
\label{sec:tracking-proof-distill}

Here, we prove that eq.~\eqref{eq:tracking-invariance} holds for the states that are outputted by the protocols for entanglement distillation and swapping explained above.

Let us start with entanglement distillation.
For this, we denote by `physical nuclear-nuclear state' the joint state of the nuclear spins of node $A$ and $B$.  
By direct computation, one can show the following.

\begin{prop}
	\label{prop:distillation-circuit-output-physical}
		Suppose that nodes $A$ and $B$ share the state $\bellstate{a}{b}$ on the electrons and the physical nuclear-nuclear state $\bellstate{c}{d}$, where $a, b, c, d \in \{1, -1\}$. When both nodes execute the distillation circuit from Supplementary Figure~\ref{fig:nv-circuits}(c), the resulting state on the carbon nuclear spins is 
	\[ \bellstate{c}{-a\cdot c\cdot d}\]
	and the measurement outcome $m_1 \in \{1, -1\}$ on one side is uniformly random, while the outcome of the other node equals $m_2 = m_1 \cdot b \cdot c$.
\end{prop}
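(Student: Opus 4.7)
Since every gate in the distillation circuit of Supplementary Figure~\ref{fig:nv-circuits}(c) is Clifford (the $R_Y(\pi/2)$ and $R_X(\pi/2)$ on the electron and the controlled-$R_X(\mp\pi/2)$ between electron and nuclear), and the final measurement is in the computational basis, I would prove the statement by propagating stabilisers in the Heisenberg picture rather than expanding the sixteen-dimensional state vector directly. The four-qubit input $\bellstate{a}{b}_{e_Ae_B} \otimes \bellstate{c}{d}_{n_An_B}$ is stabilised by the generators
\[
a\,X_{e_A}X_{e_B},\quad b\,Z_{e_A}Z_{e_B},\quad c\,X_{n_A}X_{n_B},\quad d\,Z_{n_A}Z_{n_B},
\]
so it suffices to conjugate each of these four operators by the circuit unitary $U = U_A \otimes U_B$ and then reduce by the two electron measurements.

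First I would build a small conjugation table for the single-node half-circuit $U_A$ on the six one-qubit Paulis $\{X_e,Y_e,Z_e,X_n,Y_n,Z_n\}$: the electron rotations swap $X_e \leftrightarrow Z_e$ with signs fixed by the rotation angle, and the controlled-$R_X(\mp\pi/2)$ spreads the electron's $X$ onto the nuclear target and the nuclear's $Z$ back onto the electron control in a sign-definite way. Applying this table (twice, once per node) to the four input generators produces four new four-qubit Paulis that still carry the original $\pm$-signs $a,b,c,d$. Next I would pick the two independent products of these four generators that are supported only on the electrons; in the Heisenberg picture these are precisely the operators whose eigenvalues give the joint measurement statistics. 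One combination should come with an overall sign of $b\cdot c$ and enforce the correlation $m_2 = m_1 \cdot b \cdot c$, while the other is a single-electron Pauli whose zero expectation yields the uniform marginal on $m_1$. The remaining two generators, after absorbing the measurement outcomes as classical $\pm 1$ signs, become the stabilisers of the post-measurement nuclear state. By construction they take the form $c'X_{n_A}X_{n_B}$ and $d'Z_{n_A}Z_{n_B}$, and reading off the exponents one finds $(c',d') = (c,-a\cdot c\cdot d)$, giving $\bellstate{c}{-acd}$ as claimed.

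The main obstacle is purely sign-tracking: the half-angle rotations and the two orientations of controlled-$R_X(\mp\pi/2)$ each contribute a sign that must be propagated consistently through both halves of the circuit, and then combined with the measurement outcome signs. To keep the calculation honest I would verify the single-node conjugation table by an explicit matrix check on the six one-qubit Paulis, then apply it mechanically; this localises any error to one small step. A useful sanity check at the end is to set $(a,b,c,d)=(1,1,1,1)$ and recover the standard DEJMPS-type behaviour, where the heralded outcome is uniform and the surviving nuclear pair is again a $\bellstate{1}{-1}$-type Bell state consistent with $-acd = -1$.
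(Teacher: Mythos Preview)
Your approach is sound. The paper itself does not give a proof of this proposition: it simply prefaces the statement with ``By direct computation, one can show the following'' and leaves it at that. Your plan to carry out that direct computation in the Heisenberg picture is valid because, as you observe, every gate in the circuit of Supplementary Figure~\ref{fig:nv-circuits}(c) is Clifford: $R_Y(\pi/2)$ and $R_X(\pi/2)$ are single-qubit Cliffords, and the controlled-$R_X(\mp\pi/2)$ gate equals $\frac{1}{\sqrt{2}}(\unit\otimes\unit + i\,Z_e\otimes X_n)$, a $\pi/2$ rotation about $Z\otimes X$, which conjugates Paulis to Paulis. Your identification of the input stabilisers $a\,X_{e_A}X_{e_B}$, $b\,Z_{e_A}Z_{e_B}$, $c\,X_{n_A}X_{n_B}$, $d\,Z_{n_A}Z_{n_B}$ matches the Bell-state convention $\bellstate{a}{b}$ used in the paper, and the strategy of isolating the electron-supported stabilisers to read off the measurement correlations, then reducing the rest to the nuclear stabilisers, is exactly how stabiliser measurement works. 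The sign-tracking caveat you raise is the only real hazard, and your proposed mitigation (an explicit single-node conjugation table, verified by matrix multiplication on the six local Paulis) localises it well. The sanity check at $(a,b,c,d)=(1,1,1,1)$ is a good idea.

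In short: the paper offers no argument to compare against, and your stabiliser route is a correct and clean way to supply one.
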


We emphasise that using the correction-operator tracking for the \move~and \unmove~operations as described in section~\ref{sec:tracking-protocol-blocks} of this Supplementary Note, the physical nuclear-nuclear state between any two nodes does not satisfy eq.~\eqref{eq:tracking-invariance}.
The reason for this is that the \move~operation maps the electron spin state to the nuclear spin in a rotated basis, where the rotation operator is a Hadamard gate $H$ (eq.~\ref{eq:move-operation}).
However, the correction operators are not updated when the \move~is applied (see `Storing and retrieving qubits' in section~\ref{sec:tracking-protocol-blocks}).
Consequently, if nodes $A$ and $B$ share the physical nuclear-nuclear state $\ket{\psi}$, then mapping $\ket{\psi}$ to the reference Bell state $\bellstate{1}{-1}$ requires first the application of $H\otimes H$, followed by applying $P_A\otimes P_B$.
By `virtual nuclear-nuclear state', we mean the state $\ket{\psi'} = (H\otimes H)\ket{\psi}$, i.e. the state that satisfies eq.~\eqref{eq:tracking-invariance}.
Let us first convert Prop.~\ref{prop:distillation-circuit-output-physical} to a statement with the virtual-virtual nuclear state.

\begin{prop}
	\label{prop:distillation-circuit-output-virtual}
	Suppose nodes $A$ and $B$ share the electron-electron state $\bellstate{a}{b}$ and the virtual nuclear-nuclear state $\bellstate{c}{d}$. Then after the distillation circuit from Supplementary Figure~\ref{fig:nv-circuits}(c), the virtual state on the nuclear spins after performing the distillation equals
	\[
		\bellstate{-a\cdot c \cdot d}{d}
		\]
	and the measurement outcomes are $m_1 \in \{1, -1\}$ (uniformly random) and $m_2 = m_1 \cdot b \cdot d$.
\end{prop}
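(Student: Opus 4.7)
The plan is to reduce Proposition~B.2 to the already established Proposition~B.1 by carefully converting between the physical and virtual nuclear--nuclear representations at both the input and output of the distillation circuit. The distillation circuit itself acts on the physical nuclear spins (the gates in Supplementary Figure~\ref{fig:nv-circuits}(c) are realised on the hardware), so Proposition~B.1 is the relevant underlying statement; Proposition~B.2 is just its re-expression in the virtual frame used by the Pauli-tracking bookkeeping.

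The key technical lemma I would prove first is the action of $H\otimes H$ on the Bell basis, namely
\begin{equation*}
(H\otimes H)\,\bellstate{c}{d} \;\equiv\; \bellstate{d}{c}, \qquad c,d\in\{+1,-1\},
\end{equation*}
where $\equiv$ denotes equality up to a global phase. This is a four-line direct calculation using $H\ket{0} = (\ket{0}+\ket{1})/\sqrt{2}$ and $H\ket{1} = (\ket{0}-\ket{1})/\sqrt{2}$ and expanding each $\bellstate{c}{d}$. Intuitively, $H\otimes H$ swaps the roles of the phase label and the parity label of the Bell state.

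With this lemma in hand, the proof proceeds in three steps. First, translate the input: since the virtual state is by definition $(H\otimes H)$ applied to the physical state and $H^2 = \unit$, a virtual nuclear--nuclear state $\bellstate{c}{d}$ corresponds to a physical nuclear--nuclear state $(H\otimes H)\bellstate{c}{d} \equiv \bellstate{d}{c}$. Second, apply Proposition~B.1 to the electron--electron state $\bellstate{a}{b}$ together with the physical nuclear--nuclear input $\bellstate{d}{c}$, i.e.\ with the parameters $(c',d') = (d,c)$. Proposition~B.1 then yields that, after the circuit, the physical nuclear state is $\bellstate{c'}{-a\,c'\,d'} = \bellstate{d}{-a\,d\,c}$, the outcome $m_1$ is uniformly random in $\{1,-1\}$, and $m_2 = m_1\cdot b\cdot c' = m_1\cdot b\cdot d$. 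Third, translate the output back to the virtual frame by applying $(H\otimes H)$ to the physical output:
\begin{equation*}
(H\otimes H)\,\bellstate{d}{-acd} \;\equiv\; \bellstate{-acd}{d},
\end{equation*}
by the lemma again. This is exactly the claimed virtual output state.

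The measurement statement is unaffected by the frame change because $m_1$ and $m_2$ are classical outcomes obtained from electron--spin readouts and do not depend on how the nuclear state is represented; the only substitution is $c'\mapsto d$, which is precisely what replaces $b\cdot c$ by $b\cdot d$ in the correlation relation. The main subtlety to check, rather than a hard obstacle, is to be explicit that the distillation circuit physically acts on the hardware state and not on the virtual state, so Proposition~B.1 must be applied to $\bellstate{d}{c}$ and not to $\bellstate{c}{d}$; once that is spelled out, the rest is mechanical. A short remark at the end will note that the global phases dropped by $\equiv$ are harmless, because the Pauli-tracking invariance~\eqref{eq:tracking-invariance} is itself stated modulo a complex number of norm one.
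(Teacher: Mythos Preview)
Your proposal is correct and follows essentially the same approach as the paper's own proof: state the lemma $(H\otimes H)\bellstate{x}{y}\equiv\bellstate{y}{x}$, use it to convert the virtual input $\bellstate{c}{d}$ to the physical input $\bellstate{d}{c}$, invoke Proposition~\ref{prop:distillation-circuit-output-physical} with the swapped labels, and then convert the physical output back to the virtual frame. The paper's version is more terse but the logic and the key lemma are identical.
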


\begin{proof}
	The virtual nuclear-nuclear state and the physical one are related by $H\otimes H$.
	It is not hard to see that $H\otimes H \bellstate{x}{y} = \bellstate{y}{x}$ for any $x, y \in \{1, -1\}$.
	Applying this to Prop.~\ref{prop:distillation-circuit-output-physical} results in the measurement outcomes $m_1$ (uniformly random) and $m_2 = m_1 \cdot b \cdot d$ and resulting physical nuclear-nuclear state $\bellstate{d}{-acd}$.
	Obtaining the virtual state is done by applying $H\otimes H$ again, which yields $\bellstate{-acd}{d}$.
\end{proof}

Using Prop.~\ref{prop:distillation-circuit-output-virtual}, it is straightforward to check that the output state of the distillation protocol from section~\ref{sec:tracking-protocol-blocks} satisfies eq.~\eqref{eq:tracking-invariance}.

Suppose $A$ and $B$ share the electron-electron state $\bellstate{a}{b}$ and the virtual nuclear-nuclear state $\bellstate{c}{d}$ for some $a, b, c, d \in\{1, -1\}$, with correction Paulis $P_A^e$ ($P_B^e$) and $P_A^n$ ($P_B^n$) for $A$ ($B$).
In the first step of the protocol, $A$ and $B$ apply $P^n \cdot P^e$ to the electron-electron state, resulting in the electron-electron state
\[
	(P_A^n P_A^e \otimes P_B^n P_B^e) \bellstate{a}{b}
	=
	(P_A^n P_A^e \otimes P_B^n P_B^e) (P_A^e \otimes P_B^e) \bellstate{1}{-1}
	=
	(P_A^n \otimes P_B^n ) \bellstate{1}{-1}
	=
	\bellstate{c}{d}
	\]
where we made use of the fact that each Pauli squares to $\unit$.
In case of successful distillation, the virtual nuclear-nuclear state can be found using Prop.~\ref{prop:distillation-circuit-output-virtual} and equals $\bellstate{-ccd}{d} = \bellstate{-d}{d}$.
What remains is to determine the correction operators conditioned on the value of $d$.
If $d=1$, then the correction operators are $\unit$ for one node and $Y$ for the other (since $\unit \otimes Y \bellstate{-1}{-1}$ equals the target Bell state $\bellstate{1}{-1}$), while for $d=-1$ the resulting state is already the target Bell state and both correction operators should be $\unit$.
Determining the value of $d$ can be done by using the fact that eq.~\eqref{eq:tracking-invariance} was satisfied by the pre-distillation virtual nuclear-nuclear state,
\[
	(P_A^n \otimes P_B^n) \bellstate{c}{d} = \bellstate{1}{-1}
	\]
and thus $\bellstate{c}{d} = P_A^n \otimes P_B^n \bellstate{1}{-1}$.
From checking all possible cases of $P_A^n$ and $P_B^n$ we find that $d = 1$ precisely if one of $P_A^n, P_B^n$ equals $X$ or $Y$, while the other equals $\unit$ or $Z$.

\subsection{Correctness proof of the correction operator update for \swap}
\label{sec:tracking-proof-swap}

Here we show that eq.~\eqref{eq:tracking-invariance} holds for the state between nodes $A$ and $B$ after node $M$ has performed an entanglement swap on Bell states $A-M$ and $M-B$.
Let us denote $A$'s ($B$'s) correction operator as $P_A$ ($P_B$) and $M$'s correction operator as $P_M^A$ ($P_M^B$) for the state it shares with node $A$ ($B$).
That is, in the ideal case, the nodes hold the state
\begin{equation}
	(P_A \otimes P_M^A \otimes P_M^B \otimes P_B) (\bellstate{1}{-1}_{AM} \otimes \bellstate{1}{-1}_{MB})
	.
	\label{eq:pre-swap-state}
\end{equation}
We will make use of the fact that
	\begin{equation}
		\label{eq:pauli-ping-pong}
	(P \otimes Q) \bellstate{a}{b} \equiv (\unit \otimes PQ) \bellstate{a}{b}
	\end{equation}
	for single-qubit Paulis $P, Q$ and $a, b\in \{1, -1\}$, where $\equiv$ as before indicates that the two states differ only by a complex factor of norm 1 (in fact, for eq.\eqref{eq:pauli-ping-pong} we can restrict this to a multiplicative factor $\pm 1$).
	Using eq.~\eqref{eq:pauli-ping-pong}, we rewrite eq.~\eqref{eq:pre-swap-state} to
	\begin{equation}
		(P_M^A P_A \otimes \unit \otimes \unit \otimes P_M^B P_B) (\bellstate{1}{-1}_{AM} \otimes \bellstate{1}{-1}_{MB})
.
		\label{eq:pre-swap-state-rewritten}
	\end{equation}
	Eq.~\eqref{eq:pre-swap-state-rewritten} implies that we may assume that $M$'s two correction operators are both $\unit$.
	Thus $M$ only needs to communicate the correction operator that corresponds to having measured one qubit of each pair of the pair $\bellstate{1}{-1}\otimes\bellstate{1}{-1}$.
	The resulting correction operator $Q$ can be straightforwardly worked out in a similar way as in \cite{bennett1993teleporting} and the result is given in~\ref{table:pauli-correction}.

	The state after the entanglement swap is thus
	\[
		(P_M^A P_A \otimes P_M^B P_B Q) \bellstate{1}{-1}_{AB}
		\]
	which we rewrite using eq.~\eqref{eq:pauli-ping-pong} to
	\[
		(P_A \otimes P_M^A P_M^B P_B Q) \bellstate{1}{-1}_{AB}
		.
		\]
		Indeed, $P_A$ and $P_M^A P_M^B P_B Q$ are (modulo possible factor $-1$) the correction operators of node $A$ and $B$, respectively, after finishing the entanglement swapping protocol described above in section~\ref{sec:tracking-protocol-blocks} of this Supplementary Note.

	What remains is to convert the measurement outcomes from the circuit from Supplementary Figure~\ref{fig:nv-circuits}(d) to the measured Bell state.
For this, a direct computation shows that applying the circuit to the electron-nuclear state \mbox{$(\unit_e \otimes H_n) \bellstate{a}{b}$} (the Hadamard gate $H$ is needed since the nuclear qubit lives in a rotated basis, see section~\ref{app:nv-protocols}) yields the measurement outcomes $m_{\textnormal{earlier}} = -ab$ and $m_{\textnormal{later}} = b$.
Rewriting gives $a = -m_{\textnormal{earlier}}m_{\textnormal{later}}$ and $b=m_{\textnormal{later}}$.

\section{Atomic Ensemble physical modelling}
\label{app:atomic-ensembles}
Here, we provide the details of the simulation comparing different memory technologies for atomic-ensemble quantum repeaters, whose results are presented in section~\nameref{sec:ae_memory_comp} of the Result. For a more detailed discussion see \cite{maier2020}.

\subsection{Generating end-to-end entanglement with Atomic Ensembles}

In our experiment we simulate the protocol proposed by Sinclair et al. \cite{sinclair2014} and analysed in detail by Guha et al. \cite{guha2015rate}.
However, in our simulation we go further than any previous analysis by not only including dark counts and detector efficiency but also multi-photon emissions, photon distinguishability and time-dependent memory efficiency.
For a detailed review of different atomic-ensemble protocols see Sangouard et al. \cite{sangouard2011quantum}.
Let us briefly review the main points necessary to understand our results.

\subsubsection{The protocol}

The simulation setup consists of two elementary links connected by a quantum repeater station. 
Each elementary link contains two photon-pair sources sending one half of the produced state to a midpoint station through optical fibres.
We assume that the speed of all photons and classical communication is $c/n_{ri}$, where $c$ is the speed of light in vacuum and $n_{ri} = 1.44$ is the refractive index of glass \cite{paschotta2020fibers}.
The midpoint station contains a 50:50 beam splitter and two photon detectors to perform a linear-optical Bell state measurement (BSM) \cite{sangouard2011quantum}.
Detection of a single photon per time-bin heralds successful elementary link entanglement generation.
By performing the entanglement attempts on the elementary link simultaneously in multiple modes $M$ (e.g. different temporal, spatial or frequency modes) the probability of successfully generating elementary link entanglement can be greatly increased as $p = 1 - (1 - p_{single-mode})^M$.
This is called multiplexing.

The repeater station also contains two quantum memories, which store the second half of the quantum state emitted by the adjacent sources.

If both elementary links herald a successful BSM at their respective midpoint station, the successful modes are extracted from the memories.
Another BSM is then performed on those modes at the repeater station.

The second halves of the quantum state at the end nodes are measured directly in either the X or Z basis without storing them on a memory first.

The photon-pair source generates a time-bin encoded superposition of photon pairs with a joint quantum state given by \cite{krovi2016}

\begin{equation}\label{eq:ae_source_state}
\ket{\psi} = \sum_{n=0}^\infty\sqrt{p(n)}\ket{\psi_n},
\end{equation}

where 

\begin{align}
\ket{\psi_n} & = \frac{1}{\sqrt{n+1}}\sum_{m=0}^n (-1)^m \ket{n-m,m;n-m,m}. \label{eq:spdc_state}
\end{align}

Here $p(n)$ is the probability to generate $n$ pairs and a state $\ket{n-m,m;n-m,m}$ describes $n-m$ photons travelling to the left (/right) side in the early and $m$ in the late time window.

    Photon loss in the optical fibres is modelled following the beam-splitter-attenuator channel \cite{ivan2011operator}, which is also referred to as generalised amplitude damping \cite{chuang1997}.
    Such a channel performs the map $\rho \mapsto \sum_{k=0}^{\infty} A_k \rho A_k^{\dagger}$ where the Kraus operator $A_k$ corresponds to the case where exactly $k$ photons are lost and is given by
\begin{equation*}
A_k = \sum_{n=k}^{\infty} \sqrt{\binom{n}{k}} \sqrt{(1-\gamma)^{n-k} \gamma^k} \ket{n-k}\bra{n} ,
\end{equation*}
where $\gamma$ is the probability of losing a photon.

For the quantum channel it depends on the fibre attenuation $\alpha$, the channel length $L$ and the coupling loss $\gamma_{chan}(0)$ as $\gamma_{chan}(L) = 1 - (1 - \gamma_{chan}(0)) \times 10^{- \alpha L / 10}$.

The linear-optical BSM is modelled as a set of POVMs which include non-unit Hong-Ou-Mandel dip visibility (photon distinguishability), detection efficiency and dark counts.
We calculate them as a set of perfect POVMs $M_{mn}$ and then derive the set of effective POVMs $M'_{kl} = \sum_{mn} p(mn \rightarrow kl) M_{mn}$, where $p(mn \rightarrow kl)$ is the probability that an imperfect BSM measures $m$(/$n$) photons arriving on a detector as $k$(/$l$) due to dark counts and detection inefficiency.
For a complete derivation of the POVM elements see \cite{maier2020}.

The quantum memory is also modelled as a quantum channel with generalised amplitude damping using the same operators $A_k$.
However, here the probability of losing the photon $\gamma$ depends on the time $t$ that the quantum state spent on the memory, the maximum memory efficiency $\eta$ (at $t=0$) and the memory lifetime $\tau$.
The memory efficiency then decays either exponentially (AFC) or Gaussian (EIT) with $t/\tau$ and $\gamma$ behaves as

\begin{align}
\gamma_{AFC}(t) & = 1 - \eta(0) \times e^{- t / \tau} \\
\gamma_{EIT}(t) & = 1 - \eta(0) \times e^{- \frac{1}{2}(t /
\tau)^2} .
\end{align}

For a more detailed discussion of the protocol and the reasoning behind our model please see our upcoming work \cite{maier2020}.

\subsubsection{The figures of merit}
In order to compare the two memory technologies, we also need to define figures of merit we want to investigate.
To this end we choose the secret key rate (SKR) obtainable for quantum key distribution using the BB84 protocol \cite{bennett2014}, the quantum bit error rate (QBER) and the average number of attempts per successful end node measurement.

For the QBER we compare the end node measurement outcomes with the expected correlations in each basis.
The fraction of wrong bits among the measurement outcomes in each basis is then called the QBER and is determined by performing end node measurements in either X or Z basis 

The secret key rate $R_{SK}$ is then computed as
\begin{equation}\label{eq:skr}
R_{SK} = \max(0 , 1 - H(Q_x) - H(Q_z)) / (2 A_S) ,
\end{equation}
where $H(x) = -x\log(x) - (1-x)\log(1-x)$ is the binary entropy, $Q_x, Q_z$ are the QBERs in the X and Z bases and $A_S$ is the average number of attempts per successful end-to-end entanglement generation.
The factor $1/2$ (in eq.~\ref{eq:skr} and eq.~\ref{eq:skr_error}) accounts for the fact that in the BB84 protocol the two end nodes randomly choose between measuring in X or Z basis and therefore only measure in the same basis in $1/2$ of the successful entanglement generations.

Errors for the number of attempt are calculated from the standard deviation of the mean.
The error on the secret key rate is computed as follow from the error on the number of attempts per success $\Delta A_S$ and the errors $\Delta Q_x$ and  $\Delta Q_z$ on the QBER in X and Z basis respectively:

\begin{equation}\label{eq:skr_error}
\Delta R_{SK} = \frac{1}{2 A_S}\sqrt{\bigg(\frac{\partial H(Q_x)}{\partial Q_x} \Delta Q_x \bigg)^2 + \bigg(\frac{\partial H(Q_z)}{\partial Q_z} \Delta Q_z \bigg)^2 + (R_{SK} \times \Delta A_S)^2}.
\end{equation}

\subsection{Atomic Frequency Comb and Electronically Induced Transparency quantum memories}

In this work, we study two promising types of atomic-ensemble memories, both based upon photon absorption: electronically induced transparency (EIT) quantum memories as an example of an optical control protocol and atomic frequency comb (AFC) memories as an example for an engineered absorption protocol.
For a more detailed comparison of these two types of memories see \cite{maier2020}.
AFC memories are very promising due to their great potential for multiplexing, while EIT memories promise superior efficiency with limited multiplexing.
Here we will give a brief overview of both technologies.

\subsubsection{AFC}
AFC memories require an inhomogeneously broadened material (e.g. rare-earth-doped crystals) with equidistant peaks in occupation density created by spectral hole burning \cite{afzelius2009multimode}.
When a photon is absorbed by the memory, the system will be in a collective delocalised state (Dicke state \cite{dicke1954}) which then rapidly dephases with time.
After a fixed time the Dicke state rephases and the absorbed photon is re-emitted.
The retrieval efficiency decreases exponentially with storage time.
Due to the large linewidth of the inhomogeneously broadened state, AFC memories have great potential for massive multiplexing as the number of modes is not limited by the optical depth of the material.
Currently, the efficiencies that have been experimentally achieved are up to 58\% \cite{sabooni2013} reported for a cavity-enhanced AFC protocol and around 8.5\% \cite{seri2019, jobez2016} for most multiplexed memories.
Memory lifetimes of up to 0.53 s \cite{holzapfel2020} have been reported for a dynamically decoupled AFC protocol (however with an efficiency of just 0.5\%).
In our simulation we set a maximum memory efficiency of 45\% (currently optimistic value for multimode memories, but still below the highest demonstrated single-mode storage \cite{sabooni2013}) with 1000 modes, consisting of 50 spectral and 20 temporal modes (15 x 9 already demonstrated \cite{seri2019}, over 1000 total modes also demonstrated \cite{bonarota2011}), and a lifetime of 1 ms (0.542 ms already achieved for multimode on-demand AFC memory \cite{jobez2016}).

\subsubsection{EIT}
EIT \cite{harris1990} is an effect where an opaque sample of atomic-ensembles becomes transparent due to interaction with electromagnetic fields.
This creates a steep dispersion which in turn affects the group velocity of incoming light.
The effect can be used to slow or completely stop and later re-emit incoming photonic states.
Due to the use of a strong resonant control laser to store the incoming light, these kind of memories are subject to background noise in the form of additional photons introduced by the control beam.
These additional photons need to be compensated by filtering.
Another disadvantage is that multiplexing in EIT is limited to using only the spatial degree of freedom, thus making it more difficult to realise a highly multiplexed protocol.
The great advantage of these memories is the high efficiency of 85\% \cite{cao2020} at a memory lifetime of 15 $\mu$s.
In our simulation we set a maximum efficiency of 90\% (within reach of \cite{cao2020}) with 1000 spatial modes (experimentally 2 spatial modes have been demonstrated \cite{vernaz-gris2018}), and a lifetime of 100 $\mu$s (\cite{cao2020} gives 200$\mu$s as a theoretical limit) .
For a review of EIT see \cite{fleischhauer2005electromagnetically}.

\subsection{Parameters}
For the comparison we used the optimistic elementary link parameters taken from \cite{guha2015rate}(Fig 10 (g)).
We then added optimistic values for the the memory parameters (see above) and the additional noise parameters we use, such as the photon visibility.
In line with the analysis of \cite{guha2015rate}, we model the source as an almost perfect single-pair source with a small probability $p(2)$ of emitting two pairs.

We first list the common parameters of both simulations before we compare the parameters for the memory components in Supplementary Table \ref{tab:mem_params}.
It is worth pointing out that using the same number of modes for both technologies slightly biases this comparison as the high multiplexing potential is the main advantage of AFC memories.
\begin{itemize}
	\item Source total number of modes (spectral x temporal x spatial) = 1000
	\item Source emission probabilities: $p(0) = 1 - p(1)-p(2)$, $p(1) = 0.9$, $p(2)=0.013$ (see eq.\ref{eq:ae_source_state})
	\item Fibre coupling loss probability $\gamma_{chan}(0) = 0$
	\item Fibre attenuation $\alpha = 0.15$dB/km 
	\item Visibility = 0.9  
	\item Detector dark count probability = $10^{-6}$
	\item Detector detection efficiency = 90\%
	\item Detectors number-resolving: No
\end{itemize}

\begin{table}[h]
	\begin{tabular}{l|l|l}
		& \textbf{AFC} & \textbf{EIT} \\ \hline
		spectral modes           & 50           & 1            \\ \hline
		temporal modes           & 20           & 1          \\ \hline
		spatial modes            & 1            & 1000           \\ \hline
		maximum efficiency ($t=0$) & 45\%         & 90\%         \\ \hline
		memory lifetime          & 1 ms          & 0.1 ms        \\ \hline
		time dependence          & exponential  & Gaussian    
	\end{tabular}
\caption{
    \textbf{Parameters for the two different atomic-ensemble memory technologies.}
}
\label{tab:mem_params}
\end{table}

% figure captions can be found at the start of this latex file

\end{document}